\newtheorem{property}{Property}
\newtheorem{thm}{Theorem}
\newtheorem{Lemma}{Lemma}
\newtheorem{lemma}[Lemma]{Lemma}
\newtheorem{definition}{Definition}
\newtheorem{notation}{Notation}
\newtheorem{proposition}{Proposition}
\newtheorem{fact}{Fact}
\newtheorem{remark}{Remark}
\theoremstyle{remark}
\theoremstyle{definition}
\def\BState{\State\hskip-\ALG@thistlm}
\tikzstyle{mydashed}=[dash pattern=on 1.5pt off 2pt]
\begin{document}
\title{Asymptotic growth rate of square 
grids dominating sets: a symbolic dynamics approach}

\author{Silvère Gangloff \and Alexandre Talon}
\date{}
\maketitle
\begin{center}
ENSL, Univ Lyon, UCBL, LIP, MC2, F-69342, LYON Cedex 07, France\\
\texttt{silvere.gangloff@ens-lyon.fr},   \texttt{alexandre.talon@ens-lyon.fr}
\end{center}
\begin{abstract}
In this text, we prove 
the existence of an asymptotic growth rate 
of the number of dominating sets (and variants)
on finite rectangular grids, when the dimensions of the 
grid grow to infinity.
Moreover, we provide, for each of the variants, an algorithm
which computes the growth rate.
We also give bounds on these rates provided by a computer program.
\end{abstract}

\section{Introduction}

A \textit{dominating set} $S$ of a graph $G$ is 
a subset of its vertices such that 
any vertex not in $S$ is connected to 
a vertex in $S$.
The dominating number $\gamma(G)$ 
of the graph 
is the minimum cardinality of a 
dominating set.
These notions appear in practical 
problems, related to robotics and 
networks constructions.
Some decidability results are known: 
for instance, the problem $\gamma(G) \le k$ 
given the integer $k$ and the finite graph $G$ is NP-complete.
An important problem has been to compute exactly the dominating number for finite 
rectangular grids, and it was solved by D. Gon\c calves, 
A. Pinlou, M. Rao, S. Thomass\'e~\cite{rao}, proving 
Chang's conjecture, which tells 
that denoting $G_{n,m}$ the finite $n \times m$ rectangular grid, 
\[\gamma(G_{n,m}) = \Bigl \lfloor \frac{(n+2)(m+2)}{5}\Bigr \rfloor -4.\]
Another problem, which is still open, 
is to compute the number of dominating set 
of graphs. Some formulas are known, such as a relation between this number and the number of complete bipartite subgraphs of the complement of $G$~\cite{Heinrich}. \bigskip

In the present text, we are interested in the asymptotic growth rate of the number of dominating sets, on the finite rectangular grids $G_{n,m}$, when $n$ and $m$ grow to infinity. We also study this problem for the total domination, the minimal domination, and the minimal total domination. The text is organised as follows:

\begin{enumerate}

\item In Section~\ref{notions.dominating.sets}, we define the various notions 
of dominating sets on (finite or infinite) graphs we have just mentioned, 
and prove local characterisations of 
these sets. 

\item In Section~\ref{section.subshifts.finite.type} we associate, to each of these notions of dominating sets, a symbolic dynamical system called subshift of finite type, which consists in a set of colourings of the infinite grid $\mathbb{Z}^2$. Comparing the number of dominating sets on finite grids and the number of patterns which appear in configurations of the corresponding subshift, we prove the existence of a growth rate and show that it is equal to the entropy of the dynamical system.
\item In Section~\ref{section.block.gluing.property}, we define the block-gluing property; any subshift of finite type that is block-gluing is guaranteed to have an entropy which is computable in an algorithmical sense. We then prove that the various domination subshifts defined in Section~\ref{section.subshifts.finite.type} are block gluing. 
This fact provides an algorithm which computes approximations of the growth rate given the desired precision in the input.
\item In Section~\ref{section.computer}, we provide some bounds for the growth rates obtained by a computer program.
\end{enumerate}

\section{\label{notions.dominating.sets} Notions of dominating sets of square grids}

\subsection{\label{section.definitions.dominating}Definitions}

In the following, for a graph $G=(V,E)$, we will say that two vertices $u,v$ in $V$ are \textit{neighbours} or \textit{connected} when the edge $\{u,v\}$ is in $E$. For all $n \ge 1$, we will denote by $G_{n,m}$ the finite square grid graph of size $n \times m$.

\begin{definition}
Let $G=(V,E)$ be a graph, $S$ a subset of $V$ and $v \in V$. We say that $v$ is \textbf{dominated} by $S$ if $v$ has a neighbour in $S$.
\end{definition}

\begin{definition}
\label{definition.dominating.sets}
Let $G=(V,E)$ be a graph. A subset $S \subseteq V$ is said to be a \textbf{dominating set} of the graph $G$ when every $v \in V \setminus S$ is dominated by $S$. It is said to be a \textbf{minimal dominating set} of $G$ when it is a dominating set of $G$ and for all $v \in S$, $S \setminus \{v\}$ is not dominating. It is said \textbf{total dominating} when for all $v$ in $V$, $v$ has a neighbour in $S$.
\end{definition}

\begin{definition}
\label{definition.derived.domination.notions}
A subset $S \subset V$ is said to be \textbf{minimal total dominating} when it is total dominating and for all $v \in S$, $S \setminus \{v\}$ is not total dominating.
\end{definition}

Notice that a minimal dominating set (resp. minimal total dominating set) is a dominating set (resp. total dominating set) which is inclusion-wise minimal. These notions are illustrated in 
Figure~\ref{figure.domination.notions}.

\begin{figure}[h!]
\centering
\begin{tikzpicture}[scale=0.5]
\begin{scope}

\fill[gray!85] (0,0) rectangle (2,2);
\fill[gray!85] (1,3) rectangle (2,4);
\fill[gray!85] (2,2) rectangle (3,3);
\fill[gray!85] (3,3) rectangle (4,4);
\fill[gray!85] (3,1) rectangle (4,2);
\draw (0,0) grid (4,4);
\node at (1.5,-1) {(a)};
\node at (-0.8,0) {\scriptsize (1,1)};
\end{scope}

\begin{scope}[xshift=6cm]

\fill[gray!85] (1,3) rectangle (2,4);
\fill[gray!85] (0,1) rectangle (1,2);
\fill[gray!85] (1,0) rectangle (2,1);
\fill[gray!85] (3,1) rectangle (4,3);

\draw (0,0) grid (4,4);
\node at (1.5,-1) {(b)};
\end{scope}

\begin{scope}[xshift=12cm]

\fill[gray!85] (0,1) rectangle (2,2);
\fill[gray!85] (1,0) rectangle (2,1);
\fill[gray!85] (1,3) rectangle (3,4);
\fill[gray!85] (3,1) rectangle (4,3);

\draw (0,0) grid (4,4);

\node at (1.5,-1) {(c)};
\end{scope}

\end{tikzpicture}
\caption{\label{figure.domination.notions} Illustration of Definition~\ref{definition.dominating.sets} and 
Definition~\ref{definition.derived.domination.notions} on $G_{4,4}$:\\
(a) a dominating set which is neither minimal dominating nor total dominating;\\
(b) a minimal dominating set which is not total dominating (the bottom-left dominant vertices are not dominated);\\
(c) a minimal total dominating set.}
\end{figure}

\begin{definition}
\label{definition.local.domination.notions}
When a dominating set $S$ of a graph $G$ is fixed, a vertex is called a \textbf{dominant} element of $G$ when it is in $S$, and a \textbf{dominated} element when it has a neighbour in $S$. A neighbour $w$ of a dominant element $v$ is said to be a \textbf{private neighbour} of $v$ when $v$ is the only neighbour of $w$ in the set $S$.
\end{definition}

\subsection{\label{section.local.characterisations} Local characterisations}

In this section, we recall, and for completeness 
prove, local characterisations 
of the notions of dominating sets. This means that 
one can check if a set $S$ is dominating (or minimal dominating, etc) by checking, for 
each vertex, whether or not this vertex and its neighbours 
are in $S$

\begin{fact}
\label{fact.dominating.sets}
Let $S$ be a set of vertices of a graph $G=(V,E)$. 
Then for all $v$ and $w$ such that $w$ is not a neighbour of $v$, $w$ is dominated by $S$ if and only if it is dominated by $S \setminus \{v\}$.
\end{fact}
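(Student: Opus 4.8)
The plan is to prove the two implications separately, unwinding the definition of domination: a vertex $w$ is dominated by a set $T$ precisely when $w$ has a neighbour belonging to $T$. So everything reduces to comparing the neighbours of $w$ that lie in $S$ with those that lie in $S \setminus \{v\}$.

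First I would dispatch the backward implication, which in fact requires nothing about the relative position of $v$ and $w$. Since $S \setminus \{v\} \subseteq S$, any neighbour $u$ of $w$ that lies in $S \setminus \{v\}$ also lies in $S$; hence if $w$ is dominated by $S \setminus \{v\}$, the very same witness $u$ shows that $w$ is dominated by $S$. This direction is purely monotone.

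Next I would treat the forward implication, and this is the one and only place where the hypothesis that $w$ is not a neighbour of $v$ is used. Assume $w$ is dominated by $S$, so there is some neighbour $u \in S$ of $w$. Because $u$ is a neighbour of $w$ whereas $v$ is not, we must have $u \neq v$; therefore $u \in S \setminus \{v\}$, and this same vertex witnesses that $w$ is dominated by $S \setminus \{v\}$.

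I do not expect any genuine obstacle: the whole content is to isolate the single step where non-adjacency of $v$ and $w$ rules out $u = v$. The only thing to keep track of is the asymmetry between the two directions, the backward one being automatic from the inclusion $S \setminus \{v\} \subseteq S$ and the forward one genuinely depending on the hypothesis.
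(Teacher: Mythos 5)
Your proof is correct: the backward direction follows from $S \setminus \{v\} \subseteq S$, and the forward direction correctly isolates the only point where non-adjacency of $v$ and $w$ is needed, namely to conclude that the witnessing neighbour $u$ of $w$ cannot equal $v$. The paper states this as a Fact without proof, and your argument is exactly the evident one it implicitly relies on.
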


\begin{definition}
Let $S$ be a set of vertices of a graph $G=(V,E)$. 
We say that a dominant element is \textbf{isolated} in $S$ when it has no neighbours 
in $S$.
\end{definition}

\begin{proposition}
Let $S$ be a dominating set of a graph $G = (V,E)$. $S$ is minimal dominating if and only if any of its elements is isolated in $S$ or has a private neighbour not in $S$. 
\end{proposition}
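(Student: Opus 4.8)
The plan is to reduce the global minimality condition to a pointwise statement: for each fixed $v \in S$, I will characterise exactly when $S \setminus \{v\}$ fails to be dominating, and show that this happens precisely when $v$ is isolated in $S$ or has a private neighbour outside $S$. Taking the conjunction of these conditions over all $v \in S$ then yields the proposition, since by Definition~\ref{definition.dominating.sets} the dominating set $S$ is minimal dominating if and only if $S \setminus \{v\}$ is not dominating for every $v \in S$.

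So fix $v \in S$. The set of vertices that must be dominated by $S \setminus \{v\}$ in order for it to be a dominating set is $V \setminus (S \setminus \{v\}) = (V \setminus S) \cup \{v\}$. By Fact~\ref{fact.dominating.sets}, removing $v$ can only change the domination status of $v$ itself and of its neighbours; every other vertex of $V \setminus S$ was dominated by $S$ and remains dominated by $S \setminus \{v\}$. Hence $S \setminus \{v\}$ fails to be dominating exactly when either $v$, or some neighbour $w \in V \setminus S$ of $v$, loses its domination.

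Next I would treat the two resulting cases. For the vertex $v$ itself: since $v$ is not its own neighbour, its neighbours in $S \setminus \{v\}$ coincide with its neighbours in $S$, so $v$ is undominated by $S \setminus \{v\}$ if and only if $v$ has no neighbour in $S$, that is, if and only if $v$ is isolated in $S$. For a neighbour $w \in V \setminus S$ of $v$: it was dominated by $S$, and it loses its domination upon removing $v$ precisely when $v$ was its unique neighbour in $S$, which is exactly the condition that $w$ is a private neighbour of $v$ not in $S$. Combining the two cases, $S \setminus \{v\}$ is not dominating if and only if $v$ is isolated in $S$ or has a private neighbour not in $S$.

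The argument is essentially bookkeeping; the only point requiring genuine care is the status of $v$ itself, which switches from belonging to $S$ (and hence being exempt from the domination requirement) to belonging to the complement of $S \setminus \{v\}$ (and hence needing to be dominated). Keeping track of this switch, together with the restriction from Fact~\ref{fact.dominating.sets} that only $v$ and its neighbours are affected, is the crux of the verification; once it is handled, quantifying over all $v \in S$ immediately delivers both directions of the equivalence.
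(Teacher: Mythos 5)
Your proof is correct and follows essentially the same route as the paper's: both use Fact~\ref{fact.dominating.sets} to localise the effect of removing $v$ to $v$ itself and its neighbours, then split into the case where $v$ loses its domination (isolation) and the case where a non-$S$ neighbour does (private neighbour). The only difference is organisational — you establish a single biconditional for each fixed $v$ and take the conjunction, whereas the paper argues the two implications separately — but the mathematical content is identical.
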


\vspace{5cm}
\begin{proof} \leavevmode
\begin{itemize}
	\item $(\Rightarrow)$: Let us assume that $S$ is minimal dominating, and fix $v \in S$. 	From Fact~\ref{fact.dominating.sets} and by definition of a minimal dominating set, any $w$ which is not in the neighbourhood of $v$ is dominated by $S \backslash \{v\}$. Since $S \setminus \{v\}$ is not dominating, it means that: 
	\begin{enumerate}
		\item $v$ is not dominated by $S \setminus \{v\}$, which means that $v$ is isolated in $S$,
		\item or there exists some $u \notin S$ neighbour of $v$ which is not dominated by $S \setminus \{v\}$, hence $u$ is a private neighbour of $v$ which is not in $S$.
		
	\end{enumerate}
	\item $(\Leftarrow)$: Conversely, let us fix some dominating set $S$ such that every $v \in S$ has a private neighbour not in $S$ or is isolated. Fix some $v \in S$. If it has a private neighbour $u$, then $u$ is not dominated by $S \setminus \{v\}$, and thus $S \setminus \{v\}$ is not dominating. If it has no private neighbours, then it is isolated. This means that $v$ is not dominated by $S \setminus \{v\}$, therefore the set is not dominating. In both cases, we conclude that $S$ is minimal dominating.
	\end{itemize}
\end{proof}

With a similar proof, we obtain the following:

\begin{proposition}
A total dominating set $S$ of a graph $G$ is minimal total dominating if and 
only if any $v \in S$ has a private neighbour.
\end{proposition}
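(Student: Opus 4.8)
The plan is to mirror the proof of the previous proposition, again using Fact~\ref{fact.dominating.sets} to localise the effect of deleting a single vertex $v$ from $S$, but with one structural simplification: the ``isolated'' alternative present in the minimal-dominating case will disappear here, because total domination forces \emph{every} vertex, including those of $S$, to have a neighbour in $S$.

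For the forward direction, I would fix $v \in S$ and invoke minimality to get that $S \setminus \{v\}$ is not total dominating, so some vertex $w$ fails to be dominated by $S \setminus \{v\}$. By Fact~\ref{fact.dominating.sets}, any $w$ that is not a neighbour of $v$ is dominated by $S \setminus \{v\}$ exactly when it is dominated by $S$, and $S$ dominates everything; hence the failing $w$ must be a neighbour of $v$. The key observation, and the step I expect to need the most care, is that $w \ne v$: indeed $v$ itself is dominated by $S$ through some neighbour $u \ne v$, and $u$ survives in $S \setminus \{v\}$, so $v$ remains dominated. Thus $w$ is a genuine neighbour of $v$ whose only $S$-neighbour is $v$, that is, a private neighbour of $v$. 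This is precisely where the argument diverges from the minimal-dominating case, in which the deleted vertex could itself become undominated and produce the isolated alternative.

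For the converse, I would fix an arbitrary $v \in S$, take a private neighbour $w$ of $v$, and observe that removing $v$ strips $w$ of its unique $S$-neighbour, so $w$ is no longer dominated and $S \setminus \{v\}$ is not total dominating; since $v$ was arbitrary, $S$ is minimal total dominating. The only genuinely new point relative to the previous proposition is ruling out the isolated case, so I would state explicitly why a vertex of $S$ cannot become undominated through its own deletion.
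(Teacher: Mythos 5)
Your proof is correct and follows exactly the route the paper intends: the paper omits the argument, stating only that it is obtained ``with a similar proof'' to the minimal-domination case, and your write-up is precisely that adaptation, including the right observation that the isolated alternative vanishes because total domination guarantees $v$ itself keeps an $S$-neighbour after its own deletion.
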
 

In the following, we will use the following notations:

\begin{notation}
In the following, for all integers $n,m$, we denote by $D_{n,m}$, $M_{n,m}$ and $T_{n,m}$ respectively the number of dominating sets of the grid $G_{n,m}$, the number of its minimal dominating sets and the number of its minimal total dominating sets.
\end{notation}

\section{\label{section.subshifts.finite.type} From dominating sets to subshifts of finite type}

In this section, we introduce the notion of subshift of finite type on a regular grid (see Section~\ref{section.sft.definition}), which consists in sets of possible colourings of the grid avoiding some forbidden patterns. After presenting some examples which are the subshifts counterparts of various notions 
of domination in Section~\ref{section.examples}, we use the 
well-known fact that the entropy of a subshift can be 
expressed as a limit to prove the existence of an asymptotic
growth rate of the number of dominating sets in Section~\ref{section.proof.asymptotic}.

\subsection{\label{section.sft.definition} Subshifts of finite type}

\begin{definition}
Let $\mathcal{A}$ be a finite set, and $d \ge 1$ integer.
A \textbf{pattern} $p$ on alphabet $\mathcal{A}$ is 
an element of $\mathcal{A}^{\mathbb{U}}$ for some finite
$\mathbb{U} \subset \mathbb{Z}^d$. The set $\mathbb{U}$ 
is called the \textbf{support} of $p$, and is denoted $\text{supp}(p)$. Informally, 
it is the location of $p$ in the grid $\mathbb{Z}^d$.
\end{definition}

\begin{notation}
For a configuration $x = (x_\textbf{u})_{\textbf{u} \in \mathbb{Z}^d}$ of $\mathcal{A}^{\mathbb{Z}^d}$ 
(resp. a pattern $p \in \mathcal{A}^{\mathbb{U}}$ for some 
$\mathbb{U} \subset \mathbb{Z}^d$), we denote by 
$x_{|\mathbb{V}}$ the restriction of $x$ 
to some subset $\mathbb{V} \subset \mathbb{Z}^d$ 
(resp. the restriction of $p$ to $\mathbb{V} \subset \mathbb{U}$).
\end{notation}

\begin{definition}
Let $\mathcal{A}$ be a finite set, and $d \ge 1$ integer.
A \textbf{$d$-dimensional subshift of finite type} (SFT) on alphabet $\mathcal{A}$ 
is a subset of $\mathcal{A}^{\mathbb{Z}^d}$
defined by a finite set of forbidden patterns. Formally, a subset 
$X$ of $\mathcal{A}^{\mathbb{Z}^d}$ is a subshift 
of finite type when there exist some finite 
sets $\mathbb{U} \subset \mathbb{Z}^d$ and 
$\mathcal{F} \subset \mathcal{A}^{\mathbb{U}}$ such that: 
\[X= \left\{x \in \mathcal{A}^{\mathbb{Z}^d}: 
\forall\, \textbf{u} \in \mathbb{Z}^d, x_{|\textbf{u}+\mathbb{U}} \notin \mathcal{F}\right\}.\]
The elements of $\mathcal{F}$ are called the \textbf{forbidden patterns}.
When the set of forbidden patterns is fixed, 
a pattern which does not contain any forbidden 
pattern is called \textbf{locally admissible}.
\end{definition}

\begin{notation}
Let us denote by $\sigma$ the $\mathbb{Z}^d$-\textbf{shift} action on $\mathcal{A}^{\mathbb{Z}^d}$ defined such that for all $\textbf{u},\textbf{v} \in \mathbb{Z}^d$, 
\[(\sigma^{\textbf{u}} x)_{\textbf{v}} = x_{\textbf{v}+\textbf{u}}\texttt{.}\]
Informally, $\sigma$ acts on a configuration by translating it by the vector $\textbf{u}$.
\end{notation}

\begin{definition}
For a SFT $X$ a \textbf{globally admissible} pattern of size $\llbracket 1, n \rrbracket ^d$ is some $p \in \mathcal{A}^{\llbracket 1,n\rrbracket ^d}$ which appears in a configuration of $X$, that is when $x_{|\llbracket 1,n\rrbracket ^d}=p$.
When $d = 2$, we extend the definition to patterns $p \in \mathcal{A}^{\llbracket 1,n\rrbracket \times \llbracket 1, m \rrbracket}$ when there exists a configuration $x$ of $X$ such that $x_{|\llbracket 1,n\rrbracket \times \llbracket 1, m \rrbracket}=p$.
\end{definition}

\begin{notation}
For a subshift of finite type $X$, we denote by
$N_n (X)$
the number of globally admissible patterns of size $\llbracket 1,n\rrbracket ^d$. When $d=2$, we extend the notation and denote by resp. $N_{n,m}(X)$ the number of globally admissible patterns of size $\llbracket 1,n \rrbracket \times \llbracket 1,m \rrbracket$.
\end{notation}

\begin{definition}
The \textbf{topological entropy} of a subshift 
of finite type is the number
\[h(X) = \inf_n \frac{\log_2 (N_n (X))}{n^d}.\]
\end{definition}

The following three lemmas are well known (see for instance~\cite{Lind-Marcus}).

\begin{lemma}
The infimum in the definition of $h(X)$ is in fact a limit: 
\[h(X) = \lim_n \frac{\log_2 (N_n (X))}{n^d}.\]
\end{lemma}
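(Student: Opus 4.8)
The plan is to reduce the statement to the $d$-dimensional analogue of Fekete's subadditive lemma. Writing $a(n) := \log_2 N_n(X)$, the quantity $h(X) = \inf_n a(n)/n^d$ is an infimum, so one automatically has $h(X) \le \liminf_n a(n)/n^d$. Hence the entire content of the lemma is the reverse inequality, and I would show the equivalent statement that $\limsup_n a(n)/n^d \le a(k)/k^d$ for every fixed $k$; taking the infimum over $k$ then forces $\limsup_n a(n)/n^d \le h(X)$ and pins the limit at $h(X)$.

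The two structural facts I would rely on are elementary: (i) the restriction of a globally admissible pattern to any subset of its support is again globally admissible, since it still appears in some configuration of $X$; and (ii) a pattern is completely determined by its restrictions to the parts of any partition of its support. Neither requires the finite-type hypothesis, only the definition of global admissibility.

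Now fix $k \ge 1$ and a large $n$, and write $n = qk + r$ with $0 \le r < k$. I would partition the cube $\llbracket 1, n \rrbracket^d$ into $q^d$ disjoint translates of $\llbracket 1, k \rrbracket^d$ tiling the sub-cube $\llbracket 1, qk \rrbracket^d$, together with a border region $B$ of $n^d - (qk)^d$ cells. By fact (i) each small cube carries a globally admissible $k$-pattern, leaving at most $N_k(X)$ possibilities, and by fact (ii) a globally admissible pattern on $\llbracket 1, n \rrbracket^d$ is determined by its $q^d$ small-cube restrictions together with its restriction to $B$. This yields the submultiplicative estimate
\[N_n(X) \le N_k(X)^{q^d} \cdot |\mathcal{A}|^{\,n^d - (qk)^d},\]
and, taking logarithms and dividing by $n^d$,
\[\frac{a(n)}{n^d} \le \frac{q^d}{n^d}\,a(k) + \frac{n^d - (qk)^d}{n^d}\,\log_2|\mathcal{A}|.\]

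Finally I would let $n \to \infty$ with $k$ fixed. Since $qk \le n < (q+1)k$, we have $qk/n \to 1$, whence $q^d/n^d \to 1/k^d$ and $(n^d - (qk)^d)/n^d \to 0$, so the displayed bound gives $\limsup_n a(n)/n^d \le a(k)/k^d$, as required. The main obstacle is entirely the $d$-dimensional bookkeeping: unlike the one-dimensional case, where concatenation gives exact subadditivity $a(n+m) \le a(n) + a(m)$ and Fekete applies verbatim, here cubes do not split additively. I must therefore tile with a full grid of small cubes and absorb the leftover border crudely into the factor $|\mathcal{A}|^{|B|}$, then verify that this border contribution is asymptotically negligible relative to $n^d$ — which is exactly what the limits $qk/n \to 1$ above guarantee.
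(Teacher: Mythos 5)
Your argument is correct. The paper itself does not prove this lemma --- it cites it as well known (Lind--Marcus) --- and what you have written is exactly the standard multidimensional subadditivity argument that such a reference supplies: tile $\llbracket 1,n\rrbracket^d$ by $q^d$ translates of $\llbracket 1,k\rrbracket^d$, bound the border cells crudely by $|\mathcal{A}|$ per cell, and let $n\to\infty$ with $k$ fixed. The only point worth making explicit is that counting the restriction to a \emph{translate} of $\llbracket 1,k\rrbracket^d$ by $N_k(X)$ uses the shift-invariance of $X$; with that remark added, the proof is complete.
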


\begin{definition}
A \textbf{conjugation} between two $d$-dimensional subshifts of finite type $X$ and $Z$ is an invertible 
map $\varphi: X \rightarrow Z$ such that 
for all $\textbf{u} \in \mathbb{Z}^d$ and $x \in X$,
$\varphi(\sigma^{\textbf{u}}.x) = \sigma^{\textbf{u}}.\varphi (x)$.
In this case, $X$ and $Z$ are said to be \textbf{conjugated}.
\end{definition}

\begin{lemma}
If two subshifts of finite type $X$ and $Z$ 
are conjugated, then $h(X)=h(Z)$. 
\end{lemma}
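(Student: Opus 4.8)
The plan is to exploit the fact that the entropy is, by its very formula $h(X)=\lim_n \log_2(N_n(X))/n^d$, controlled entirely by the pattern-counting functions $N_n$, and that a conjugation alters these counts only by a boundary effect that becomes negligible in the limit. The essential structural input is that a conjugation is a \emph{sliding block code}: by the Curtis--Hedlund--Lyndon theorem, a conjugation $\varphi \colon X \to Z$ (a shift-commuting homeomorphism, as is implicit in the standard notion from~\cite{Lind-Marcus}) is given by a local rule, i.e.\ there exist an integer $r \ge 0$ and a map $\Phi \colon \mathcal{A}^{\llbracket -r,r\rrbracket^d} \to \mathcal{B}$, where $\mathcal{B}$ is the alphabet of $Z$, such that $\varphi(x)_{\mathbf{u}} = \Phi(x_{|\mathbf{u}+\llbracket -r,r\rrbracket^d})$ for all $x \in X$ and $\mathbf{u} \in \mathbb{Z}^d$. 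Since $\varphi^{-1}$ is also of this form, I would first fix a single radius $r$ that serves as a block radius for both $\varphi$ and $\varphi^{-1}$.

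Next I would set up the counting inequality. Fix $n > 2r$. The local rule shows that once $x_{|\llbracket 1,n\rrbracket^d}$ is known, the value $\varphi(x)_{\mathbf{u}}$ is determined for every $\mathbf{u}$ in the shrunken box $\llbracket 1+r, n-r\rrbracket^d$, because the window $\mathbf{u} + \llbracket -r, r\rrbracket^d$ stays inside $\llbracket 1,n\rrbracket^d$. Hence $\varphi$ induces a well-defined map $\Phi_*$ from the globally admissible $\llbracket 1,n\rrbracket^d$-patterns of $X$ to the globally admissible patterns of $Z$ of side $n-2r$. I would then check that $\Phi_*$ is \emph{surjective}: given a globally admissible pattern $q$ of $Z$ of side $n-2r$, choose $z \in Z$ realising $q$ on $\llbracket 1+r, n-r\rrbracket^d$, set $x = \varphi^{-1}(z) \in X$ (here surjectivity of the bijection $\varphi$ is used), and observe that $p = x_{|\llbracket 1,n\rrbracket^d}$ is globally admissible with $\Phi_*(p) = q$. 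Surjectivity yields $N_{n-2r}(Z) \le N_n(X)$.

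Finally I would pass to the limit. Taking logarithms, dividing by $(n-2r)^d$, and invoking the previous lemma that the defining infimum is a limit,
\[
h(Z) = \lim_{n} \frac{\log_2 N_{n-2r}(Z)}{(n-2r)^d} \le \lim_{n} \frac{\log_2 N_n(X)}{n^d}\cdot \frac{n^d}{(n-2r)^d} = h(X),
\]
since $n^d/(n-2r)^d \to 1$. Exchanging the roles of $X$ and $Z$ and running the identical argument with $\varphi^{-1}$ gives $h(X) \le h(Z)$, whence equality. The only genuine subtlety --- and the step I would treat most carefully --- is the reduction to a sliding block code: it is where continuity (via compactness of the product topology) enters through Curtis--Hedlund--Lyndon, and it is what guarantees the uniform radius $r$ that makes the boundary loss a lower-order term. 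Everything else is bookkeeping on box sizes, where one need only ensure that replacing side $n$ by side $n-2r$ leaves the limit undisturbed.
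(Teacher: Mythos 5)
Your argument is correct and is exactly the standard sliding-block-code proof that the paper itself does not spell out but defers to Lind--Marcus, so there is nothing in the text to diverge from: the conjugation and its inverse are given by local rules of some common radius $r$, inducing a surjection from globally admissible $n$-patterns of $X$ onto globally admissible $(n-2r)$-patterns of $Z$, and the boundary loss vanishes in the entropy limit. The only point worth making explicit is that the paper's definition of conjugation omits continuity, which your appeal to Curtis--Hedlund--Lyndon genuinely requires; you correctly flag that continuity is implicit in the standard notion.
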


\begin{lemma}
Let $X$ be a bidimensional subshift of finite type. Then:
\[h(X) = \lim_{n,m} \frac{\log_2 (N_{n,m} (X))}{nm}.\]
\end{lemma}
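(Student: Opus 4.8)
The plan is to reduce the statement to a submultiplicativity property of the pattern counts and then to transport the resulting rectangular limit back to the square case via the already-established identity $h(X)=\lim_n \log_2(N_n(X))/n^2$ (the first lemma above, applied with $d=2$ and $N_n=N_{n,n}$). Writing $a_{n,m}:=\log_2(N_{n,m}(X))$, the first step is to check that $a$ is subadditive in each coordinate:
\[a_{n+n',m}\le a_{n,m}+a_{n',m}\qquad\text{and}\qquad a_{n,m+m'}\le a_{n,m}+a_{n,m'}.\]
For the first inequality, any globally admissible pattern $p$ on $\llbracket 1,n+n'\rrbracket\times\llbracket 1,m\rrbracket$ restricts to globally admissible patterns on its left block $\llbracket 1,n\rrbracket\times\llbracket 1,m\rrbracket$ and its right block $\llbracket n+1,n+n'\rrbracket\times\llbracket 1,m\rrbracket$, since a restriction of a pattern occurring in a configuration $x\in X$ occurs in the same $x$. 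The map sending $p$ to the pair of these restrictions is injective (the two supports partition that of $p$), and, because $X$ is shift-invariant, the number of globally admissible patterns on the right block equals $N_{n',m}(X)$. This yields $N_{n+n',m}(X)\le N_{n,m}(X)\,N_{n',m}(X)$, i.e. the claimed subadditivity; the vertical direction is identical.

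The second step is a two-dimensional Fekete lemma: any doubly subadditive array satisfies $\lim_{n,m}\frac{a_{n,m}}{nm}=\inf_{n,m}\frac{a_{n,m}}{nm}=:L$. I would fix $\varepsilon>0$ and a pair $(p,q)$ with $\frac{a_{p,q}}{pq}<L+\varepsilon$, perform Euclidean divisions $n=kp+r$ and $m=lq+s$ with $0\le r<p$, $0\le s<q$ (using the convention $a_{\cdot,0}=a_{0,\cdot}=0$, valid since an empty support carries a single pattern), and apply the two subadditivity inequalities repeatedly to obtain
\[a_{n,m}\le kl\,a_{p,q}+k\,a_{p,s}+l\,a_{r,q}+a_{r,s}.\]
Dividing by $nm=(kp+r)(lq+s)$ and letting $n,m\to\infty$, the leading term tends to $\frac{a_{p,q}}{pq}<L+\varepsilon$, so $\limsup_{n,m}\frac{a_{n,m}}{nm}\le L$; since $\frac{a_{n,m}}{nm}\ge L$ for all $n,m$, the liminf is at least $L$, and the limit exists and equals $L$.

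The final step is to identify $L$ with $h(X)$: the double limit being equal to $L$, its restriction to the diagonal $n=m$ converges to $L$ as well, while that same diagonal sequence $\frac{a_{n,n}}{n^2}=\frac{\log_2(N_n(X))}{n^2}$ converges to $h(X)$ by the first lemma; uniqueness of the limit gives $h(X)=L=\lim_{n,m}\frac{\log_2(N_{n,m}(X))}{nm}$. I expect the main obstacle to be the bookkeeping of the three remainder terms in the Fekete estimate, namely verifying that the boundary contributions indexed by $r<p$ and $s<q$ are $o(nm)$ \emph{uniformly} as both dimensions grow independently. This uniformity is exactly what rules out thin rectangles of unbounded aspect ratio, for which a naive sandwiching of $\frac{a_{n,m}}{nm}$ between two diagonal square counts would fail; the subadditive splitting above is what handles them correctly.
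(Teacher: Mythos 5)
Your proof is correct, and it supplies in full the standard argument that the paper itself omits: the lemma is stated there without proof, as one of three facts deferred to Lind--Marcus. Your route --- coordinatewise submultiplicativity of $N_{n,m}(X)$ via restriction and shift-invariance, a two-dimensional Fekete lemma with the $kl\,a_{p,q}+k\,a_{p,s}+l\,a_{r,q}+a_{r,s}$ decomposition, and identification of the double limit with $h(X)$ along the diagonal --- is exactly the classical one, and your handling of the remainder terms (each is $O(1/m)$, $O(1/n)$ or $O(1/(nm))$ uniformly in the other variable, since $r<p$ and $s<q$ range over finitely many values) correctly addresses the only delicate point, namely that the limit is taken with $n$ and $m$ tending to infinity independently. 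The one implicit hypothesis worth flagging is that $X$ is nonempty, so that $N_{n,m}(X)\ge 1$ and $a_{n,m}\ge 0$, which keeps the infimum finite and justifies the convention $a_{n,0}=a_{0,m}=0$; this is harmless here since all the domination subshifts considered are nonempty.
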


\subsection{\label{section.examples} Domination subshifts}

In this section, the alphabet is 
$\mathcal{A}_0 =\left\{\begin{tikzpicture}[scale=0.3]
\draw (0,0) rectangle (1,1);
\end{tikzpicture},\begin{tikzpicture}[scale=0.3]
\fill[gray!85] (0,0) rectangle (1,1);
\draw (0,0) rectangle (1,1);
\end{tikzpicture}\right\}$, and $d=2$. 

\begin{definition}
The \textbf{domination} (resp. \textbf{minimal domination}, \textbf{total domination} and \textbf{minimal total-domination}) denoted by $X^{D}$ (resp. $X^{M}$, $X^T$ and $X^{MT}$), is 
the set of elements $x$ of $\mathcal{A}_0^{\mathbb{Z}^2}$ such that 
$\{\textbf{u} \in \mathbb{Z}^2: x_{\textbf{u}} = \begin{tikzpicture}[scale=0.3]
\fill[gray!85] (0,0) rectangle (1,1);
\draw (0,0) rectangle (1,1);
\end{tikzpicture}\}$ is 
a dominating (resp. minimal dominating, 
total dominating and minimal total dominating) 
set of the infinite square grid $\mathbb{Z}^2$.
In all these cases, a configuration 
$x$ of the subshift is called a \textbf{dominated 
configuration}. We also say that $\textbf{u}$ is a \textbf{dominant position} of the configuration $x$ when $x_\textbf{u}$ is grey. Likewise, a private neighbour is still a position which is dominated by exactly one dominant position.
\end{definition}

\begin{property}
The sets $X^{D}$, $X^{M}$, $X^T$ and $X^{MT}$
are subshifts of finite type.
\end{property}

\begin{proof}
This comes from the local characterisations of each type of dominating sets [Section~\ref{section.local.characterisations}], 
which can straightforwardly be translated into 
forbidden patterns. 
\end{proof}


\subsection{\label{section.proof.asymptotic} 
Existence of an asymptotic growth rate}

The dominating 
sets of a finite grid $G_{n,m}$ do not correspond 
exactly to the globally admissible 
patterns on the same grid of the corresponding subshifts of finite type presented in Section~\ref{section.examples}.
Indeed, in such a pattern, the positions of the border 
may for instance be dominated by a position outside the pattern in a configuration 
in which it appears. However we will see that we can compare the number of globally admissible patterns of size $n \times m$ for $X^D$ (resp. $X^M$, $X^T$ and $X^{MT}$) with the number of dominating sets (resp. minimal dominating sets, total dominating set and minimal total dominating sets) of $G_{n,m}$. We use this to prove the existence of an asymptotic growth rate for the grid, equal to the entropy of the SFT.

In this section,
we assimilate the set of vertices of $G_{n,m}$ to any translate of 
$\llbracket 1,n \rrbracket \times \llbracket 1,m \rrbracket$ and
assimilate any set $S$ of vertices of a finite grid $G_{n,m}$ with the pattern $p$ of $\mathcal{A}^{\mathbb{Z}^2}$ on $\llbracket 1,n \rrbracket \times \llbracket 1,m \rrbracket$
defined by $p_{\textbf{u}}$ being grey if and only 
if $\textbf{u} \in S$.

\begin{notation}
If $\mathbb{U}$ is a subset of $\mathbb{Z}^2$, 
we define the (extended) \textbf{neighbourhood of $\mathbb{U}$} as
\[\mathcal{N} (\mathbb{U}) = \bigcup_{\textbf{u} \in \mathbb{U}} \left( \textbf{u} + \llbracket -1,1\rrbracket ^2\right).\]
We also define, for all $n,m \ge 1$ and $k \geq 1$ the \textbf{border}
\[\mathbb{B}_{n,m,k} = \mathcal{N}^{k} (\llbracket 1,n \rrbracket \times \llbracket 1,m \rrbracket) \setminus \mathcal{N}^{k-1} (\llbracket 1,n \rrbracket \times \llbracket 1,m \rrbracket).\] 

\[\begin{tikzpicture}[scale=0.2]

\fill[gray!50] (-3,-3) rectangle (12,7);
\fill[pattern=north west lines] (-2,-2) rectangle (11,6);
\draw (-2,-2) rectangle (11,6);
\draw (-3,-3) rectangle (12,7);
\node at (17.5,2) {$\mathbb{B}_{m,n,k+1}$};
\draw[-latex] (13.5,2) -- (11.5,2);

\draw[fill=gray!30] (0,0) rectangle (9,4);
\draw[fill=black] (0,0) circle (4pt) ;

\draw[-latex] (-4.5,0) -- (-0.4,0);
\node[scale=0.7] at (-6.2,0) {$(1,1)$};

\draw[-latex] (9,-4.5) -- (4.5,-4.5) -- (4.5,-1);
\node at (17,-4.5) {$\mathcal{N}^k (\llbracket 1,n \rrbracket \times 
\llbracket 1,m \rrbracket)$};

\node[scale=0.7] at (4.5,2) {$\llbracket 1,n \rrbracket \times \llbracket 1,m \rrbracket$};

\end{tikzpicture}\]
For convenience, we extend the notation to $\mathbb{B}_{n,m,0}
= \llbracket 1,n \rrbracket \times \llbracket 1,m \rrbracket \setminus \llbracket 2,n-1 \rrbracket \times \llbracket 2,m-1 \rrbracket$.
\end{notation}

\begin{lemma}
For all $n,m \ge 2$, the following inequalities hold: $N_{n-1,m-1} (X^D) \le D_{n,m} \le N_{n,m} (X^D)$.
\end{lemma}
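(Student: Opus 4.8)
The goal is to bound $D_{n,m}$, the number of dominating sets of the finite grid $G_{n,m}$, between the numbers $N_{n-1,m-1}(X^D)$ and $N_{n,m}(X^D)$ of globally admissible patterns of the domination SFT. The plan is to establish the two inequalities separately by constructing explicit injections.

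For the upper bound $D_{n,m} \le N_{n,m}(X^D)$, I would take an arbitrary dominating set $S$ of $G_{n,m}$, viewed as a pattern $p$ on $\llbracket 1,n\rrbracket \times \llbracket 1,m\rrbracket$, and exhibit a configuration $x \in X^D$ on all of $\mathbb{Z}^2$ that restricts to $p$. The natural choice is to fill every position outside the rectangle with the grey (dominant) symbol. Since $S$ already dominates every interior vertex of the grid, and the extra grey border guarantees that every position on the boundary of the rectangle (and everything outside) is dominated, the resulting $x$ lies in $X^D$. This shows $p$ is globally admissible, and since distinct dominating sets give distinct patterns, the map $S \mapsto p$ is injective into the globally admissible patterns of size $n \times m$.

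For the lower bound $N_{n-1,m-1}(X^D) \le D_{n,m}$, I would go the other direction: given a globally admissible pattern $q$ of size $(n-1)\times(m-1)$, with witnessing configuration $x \in X^D$, I want to produce a dominating set of $G_{n,m}$. The idea is to embed the $(n-1)\times(m-1)$ pattern inside the $n\times m$ grid together with one extra row and column taken from the surrounding configuration $x$, so that the border positions of the smaller pattern, which were dominated in $x$ by neighbours possibly lying just outside, are now dominated by genuine vertices of $G_{n,m}$. Concretely, one restricts $x$ to a translate of $\llbracket 1,n\rrbracket \times \llbracket 1,m\rrbracket$ containing the support of $q$; this restriction is a dominating set of $G_{n,m}$ because every vertex of the finite grid that is dominated in $x$ keeps at least one dominating neighbour inside the enlarged rectangle — the extra border row/column absorbs exactly the neighbours that would otherwise be lost. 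Distinct patterns $q$ yield distinct restrictions, giving the injection.

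The main obstacle is the bookkeeping on the borders: in both directions one must check carefully that no vertex of the finite grid loses its domination when passing between the infinite configuration and the finite pattern, and conversely that the local domination condition is not violated for the domination SFT $X^D$ (unlike the minimal or total variants, plain domination imposes no constraint forbidding extra grey cells, which is what makes the border-filling argument in the upper bound clean). I would pay attention to the precise off-by-one in the support sizes $n-1$ versus $n$, matching the single extra layer of vertices to the radius-one neighbourhood used in the domination condition, and verify both constructions are injective by noting that the original object can be recovered by restriction.
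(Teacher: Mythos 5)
Your upper bound is exactly the paper's argument and is fine: pad a dominating set of $G_{n,m}$ with grey everywhere outside $\llbracket 1,n\rrbracket\times\llbracket 1,m\rrbracket$; grey cells need no domination and dominate all their neighbours, so the extension lies in $X^D$ and the pattern is globally admissible.

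Your lower bound, however, has a genuine gap. You propose to take the witnessing configuration $x\in X^D$ of a globally admissible $(n-1)\times(m-1)$ pattern $q$ and restrict it to an $n\times m$ rectangle containing $\mathrm{supp}(q)$, claiming the restriction is a dominating set of $G_{n,m}$ because ``the extra border row/column absorbs exactly the neighbours that would otherwise be lost.'' This is false. A white cell on the boundary of \emph{any} finite rectangle may be dominated in $x$ only by a cell just outside that rectangle (e.g.\ $x$ could have an all-grey column immediately to the left of the rectangle and an all-white leftmost column inside it); the restriction is then not a dominating set. Adding one extra row and one extra column does not resolve this: it only moves the offending boundary outward by one layer on two of the four sides, and the new outer row/column has exactly the same problem. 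The paper avoids restriction entirely and instead goes in the same direction as your upper bound: it takes the globally admissible pattern itself and \emph{pads it with grey symbols} on all four sides, producing a dominating set of a strictly larger grid (the grey border dominates the white boundary cells of the pattern and itself needs no domination); injectivity is clear since the pattern sits unchanged in the middle. You should replace your restriction argument by this grey-padding construction (and then check the resulting index shift against the statement; note that padding on all four sides enlarges each dimension by $2$, so one obtains $N_{n,m}(X^D)\le D_{n+2,m+2}$ from a one-cell border, which still yields the asymptotic conclusion the lemma is used for).
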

\begin{proof} \leavevmode
\begin{enumerate}
\item For all $n,m \ge 1$, any dominating set of $G_{n,m}$ 
can be extended into a configuration of $X^D$ 
by defining the symbol of 
any position outside $\llbracket 1,n \rrbracket \times \llbracket 1,m \rrbracket$ to be grey. As a consequence, 
any dominating set of $G_{n,m}$ is globally 
admissible in $X^D$ and thus $D_{n,m} \le N_{n,m} (X^D)$.
\item Any pattern of $X^D$ on $\llbracket 1,n\rrbracket \times \llbracket 1,m \rrbracket$ can be turned into 
a dominating set of $\llbracket 0,n+1\rrbracket
\times \llbracket 0,m+1 \rrbracket$
by extending it with grey symbols. Hence we obtain
the inequality $N_{n,m} (X^D) \le D_{n+1,m+1}$ 
for all $n,m \ge 1$. 
\end{enumerate}
\end{proof}

Using the very same arguments, we obtain the same inequality for the total domination.
\begin{lemma}
For all $n,m \ge 2$, the following inequalities hold: $N_{n-1,m-1} (X^T) \le T_{n,m} \le N_{n,m} (X^T)$.
\end{lemma}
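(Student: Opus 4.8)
The plan is to mirror exactly the two-direction argument used for $X^D$ in the preceding lemma, replacing the domination condition by the total domination condition throughout. The key observation is that the local characterisation of total domination (every vertex, including dominant ones, must have a neighbour in $S$) translates into forbidden patterns in precisely the same way the ordinary domination condition does, so the padding-by-grey trick remains valid.

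First I would prove the upper bound $T_{n,m} \le N_{n,m}(X^T)$. Given any total dominating set $S$ of $G_{n,m}$, I extend it to a configuration of $\mathcal{A}_0^{\mathbb{Z}^2}$ by colouring every position outside $\llbracket 1,n\rrbracket \times \llbracket 1,m\rrbracket$ grey. I must check this is a genuine configuration of $X^T$, i.e. that every vertex of $\mathbb{Z}^2$ has a neighbour in the extended dominant set. Interior vertices of the original grid already had a neighbour in $S$ by hypothesis; boundary vertices of the grid now additionally see grey neighbours outside; and every position outside the grid is surrounded by grey positions, so it trivially has a grey neighbour. Hence the extension lies in $X^T$, its restriction to $\llbracket 1,n\rrbracket \times \llbracket 1,m\rrbracket$ is globally admissible, and distinct sets give distinct patterns, yielding the bound.

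Next I would prove the lower bound $N_{n-1,m-1}(X^T) \le T_{n,m}$. Take a globally admissible pattern $p$ of $X^T$ on $\llbracket 1,n-1\rrbracket \times \llbracket 1,m-1\rrbracket$ and surround it by one layer of grey symbols to obtain a colouring of $\llbracket 0,n\rrbracket \times \llbracket 0,m\rrbracket$, which I identify with $G_{n,m}$ after translation. I claim the grey positions form a total dominating set of this finite grid: positions in the interior copy of $p$ already have a neighbour in the dominant set because $p$ is globally admissible in $X^T$ (so every such position had a dominant neighbour inside the ambient configuration, and that neighbour is either inside $p$ or is one of the added grey cells), while every added grey boundary cell has an adjacent grey boundary cell. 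Since distinct patterns yield distinct sets, this gives $N_{n-1,m-1}(X^T) \le T_{n,m}$.

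The only step requiring genuine care — and hence the main obstacle — is the lower bound verification that interior vertices whose dominant neighbour in the ambient $X^T$-configuration lay outside the support of $p$ still end up dominated after truncation and grey-padding. This is exactly where total domination differs subtly from ordinary domination: one must confirm that replacing the unknown exterior by an all-grey border never destroys an existing domination relation but only adds grey neighbours, which is immediate because grey cells are themselves dominant. Once this is checked, the argument is identical to the $X^D$ case word for word, and I would simply remark, as the authors do, that "the very same arguments" apply.
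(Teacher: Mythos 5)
Your proof is essentially identical to the paper's: the paper establishes this lemma by remarking that ``the very same arguments'' as for $X^D$ apply, namely extending a total dominating set by grey cells outside the grid for the upper bound and padding a globally admissible pattern by one grey layer for the lower bound, exactly as you do (and your verification that the grey ring is itself totally dominated, and that exterior dominating neighbours land in the added layer, is the right check). The only blemish is arithmetic: surrounding an $(n-1)\times(m-1)$ pattern by one layer on each side yields an $(n+1)\times(m+1)$ grid rather than $G_{n,m}$, so the construction literally gives $N_{n-2,m-2}(X^T)\le T_{n,m}$ --- an off-by-one that the paper's own proof for $X^D$ shares and which is immaterial for the asymptotic use of the lemma.
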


We then address the minimal and minimal total domination. The proofs of the following inequalities are more complex as we will see.
\begin{lemma}
\label{lemma.comparison.minimal}
For all $n,m \ge 1$, the following inequalities hold: 
\[\frac{1}{2^{6(n+m)}} N_{n,m} (X^M) \le M_{n,m} \le N_{n,m} (X^M).\]
\end{lemma}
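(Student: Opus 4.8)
The plan is to prove the two inequalities separately, since they behave very differently. For the upper bound $M_{n,m} \le N_{n,m}(X^M)$ I would exhibit an injection from the minimal dominating sets of $G_{n,m}$ into the globally admissible patterns of $X^M$ of size $\llbracket 1,n\rrbracket \times \llbracket 1,m\rrbracket$. The natural candidate is the identification of a set $S$ with its pattern on the grid; since distinct sets yield distinct patterns, it suffices to show that every minimal dominating set $S$ of $G_{n,m}$ is the grid restriction of some configuration $x \in X^M$. I would construct such an $x$ by leaving the grid untouched and surrounding it with a buffer: the positions just outside the grid are already dominated by the grey border cells of $S$ when needed, and I choose the remaining outside cells so as to dominate the rest of $\mathbb{Z}^2$ while never placing a grey cell adjacent to a private neighbour, or to an isolated dominant cell, of a border element of $S$. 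Restricting a configuration to the grid can only remove dominators, so the minimality witnesses of $S$ that are internal to the grid survive automatically, and the only care needed is local to the outermost frame.

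For the lower bound, namely $N_{n,m}(X^M) \le 2^{6(n+m)}\, M_{n,m}$, I would use a bounded-multiplicity argument. Set $I = \llbracket 4, n-3\rrbracket \times \llbracket 4, m-3\rrbracket$ and let $B = (\llbracket 1,n\rrbracket\times\llbracket 1,m\rrbracket)\setminus I$ be the width-$3$ frame, so $|B| = nm-(n-6)(m-6) \le 6(n+m)$. Grouping the globally admissible patterns of $X^M$ by their restriction to $I$, each group has at most $2^{|B|} \le 2^{6(n+m)}$ elements, whence $N_{n,m}(X^M) \le 2^{6(n+m)}\,|\mathcal{I}|$, where $\mathcal{I}$ denotes the set of interior restrictions that occur. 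It then remains to inject $\mathcal{I}$ into the minimal dominating sets of $G_{n,m}$. For this I would show that each $\iota \in \mathcal{I}$ can be completed, by a filling of $B$ depending only on $\iota$, into a minimal dominating set $T(\iota)$ of $G_{n,m}$ whose interior is again $\iota$; the map $\iota \mapsto T(\iota)$ is then injective, since $\iota$ is recovered as the restriction of $T(\iota)$ to $I$, giving $|\mathcal{I}| \le M_{n,m}$ and hence the claim. The small cases where $I$ is undefined, that is $\min(n,m) \le 6$, are harmless: there $6(n+m) \ge nm$, so $2^{6(n+m)}$ already exceeds the total number of subsets, while $M_{n,m}\ge 1$.

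The main obstacle is precisely this interior-preserving frame filling $\iota \mapsto T(\iota)$. Using the local characterisation of minimal domination (isolated, or possessing a private neighbour not in $S$), I must choose the grey cells of $B$ so that three conditions hold simultaneously: every blank cell of $G_{n,m}$ becomes dominated, including frame cells and those blank cells of $I$ whose only dominator in the ambient configuration lay in $B$; every dominant cell created in $B$ is isolated or acquires a private neighbour; and, crucially, none of these choices destroys the minimality witness of a dominant cell of $I$. The width-$3$ frame is what makes this feasible: the witness of any interior cell, domination together with a private neighbour, reaches at most two cells away, so I can reserve the innermost layer of $B$ to shield the interior and deploy the domination machinery in the two outer layers, selecting it as a function of $\iota$ alone. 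Organising the verification of these three properties according to the position of a cell relative to the frame is where the real effort lies; once the filling is in place, the counting is purely formal.
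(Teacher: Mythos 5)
Your proposal is correct in outline and rests on the same core ideas as the paper's proof, with a different bookkeeping for the lower bound. For the upper bound you do exactly what the paper does: show that every minimal dominating set of $G_{n,m}$ extends to a configuration of $X^M$, the delicate point being that the exterior filling must dominate all of $\mathbb{Z}^2$ without destroying the witnesses (private neighbours, isolation) of border elements of $S$, and while giving the newly added grey cells witnesses of their own --- this last requirement is implicit in your sketch but is a real part of the verification; the paper realises the filling by an explicit crown-by-crown algorithm on the sets $\mathbb{B}_{n,m,k}$ and checks these properties layer by layer. For the lower bound, the paper defines a border-repair map $\phi_{n,m}$ sending each globally admissible pattern to a minimal dominating set by modifying only cells within distance $2$ of $\mathbb{B}_{n,m,0}$, and bounds the fibres of this map by $2^{6(n+m)}$; you instead quotient by the restriction to a width-$3$ frame and inject the set of occurring interior patterns into the minimal dominating sets via a canonical interior-preserving completion $T(\iota)$. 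The two arguments exploit the same fact (only a bounded-width frame is in play) but yours asks for slightly more: $T(\iota)$ must be constructed from the interior alone, whereas $\phi_{n,m}$ gets to see a full admissible pattern and merely repair it near the border. This is not a fatal obstacle (one can, for instance, fix a canonical admissible extension of each $\iota$ and then apply a border repair confined to the frame), but it is where your plan, like the paper's proof, concentrates all the actual work, and that construction and its three-way verification are only sketched in your proposal. Your treatment of the degenerate cases $\min(n,m)\le 6$, where $2^{6(n+m)}\ge 2^{nm}\ge N_{n,m}(X^M)$ and $M_{n,m}\ge 1$, is a clean addition that the paper leaves implicit.
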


\begin{proof} \leavevmode
\begin{enumerate}
\item \textbf{Second inequality.} 

\begin{enumerate}
\item \textbf{A completion algorithm 
of a dominating set into a configuration of 
$X^M$.} 
Let $S$ be a minimal dominating set of $\llbracket 1,n\rrbracket \times \llbracket 1,m \rrbracket$. Let us extend it into a configuration $x$ of $X^M$ using the following algorithm: successively for every 
$k \ge 0$, we extend the current pattern 
into a pattern on $\mathcal{N}^{k+1} (\llbracket 1,n \rrbracket \times \llbracket 1,m \rrbracket)$ using the following operations, for all $\textbf{u} \in \mathbb{B}_{n,m,k+1}$:

\begin{enumerate}
	\item if $\textbf{u}$ is a corner then $x_{\textbf{u}}$ is white;
	\item if $\textbf{u}$ is a neighbour of a corner in one of the vertical sides of $\mathbb{B}_{n,m,k+1}$ then $x_{\textbf{u}}$ is white;
	\item for all other $\textbf{u}$, $x_{\textbf{u}}$ is grey if and only if its neighbour in $\mathcal{N}^k (\llbracket 1,n \rrbracket \times \llbracket 1,m \rrbracket) $ is neither dominated by an element in this set, nor a dominant element.
\end{enumerate}

This algorithm is illustrated in 
Figure~\ref{figure.completion.crowns.minimal}.

\begin{figure}[h!]
\[\begin{tikzpicture}[scale=0.25]
\fill[gray!10] (0,0) rectangle (5,5);
\draw (0,0) rectangle (5,5);

\begin{scope}[xshift=12cm]
\fill[gray!10] (0,0) rectangle (5,5);
\draw (0,0) rectangle (5,5);
\draw (5,5) rectangle (6,6);
\draw (0,0) rectangle (-1,-1);
\draw (0,5) rectangle (-1,6);
\draw (5,0) rectangle (6,-1);

\draw (5,5) rectangle (6,4);
\draw (5,0) rectangle (6,1);
\draw (0,0) rectangle (-1,1);
\draw (0,5) rectangle (-1,4);

\end{scope}

\begin{scope}[xshift=24cm]
\fill[gray!10] (0,0) rectangle (5,5);
\fill[white] (1,4) rectangle (4,5);
\fill[white] (2,3) rectangle (3,4);
\draw (1,4) grid (4,5);
\draw (2,3) rectangle (3,4);
\draw[fill=gray!85] (2,5) rectangle (3,6);

\fill[white] (1,0) rectangle (3,1);
\fill[gray!85] (3,0) rectangle (4,1);
\fill[white] (2,1) rectangle (3,2);
\draw (1,0) grid (4,1);
\draw (2,1) rectangle (3,2);
\draw (2,-1) rectangle (3,0);

\draw (0,0) rectangle (5,5);
\draw (5,5) rectangle (6,6);
\draw (0,0) rectangle (-1,-1);
\draw (0,5) rectangle (-1,6);
\draw (5,0) rectangle (6,-1);

\draw (5,5) rectangle (6,4);
\draw (5,0) rectangle (6,1);
\draw (0,0) rectangle (-1,1);
\draw (0,5) rectangle (-1,4);
\end{scope}
\end{tikzpicture}\]
\caption{\label{figure.completion.crowns.minimal} Illustration of the completion algorithm in $X^M$: steps of the algorithm are applied successively from left to right.}
\end{figure}
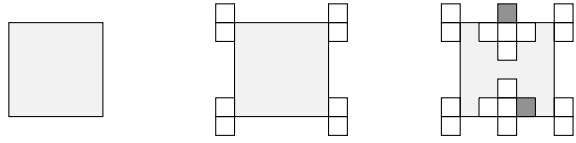

\item \textbf{The output obtained by repeating the algorithm 
is a configuration of $X^M$.}

\begin{itemize}
\item \textbf{Every position is dominated.}
This is verified for the positions in
$\llbracket 2,n-1\rrbracket \times \llbracket 2,m-1\rrbracket$. Outside 
this set, if a position in some $\mathcal{N}^k (\llbracket 1,n \rrbracket \times \llbracket 1,m \rrbracket)$ (for $k \ge 0$) is not dominated 
before extending the configuration 
on $\mathcal{N}^{k+1} (\llbracket 1,n \rrbracket \times \llbracket 1,m \rrbracket)$, then it gets dominated at 
this step by Rule iii and stays that way afterwards.

\item \textbf{Every dominant position 
is isolated or has a private neighbour.}

Let us consider a dominant position $\textbf{u}$ which 
is not isolated. If it lies in $\llbracket 2,n-1\rrbracket \times \llbracket 2,m-1\rrbracket$, then it has a private neighbour since the pattern 
on $\llbracket 1,n\rrbracket \times \llbracket 1,m\rrbracket$ is a minimal dominating set of $G_{n,m}$. Else, it lies in some $\mathbb{B}_{n,m,k}$ for some 
$k \ge 0$ and there are two cases:

\begin{itemize}
\item \textbf{$\textbf{u}$ is not a corner.} Its neighbour $\textbf{v} \in \mathbb{B}_{n,m,k+1}$ 
is white by the application 
of the algorithm. Also, since its neighbours
in $\mathbb{B}_{n,m,k}$ are thus dominant or 
dominated, their neighbours in $\mathbb{B}_{n,m,k+1}$ are white. Moreover, the neighbour of $\textbf{v}$ in $\mathbb{B}_{n,m,k+2}$ is thus white. 
This is illustrated in Figure~\ref{figure.proof.private.neighbour.minimal}.
As a consequence $\textbf{v}$ is a private neighbour for $\textbf{u}$.

\item \textbf{$\textbf{u}$ is a corner.} We apply a similar reasoning.

\end{itemize}

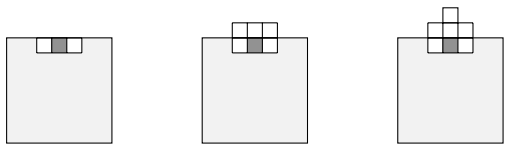
\begin{figure}[h!]
\[\begin{tikzpicture}[scale=0.2]
\fill[color=gray!10] (0,0) rectangle (7,7);
\fill[white] (2,6) rectangle (5,7);
\fill[gray!85] (3,6) rectangle (4,7);
\draw (2,6) grid (5,7);
\draw (0,0) rectangle (7,7);

\begin{scope}[xshift=13cm]
\fill[color=gray!10] (0,0) rectangle (7,7);
\fill[white] (2,6) rectangle (5,7);
\fill[gray!85] (3,6) rectangle (4,7);
\draw (2,6) grid (5,7);
\draw (0,0) rectangle (7,7);
\draw (2,7) grid (5,8);
\end{scope}

\begin{scope}[xshift=26cm]
\fill[color=gray!10] (0,0) rectangle (7,7);
\fill[white] (2,6) rectangle (5,7);
\fill[gray!85] (3,6) rectangle (4,7);
\draw (2,6) grid (5,7);
\draw (0,0) rectangle (7,7);
\draw (2,7) grid (5,8);
\draw (3,8) rectangle (4,9);
\end{scope}
\end{tikzpicture}\]
\caption{\label{figure.proof.private.neighbour.minimal} Illustration of the proof of a private neighbour 
for a non-isolated position. Steps of 
the completion algorithm for $X^M$ applied from left to right.}
\end{figure}

\end{itemize}
\end{enumerate}

\vspace{5cm}
\item \textbf{First inequality.} 

\begin{enumerate}
\item \textbf{Transformating patterns of $X^M$ into minimal dominating sets.} 
Let us define an application $\phi_{n,m}$ 
which, to each pattern of $X^M$ on $\llbracket 1,n\rrbracket \times \llbracket 1, m\rrbracket$, associates a minimal 
dominating set of $G_{n,m}$ defined by: 

\begin{enumerate}
\item suppressing 
any dominant position in $\mathbb{B}_{n,m,0}$ which has no private neighbours in $G_{n,m}$ and which is dominated 
by an element of $G_{n,m}$;
\item changing successively any non-dominant position of $\mathbb{B}_{n,m,0}$ which is still not 
dominated into
a dominant one;
\item successively, for every dominant position $\textbf{u} \in \mathbb{B}_{n,m,0}$ : if one of $\textbf{u}$'s neighbours $\textbf{v}$ is the only private neighbour of a position $\textbf{w}$ which is not isolated in $G_{n,m}$ then change $\textbf{w}$ into a non-dominant position.

This Step is illustrated 
on Figure~\ref{figure.third.step.minimal}.

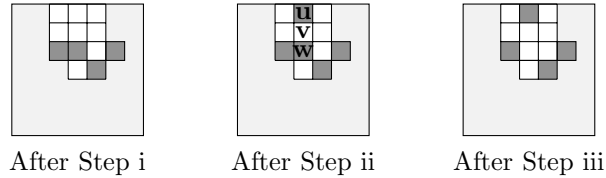
\begin{figure}[h!]
\[\begin{tikzpicture}[scale=0.25]
\fill[gray!10] (0,0) rectangle (7,7);
\fill[white] (2,4) rectangle (5,7);
\fill[gray!85] (2,4) rectangle (4,5);
\fill[white] (3,3) rectangle (4,4);
\fill[gray!85] (4,3) rectangle (5,4);
\fill[gray!85] (5,4) rectangle (6,5);
\draw (0,0) rectangle (7,7);
\draw (2,4) grid (5,7) ;
\draw (3,3) grid (5,4);
\draw (5,4) rectangle (6,5);
\node at (3.5,-1.5) {After Step i};

\begin{scope}[xshift=12cm]
\fill[gray!10] (0,0) rectangle (7,7);
\fill[white] (2,4) rectangle (5,7);
\fill[gray!85] (2,4) rectangle (4,5);
\fill[white] (3,3) rectangle (4,4);
\fill[gray!85] (3,6) rectangle (4,7);
\fill[gray!85] (4,3) rectangle (5,4);
\fill[gray!85] (5,4) rectangle (6,5);
\draw (0,0) rectangle (7,7);
\draw (2,4) grid (5,7) ;
\draw (3,3) grid (5,4);
\draw (5,4) rectangle (6,5);
\node at (3.5,-1.5) {After Step ii};
\node at (3.5,5.5) {\textbf{v}};
\node at (3.5,4.5) {\textbf{w}};
\node at (3.5,6.5) {\textbf{u}};
\end{scope}

\begin{scope}[xshift=24cm]
\fill[gray!10] (0,0) rectangle (7,7);
\fill[white] (2,4) rectangle (5,7);
\fill[gray!85] (2,4) rectangle (3,5);
\fill[white] (3,3) rectangle (4,4);
\fill[gray!85] (3,6) rectangle (4,7);
\fill[gray!85] (4,3) rectangle (5,4);
\fill[gray!85] (5,4) rectangle (6,5);
\draw (0,0) rectangle (7,7);
\draw (2,4) grid (5,7) ;
\draw (3,3) grid (5,4);
\draw (5,4) rectangle (6,5);
\node at (3.5,-1.5) {After Step iii};
\end{scope}
\end{tikzpicture}\]
\caption{\label{figure.third.step.minimal} Illustration of the second and then third steps 
of the algorithm defining $\phi_{n,m}$ for $X^M$, from 
left to right. $\textbf{u},\textbf{v}$ and $\textbf{w}$ are instances of the positions described in Rule iii.}
\end{figure}
\end{enumerate}

\item \textbf{Verifying that images of $\phi_{n,m}$ 
are minimal dominating sets.} 

Let us consider a globally admissible pattern $p$ of $X^M$ on $\llbracket 1, n\rrbracket \times \llbracket 1,m\rrbracket$. The set $\phi_{n,m} (p)$ is 
a minimal dominating set of $G_{n,m}$:

\begin{itemize}
\item \textbf{Any vertex of $G_{n,m}$ is dominated
or dominant in $\phi_{n,m} (p)$.}

Before Step ii, if a position is not dominant 
and not dominated, it becomes dominant during 
this step. 
Moreover, during Step iii, any position 
which is modified is transformed into a dominated position. 

\item \textbf{Any non-isolated dominant position 
has a private neighbour.}
After applying $\phi_{n,m}$, only the positions on 
the border $\mathbb{B}_{n,m,0}$ may not have any private neighbour. After Step i, every

 dominant position on $\mathbb{B}_{n,m,0}$ is isolated, or has a private neighbour. After Step ii, some positions may be dominant, non-isolated, and have no private neighbours. Such positions become non-dominant in Step iii.

\end{itemize}

\item For all $n,m$, the number of preimages 
of $\phi_{n,m}$ for any minimal dominating set 
of $G_{n,m}$ is bounded (roughly) by $2^{3(2n+2m)}$, 
since any symbol modified by the application 
is at distance at most 2 from $\mathbb{B}_{n,m,0}$.
As a consequence, $N_{n,m} (X^M) \le 2^{6(n+m)} M_{n,m}$.

\end{enumerate}
\end{enumerate}
\end{proof}

\begin{lemma}
\label{lemma.comparison.minimal.total}
For all $n$, the following bounds hold: 
\[\frac{1}{2^{8(m+n)}} N_{n,m} (X^{MT}) \le MT_{n,m} \le N_{n,m} (X^{MT}).\]
\end{lemma}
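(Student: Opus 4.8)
The plan is to follow the two-sided scheme of \autoref{lemma.comparison.minimal}, replacing the local conditions that cut out $X^M$ by those cutting out $X^{MT}$: by the local characterisation of minimal total domination, a configuration lies in $X^{MT}$ exactly when every position has a dominant neighbour (total domination) and every dominant position has a private neighbour. The crucial structural difference is that there is no ``isolated'' escape clause — every grey cell must genuinely possess a private neighbour — so both constructions below must actively manufacture private neighbours on the border, and ``dominated'' must everywhere be read as ``has a dominant neighbour,'' dominant cells included.

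For the second inequality $MT_{n,m} \le N_{n,m}(X^{MT})$, I would adapt the crown-by-crown completion algorithm. Starting from a minimal total dominating set $S$ of $G_{n,m}$, I extend the current pattern to each successive crown $\mathbb{B}_{n,m,k+1}$ by setting corners and their neighbours on the vertical sides to white, and, for the remaining positions $\textbf{u}$, setting $x_{\textbf{u}}$ grey if and only if its neighbour in $\mathcal{N}^k(\llbracket 1,n\rrbracket \times \llbracket 1,m\rrbracket)$ does not yet have a dominant neighbour. One then checks, exactly as in the minimal case, that iterating this process yields a configuration in which every position is total dominated (a cell lacking a dominant neighbour at stage $k$ acquires one at stage $k+1$ by Rule iii and keeps it thereafter) and in which every grey position is private, the white cells forced on the adjacent crowns witnessing the private neighbour of each border grey cell. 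Since distinct minimal total dominating sets of $G_{n,m}$ agree with their completions on $\llbracket 1,n\rrbracket \times \llbracket 1,m\rrbracket$, they yield distinct globally admissible patterns, which gives the inequality.

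For the first inequality, I would define a map $\phi_{n,m}$ sending each globally admissible pattern $p$ of $X^{MT}$ on $\llbracket 1,n\rrbracket \times \llbracket 1,m\rrbracket$ to a minimal total dominating set of $G_{n,m}$, modifying only symbols within bounded distance of the border $\mathbb{B}_{n,m,0}$: (i) delete border grey positions that, restricted to $G_{n,m}$, have no private neighbour but are still total dominated; (ii) turn any position of $\mathbb{B}_{n,m,0}$ that has become undominated into a grey one; (iii) delete any grey position rendered non-minimal by the previous steps. I would then verify that $\phi_{n,m}(p)$ is total dominating (Step ii guarantees every vertex has a dominant neighbour) and minimal (Step iii removes the last grey cells lacking a private neighbour). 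Finally, since every symbol touched by $\phi_{n,m}$ lies within a bounded-width band along the perimeter, which here has width one more than in the minimal case (distance at most $3$ rather than $2$ from $\mathbb{B}_{n,m,0}$), the number of preimages of any fixed minimal total dominating set is at most $2^{4\cdot 2(n+m)} = 2^{8(m+n)}$, yielding $N_{n,m}(X^{MT}) \le 2^{8(m+n)}\, MT_{n,m}$.

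The hard part will be maintaining total domination and minimality simultaneously on the border. Because every dominant cell now requires a private neighbour and every cell — dominant ones included — requires a dominant neighbour, I must check that the corner and vertical-side white rules in the completion never strand a border cell without a dominant neighbour while still forcing enough whites to act as private neighbours, and that in $\phi_{n,m}$ the repair of domination in Step~ii does not silently destroy private neighbours established earlier, so that Step~iii leaves a genuinely minimal set. This tighter coupling is precisely what forces the affected band to reach one cell further than in the non-total case, and hence the constant $8$ rather than $6$.
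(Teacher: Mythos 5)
Your overall two-sided scheme is the paper's: a crown-by-crown completion for $MT_{n,m}\le N_{n,m}(X^{MT})$, and a border-repair map $\phi_{n,m}$ with a $2^{8(m+n)}$ preimage count for the reverse inequality. The completion half is essentially correct and matches the paper: Rule~iii becomes ``grey iff the inner neighbour has no dominant neighbour yet,'' with no exemption for dominant cells, and the forced white cells on the adjacent crowns witness the private neighbours.

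The gap is in Step~(ii) of your $\phi_{n,m}$. You propose to ``turn any position of $\mathbb{B}_{n,m,0}$ that has become undominated into a grey one,'' carrying over the repair rule from the minimal (non-total) case. But in total domination, membership in $S$ does not count as being dominated: a vertex needs a \emph{neighbour} in $S$. Colouring an undominated border cell $\textbf{u}$ grey therefore leaves $\textbf{u}$ itself still undominated, and your Step~(iii) only deletes grey cells, so it cannot repair this. The verification you sketch (``Step~ii guarantees every vertex has a dominant neighbour'') fails precisely at the cells Step~(ii) touches. The paper's fix is to make grey not $\textbf{u}$ but its interior neighbour $\textbf{v}\in\llbracket 2,n-1\rrbracket\times\llbracket 2,m-1\rrbracket$ (any neighbour of the corner, in the corner case), and --- \emph{before} doing so --- to delete every dominant $\textbf{w'}$ whose unique private neighbour $\textbf{w}$ lies in the neighbourhood of $\textbf{v}$ and would be stolen by the new dominant. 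This order of operations yields total domination ($\textbf{u}$ now has the grey neighbour $\textbf{v}$) and minimality simultaneously ($\textbf{u}$, being previously undominated, is automatically a private neighbour of $\textbf{v}$), which your deferred clean-up in Step~(iii) does not achieve. The preimage bound itself is only a rough count of a bounded band along the perimeter, so that part of your argument is unaffected once the map is corrected.
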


For readability, we reproduce the structure 
of the proof of Lemma~\ref{lemma.comparison.minimal}, 
but simplify the arguments and refer this proof.

\begin{proof}
\leavevmode
\begin{enumerate}
\item \textbf{Second inequality.} 

\begin{enumerate}
\item \textbf{A completion algorithm 
of dominating set into a configuration of 
$X^{MT}$.} 

Consider a minimal total dominating set of $G_{n,m}$. Any element in $\llbracket 2, n-1 \rrbracket \times \llbracket 2, m-1 \rrbracket$ 
is dominated 
by an element of $\llbracket 1, n \rrbracket \times \llbracket 1, m \rrbracket$, and any dominant 
element in $\llbracket 2, n-1 \rrbracket \times \llbracket 2, m-1 \rrbracket$ is not isolated and has a private 
neighbour in $\llbracket 1, n \rrbracket \times \llbracket 1, m \rrbracket$ (which may or may not 
be a dominant position). Let us extend it 
into a configuration $x$ of $X^{MT}$ using 
an algorithm very similar to the one in the corresponding 
point in the proof of Lemma~\ref{lemma.comparison.minimal}, but the condition in the third point is different:
\begin{enumerate}
	\item if $\textbf{u}$ is a corner then $x_{\textbf{u}}$ is white;
	\item if $\textbf{u}$ is a neighbour of a corner in one of the vertical sides of $\mathbb{B}_{n,m,k+1}$ then $x_{\textbf{u}}$ is white;
	\item for all other $\textbf{u}$, $x_{\textbf{u}}$ is grey if and only if its neighbour in $\mathcal{N}^k (\llbracket 1,n \rrbracket \times \llbracket 1,m \rrbracket) $ is not dominated by an element in this set.
\end{enumerate}

\item \textbf{The result of the algorithm 
is a configuration of $X^{MT}$.}

\begin{itemize}
\item \textbf{Every position is dominated.} 
Similar to the corresponding point in 
the proof of Lemma~\ref{lemma.comparison.minimal}. This implies that no dominant positions are isolated.
\item \textbf{Every dominant position 
has a private neighbour.}

Let us consider a dominant position $\textbf{u}$. 
If it is in $\llbracket 3,n-2\rrbracket \times \llbracket 3,m-2\rrbracket $, 
since the pattern on $\llbracket 1,n\rrbracket \times \llbracket 1,m\rrbracket$ is a minimal total dominating set of $G_{n,m}$, we know that it has a private neighbour. Else, it lies
in some $\mathbb{B}_{n,m,k}$ for $k \ge 0$, 
or in $\llbracket 2,n-1\rrbracket \times \llbracket 2,m-1\rrbracket$. Then there are two cases: 

\begin{itemize}
\item \textbf{$\textbf{u}$ is not a corner.} If it has no dominant neighbours in 
Let us call $\textbf{v}$ its neighbour in $\mathbb{B}_{n,m,k+1}$. Note that, depending on whether or not $\textbf{u}$ is dominated inside $\mathcal{N}^k (\mathbb{B}_{n,m,0})$, \textbf{v} may be white or grey. Since the neighbours of $\textbf{u}$ in $\mathbb{B}_{n,m,k}$ are dominated, $\textbf{v}$'s neighbours in $\mathbb{B}_{n,m,k+1}$ are white. Finally, since $\textbf{v}$ is dominated by $\textbf{u}$, its neighbour in $\mathbb{B}_{n,m,k+2}$ is white, hence $\textbf{v}$ is a private neighbour for $\textbf{u}$.

\item \textbf{ $\textbf{u}$ is a corner.} We apply a similar reasoning.

\end{itemize}

\end{itemize}
\end{enumerate}

\item \textbf{First inequality:} 

\begin{enumerate}
\item \textbf{A transformation of patterns of $X^{MT}$ into dominating sets.} 

Let us define once again an application $\phi_{n,m}$ 
which, to each pattern of $X^{MT}$ on $\llbracket 1, n\rrbracket \times \llbracket 1, m\rrbracket$, associates a minimal total 
dominating set of $G_{n,m}$, defined in a similar
way as in the corresponding point 
in the proof of Lemma~\ref{lemma.comparison.minimal}, but the proof is more complex.

\begin{enumerate}
\item suppress 
any dominant position on the border 
$\mathbb{B}_{n,m,0}$ which has no private neighbours in $G_{n,m}$.
\item Successively, for every non-corner undominated position $\textbf{u}$ on 
the border $\mathbb{B}_{n,m,0}$, do the following:

\begin{itemize}
\item Consider the position $\textbf{v}$, neighbour of $\textbf{u}$ in $\llbracket 2,n-1\rrbracket \times \llbracket 2,m-1 \rrbracket$. For each dominant 
position $\textbf{w}$ in the neighbourhood of $\textbf{v}$, 
and for each dominant position $\textbf{w'}$ in the 
neighbourhood of $\textbf{w}$, if $\textbf{w}$ 
is the only private neighbour of $\textbf{w'}$, 
then change $\textbf{w'}$ into a non-dominant 
position.
\item Change $\textbf{v}$ into a dominant 
position.
\end{itemize}

Then do the same operations for the corners 
of $\mathbb{B}_{n,m,0}$, except that 
$\textbf{v}$ is replaced by any neighbour of the corner.

\end{enumerate}

This Step is illustrated 
on Figure~\ref{figure.third.step.minimal.total}.

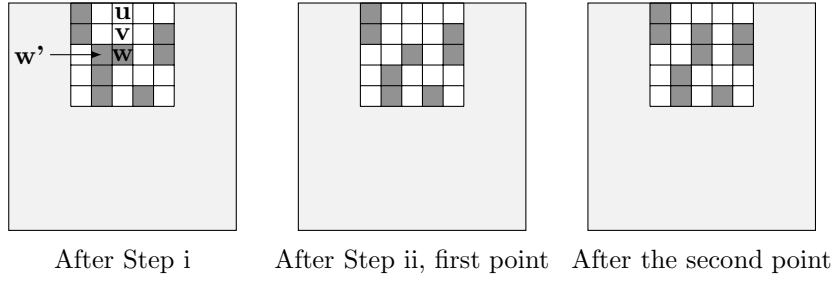
\begin{figure}[H]
\[\begin{tikzpicture}[scale=0.275]
\fill[gray!10] (-2,-2) rectangle (9,9);
\fill[white] (1,4) rectangle (6,9);
\fill[gray!85] (2,6) rectangle (4,7);
\fill[gray!85] (5,6) rectangle (6,7);
\fill[gray!85] (1,7) rectangle (2,9);
\fill[gray!85] (2,4) rectangle (3,6);
\fill[gray!85] (4,4) rectangle (5,5);
\fill[gray!85] (5,6) rectangle (6,8);
\draw (-2,-2) rectangle (9,9);
\draw (1,4) grid (6,9) ;
\node at (3.5,6.5) {$\textbf{w}$};
\node at (-1,6.5) {$\textbf{w'}$};
\draw[-latex] (0,6.5) -- (2.5,6.5);
\node at (3.5,7.5) {$\textbf{v}$};
\node at (3.5,8.5) {$\textbf{u}$};
\node at (3.5,-3.5) {After Step i};

\begin{scope}[xshift=14cm]

\fill[gray!10] (-2,-2) rectangle (9,9);
\fill[white] (1,4) rectangle (6,9);
\fill[gray!85] (3,6) rectangle (4,7);
\fill[gray!85] (5,6) rectangle (6,7);

\fill[gray!85] (1,7) rectangle (2,9);
\fill[gray!85] (2,4) rectangle (3,6);
\fill[gray!85] (4,4) rectangle (5,5);
\fill[gray!85] (5,6) rectangle (6,8);
\draw (-2,-2) rectangle (9,9);
\draw (1,4) grid (6,9) ;
\node at (3.5,-3.5) {After Step ii, first point};
\end{scope}

\begin{scope}[xshift=28cm]
\fill[gray!10] (-2,-2) rectangle (9,9);
\fill[white] (1,4) rectangle (6,9);
\fill[gray!85] (3,6) rectangle (4,7);
\fill[gray!85] (5,6) rectangle (6,7);
\fill[gray!85] (1,7) rectangle (2,9);
\fill[gray!85] (2,4) rectangle (3,6);
\fill[gray!85] (4,4) rectangle (5,5);
\fill[gray!85] (5,6) rectangle (6,8);

\fill[gray!85] (3,7) rectangle (4,8);
\draw (-2,-2) rectangle (9,9);
\draw (1,4) grid (6,9) ;
\node at (3.5,-3.5) {After the second point};
\end{scope}
\end{tikzpicture}\]

\caption{Illustration of the second and then third steps 
of the algorithm defining $\phi_{n,m}$ for $X^MT$, from 
left to right. $\textbf{u},\textbf{v},\textbf{w}$ and $\textbf{w'}$ are instances 
of the positions described in Rule ii.}
\label{figure.third.step.minimal.total}
\end{figure}

\item \textbf{Verification that images of $\phi_{n,m}$ 
are minimal total dominating sets.} 

Consider a pattern $p$ of $X^{MT}$ on $\llbracket 1,n\rrbracket \times \llbracket 1, m\rrbracket$. The set $\phi_{n,m} (p)$ is 
a minimal total dominating set of $G_{n,m}$:

\begin{itemize}
\item \textbf{Any vertex of $G_{n,m}$ is dominated
in $\phi_{n,m} (p)$.}

Any (dominant or not) position which was dominated before applying Rule i is still dominated afterwards: indeed, if some position $\textbf{u}$ lies in the neighbourhood of a dominant position $\textbf{v}$ suppressed by 
Rule i, then since $\textbf{v}$ had no private neighbours in $G_{n,m}$, $\textbf{u}$ is dominated by another position.
For similar reasons, no positions become undominated after the application of Rule ii: only the neighbours of some $\textbf{w'}$ could be affected and if \textbf{w'} becomes non-dominant it means that they were dominated by other positions, so that they stay dominated.
Since all the positions inside $\llbracket 2,n-1\rrbracket \times \llbracket 2,m-1 \rrbracket$ were dominated before applying the rules, it only remains to show that the positions inside $\mathbb{B}_{n,m,0}$ are dominated after applying Rule ii. This is true thanks to this rule: any undominated position $\textbf{u}$ inside the border sees its neighbour $\textbf{v}$ inside  $\llbracket 2,n-1\rrbracket \times \llbracket 2,m-1 \rrbracket$ becomes dominant. The same works for the corners, except that the neighbour comes from the border.

\item \textbf{Any dominant position 
has a private neighbour.}

At the end of Step i, any dominant position has a private neighbour. Only the creation of 
a domination position $\textbf{v}$ during the execution of Rule ii on position $\textbf{u}$ could affect this property by disabling the private neighbour of a position $\textbf{w}$ in its neighbourhood or by not having any private neighbour itself. The first case cannot happen since any dominant position $\textbf{w'}$ whose unique private neighbour is $\textbf{w}$ is suppressed. The second one also never happens since the position $\textbf{u}$ is a private neighbour for $\textbf{v}$. 
\end{itemize}

\item For all $n,m$, the number of preimages 
of $\phi_{n,m}$ for any minimal dominating set 
of $G_{n,m}$ is bounded (roughly) by $2^{4*(2m+2n)}$, 
since any symbol modified by the application 
is at distance at most $4$ of the border of $\llbracket 1,n\rrbracket \times \llbracket 1,m \rrbracket$. 
As a consequence, $N_{n,m} (X^{MT}) \le 2^{8(m+n)} T_{n,m}$.

\end{enumerate}
\end{enumerate}
\end{proof}

\begin{thm}[Asymptotic behaviour]
There exists some $\nu_D \geq 0$ (resp. 
$\nu_M$, $\nu_T$ and $\nu_{MT}$) 
such that

\[ \boxed{D_{n,m} = \nu_D^{ nm + o(nm)}}\] 
(resp. $M_{n,m} = \nu_M^{ nm + o(nm)}$, $T_{n,m} = \nu_T^{nm + o(nm)}$ and $MT_{n,m} = \nu_{MT}^{ nm + o(nm)}$).
\end{thm}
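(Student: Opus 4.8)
The plan is to set, for each variant, the base of the exponential equal to two raised to the entropy of the associated subshift, and then deduce the stated asymptotics from the comparison lemmas by a squeeze argument. Concretely, I would define
\[\nu_D = 2^{h(X^D)}, \quad \nu_M = 2^{h(X^M)}, \quad \nu_T = 2^{h(X^T)}, \quad \nu_{MT} = 2^{h(X^{MT})},\]
each a well-defined nonnegative real (in fact at least $1$, since the whole grid is always a dominating set). After taking base-$2$ logarithms, and since each $\nu$ is positive, the claimed identity $D_{n,m} = \nu_D^{nm+o(nm)}$ amounts exactly to
\[\lim_{n,m \to \infty} \frac{\log_2 D_{n,m}}{nm} = h(X^D),\]
and likewise for $M_{n,m}$, $T_{n,m}$ and $MT_{n,m}$; it therefore suffices to establish these four limits.

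For the domination and total domination cases, I would start from $N_{n-1,m-1}(X^D) \le D_{n,m} \le N_{n,m}(X^D)$, take logarithms and divide by $nm$ to get
\[\frac{\log_2 N_{n-1,m-1}(X^D)}{nm} \le \frac{\log_2 D_{n,m}}{nm} \le \frac{\log_2 N_{n,m}(X^D)}{nm}.\]
The third of the cited well-known lemmas shows the right-hand side tends to $h(X^D)$. For the left-hand side I would rewrite it as the product
\[\frac{(n-1)(m-1)}{nm} \cdot \frac{\log_2 N_{n-1,m-1}(X^D)}{(n-1)(m-1)},\]
whose first factor tends to $1$ and whose second factor tends to $h(X^D)$ by that same lemma; the squeeze theorem then gives the desired limit. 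The argument for $T_{n,m}$ is word for word identical, using its own comparison lemma.

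For the minimal and minimal total cases I would invoke Lemma~\ref{lemma.comparison.minimal} and Lemma~\ref{lemma.comparison.minimal.total}. Taking logarithms in $\frac{1}{2^{6(n+m)}}N_{n,m}(X^M) \le M_{n,m} \le N_{n,m}(X^M)$ and dividing by $nm$ yields
\[\frac{\log_2 N_{n,m}(X^M)}{nm} - \frac{6(n+m)}{nm} \le \frac{\log_2 M_{n,m}}{nm} \le \frac{\log_2 N_{n,m}(X^M)}{nm}.\]
The correction obeys $\frac{6(n+m)}{nm} = 6\left(\frac{1}{n}+\frac{1}{m}\right) \to 0$, so both bounds converge to $h(X^M)$ and the squeeze applies; the case of $MT_{n,m}$ is the same with the constant $8$ replacing $6$.

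I expect no genuine obstacle here: the difficult content has already been absorbed into the comparison lemmas, and what remains is a routine sandwich. The only points needing a little care are rewriting the lower bound for $D_{n,m}$ and $T_{n,m}$ so that the entropy limit can be applied with its natural denominator $(n-1)(m-1)$, and checking that the multiplicative factors $2^{6(n+m)}$ and $2^{8(n+m)}$ really are subexponential at scale $nm$ — both of which are immediate.
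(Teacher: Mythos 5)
Your proposal is correct and follows essentially the same route as the paper: define each growth rate as $2$ raised to the entropy of the corresponding subshift, then squeeze $\frac{\log_2}{nm}$ of the counting sequence between quantities converging to that entropy using the comparison lemmas (the paper writes out only the $M_{n,m}$ case and declares the others similar, and uses the slightly looser correction $\frac{12(m+n)}{nm}$ where you use the exact $\frac{6(n+m)}{nm}$, which changes nothing). Your explicit handling of the index shift $(n-1,m-1)$ for the $D$ and $T$ cases is a detail the paper leaves implicit.
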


\begin{proof}
Let us prove this for the sequence $(M_{n,m})$ (the proof is similar for the other sequences).

As a consequence of 
Lemma~\ref{lemma.comparison.minimal}, for all $n,m$, 
\[- \frac{12(m+n)}{nm} + \frac{\log_2 (N_{n,m} (X^M))}{n m}
\le \frac{\log_2(M_{n,m})}{nm} \le \frac{\log_2 (N_{n,m} (X^M))}{nm}.\]

As a consequence, 
\[\frac{\log_2(M_{n,m})}{nm} \rightarrow h(X^M).\]
This means that $M_{n,m} = 2^{h(X^M) \cdot nm+ o(nm)} = \nu_M^{nm+o(nm)}$, 
where $\nu_M = 2^{h(X^M)}$. 

\end{proof}

\section{\label{section.block.gluing.property}Computability of the growth rate}
In this section, we prove that the growth rate $\nu_D$ (resp. $\nu_M$, $\nu_T$ and $\nu_{MT}$) is a computable number, meaning that there 
exists an algorithm which computes 
approximations of this number with 
arbitrary given precision. For this purpose, we rely on the block-gluing property, 
defined in Section~\ref{subsection.block.gluing}, and proved for $X^D$ (resp. $X^M$, $X^T$ and $X^{MT}$) in Section~\ref{subsection.proof.block.gluing}. 
If a subshift of finite has this property then its entropy is computable. We describe a known algorithm to compute it.

\subsection{\label{subsection.block.gluing} 
The block-gluing property}

\subsubsection{Definition}

For two finite subsets $\mathbb{U},\mathbb{V}$ 
of $\mathbb{Z}^2$, we write
\[\delta(\mathbb{U},\mathbb{V})=\min_{\textbf{u} \in \mathbb{U}} \min_{\textbf{v} \in \mathbb{V}} ||\textbf{v}-\textbf{u}||_{\infty}.\]
The usual definition of the block-gluing property is the following one:

\begin{definition}
For a fixed integer $c \geq 0$, we say that a bidimensional subshift of finite type $X$ on alphabet $\mathcal{A}$ is $c$-block-gluing when for every $n \geq 0$ and any two globally admissible patterns $p$ and $q$ of $X$ on support 
$\llbracket 1,n \rrbracket ^2$, for all $\textbf{u},\textbf{v} \in \mathbb{Z}^2$ such that $\delta(\textbf{u}+\llbracket 1,n \rrbracket ^2, \textbf{v}+\llbracket 1,n \rrbracket ^2) \ge c$, 
there exists a configuration $x$ of $X$ 
such that $x_{\textbf{u}+\llbracket 1,n \rrbracket ^2} = p$ 
and $x_{\textbf{v}+\llbracket 1,n \rrbracket ^2} = q$. 
\end{definition}

Informally, this means that any pair of rectangular patterns placed at whatever positions can be completed 
into a configuration of $X$, provided that the 
distance between the two patterns is at least $c$.

\begin{notation}
For any subshift of finite type $X$, 
we denote by $c(X)$ the smallest $c$ such that 
$X$ is $c$-block-gluing. If $X$ is not block 
gluing for any integer $c$, we write $c(X)=+\infty$.
\end{notation}

In the following, we will use the notations $\mathbb{Z}_{-} =\, \rrbracket - \infty,0 \rrbracket$ and $\mathbb{Z}_{+} = \,\rrbracket 0,+ \infty \llbracket$.
Here is a characterisation of the block-gluing property:

\begin{proposition}
\label{proposition.semi.plane}
Let $c \ge 0$ be an integer.
A bidimensional subshift $X$ is $c$-block-gluing
if and only if for all $k \ge c$ and 
$p$ and $q$ globally admissible patterns on 
supports $\mathbb{Z}_{-} \times \mathbb{Z}$ (resp. $\mathbb{Z} \times \mathbb{Z}_{-}$) and 
$\mathbb{Z}_{+} \times \mathbb{Z}$ (resp. $\mathbb{Z} \times \mathbb{Z}_{+}$), there 
exists a configuration $x$ in $X$ such that 
$x_{|\mathbb{Z}_{-} \times \mathbb{Z}} = p$ 
and $x_{|(k,0)+\mathbb{Z}_{+} \times \mathbb{Z}}=q$ (resp. $x_{|\mathbb{Z} \times \mathbb{Z}_{-}} = p$ 
and $x_{|(k,0)+\mathbb{Z} \times \mathbb{Z}_{+}}=q$).
\end{proposition}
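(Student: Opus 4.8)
The plan is to prove the two implications separately. For the direction \enquote{$c$-block-gluing $\Rightarrow$ half-plane condition} I would use a compactness and exhaustion argument: approximate the two infinite half-plane patterns by growing square patterns, glue those with the block-gluing hypothesis, and pass to a limit. For the converse I would extend each of the two finite square patterns to a full configuration of $X$, restrict these configurations to two complementary half-planes, and glue the resulting half-plane patterns using the hypothesis; the only structural point to check there is that a separation $\delta\ge c$ forces at least one of the two coordinate directions to exhibit a large enough gap, which is precisely what the horizontal and vertical (\enquote{resp.}) forms of the statement handle.

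\emph{Direction $(\Rightarrow)$.} Fix $k\ge c$ and globally admissible half-plane patterns $p$ on $\mathbb{Z}_-\times\mathbb{Z}$ and $q$ on $(k,0)+\mathbb{Z}_+\times\mathbb{Z}$. For each $N$, let $L_N$ (resp.\ $R_N$) be the $N\times N$ square formed by the $N$ rightmost columns of $\mathbb{Z}_-\times\mathbb{Z}$ (resp.\ the $N$ leftmost columns of $(k,0)+\mathbb{Z}_+\times\mathbb{Z}$), with rows centred at $0$. The patterns $p_{|L_N}$ and $q_{|R_N}$ are globally admissible, being restrictions of patterns appearing in configurations of $X$, and $\delta(L_N,R_N)=k+1\ge c$. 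Block-gluing then produces $x_N\in X$ with $(x_N)_{|L_N}=p_{|L_N}$ and $(x_N)_{|R_N}=q_{|R_N}$. Since $\mathcal{A}^{\mathbb{Z}^2}$ is compact and $X$ is closed, a subsequence of $(x_N)$ converges to some $x\in X$; as the squares $L_N$ and $R_N$ exhaust the two half-planes, $x$ coincides with $p$ on $\mathbb{Z}_-\times\mathbb{Z}$ and with $q$ on $(k,0)+\mathbb{Z}_+\times\mathbb{Z}$, as required.

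\emph{Direction $(\Leftarrow)$.} Let $p,q$ be globally admissible patterns on $\llbracket 1,n\rrbracket^2$ placed at $\textbf{u}+\llbracket 1,n\rrbracket^2$ and $\textbf{v}+\llbracket 1,n\rrbracket^2$ with $\delta\ge c$. Writing the $\infty$-distance between two axis-aligned boxes as the maximum of the gap between their column projections and the gap between their row projections, the inequality $\delta\ge c$ forces one of these two gaps to be $\ge c$. Assume it is the column gap (the row gap being symmetric, treated by the vertical form of the statement). Extend $p$ to some $x_p\in X$ and $q$ to some $x_q\in X$, and let $P$ (resp.\ $Q$) be the restriction of $x_p$ (resp.\ $x_q$) to the half-plane of columns weakly to the left of $\textbf{u}+\llbracket 1,n\rrbracket^2$ (resp.\ weakly to the right of $\textbf{v}+\llbracket 1,n\rrbracket^2$). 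These are globally admissible half-plane patterns separated by the column gap, so after a translation they sit in the configuration required by the hypothesis, which yields $x\in X$ restricting to $P$ and $Q$, hence to $p$ on $\textbf{u}+\llbracket 1,n\rrbracket^2$ and to $q$ on $\textbf{v}+\llbracket 1,n\rrbracket^2$.

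I expect the main obstacle to be the constant bookkeeping in the $(\Leftarrow)$ direction rather than any conceptual difficulty. With the conventions $\mathbb{Z}_-=\,\rrbracket-\infty,0\rrbracket$ and $\mathbb{Z}_+=\,\rrbracket 0,+\infty\llbracket$, a half-plane separation indexed by $k$ corresponds to $\delta=k+1$, so aligning the threshold \enquote{$k\ge c$} with the block-gluing threshold \enquote{$\delta\ge c$} produces a harmless shift by one in the constant (the forward and backward constants differ by at most one). Since all that is needed downstream is block-gluing for \emph{some} finite constant, used only to invoke computability of the entropy, this $\pm1$ is immaterial and can either be absorbed into the boundary conventions or noted explicitly.
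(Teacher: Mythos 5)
Your proof is correct and follows essentially the same route as the paper: a compactness/exhaustion argument via growing squares for the forward direction, and extension of the finite patterns to configurations followed by restriction to half-planes for the converse. Your closing remark about the $\pm 1$ discrepancy between the half-plane index $k$ and the $\ell^\infty$-distance $\delta$ is a point the paper's own proof silently glosses over (it asserts that $\delta \ge c$ yields \enquote{$k \ge c$ separating columns}), and your observation that this shift is immaterial for the downstream computability application is accurate.
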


Informally, in order to check the block-gluing 
property, it is sufficient to prove 
that any two patterns on half-planes can be 
glued with arbitrary distance greater than $c$ 
in a configuration of $X$.

\begin{proof} \leavevmode
\begin{itemize}
\item $(\Leftarrow)$: Let us assume that $X$ verifies the second hypothesis. Let us consider 
some integer $n$, and two globally admissible
patterns $\overline{p},\overline{q}$ of $X$ on support $\llbracket 1,n \rrbracket^2$. 
Let $\textbf{u},\textbf{v}$ be two positions such that 
$\delta(\textbf{u}+\llbracket 1,n \rrbracket^2, \textbf{v}+\llbracket 1,n \rrbracket^2)\ge c$. This means that 
the two translates $\textbf{u}+\llbracket 1,n \rrbracket^2$ 
and $\textbf{v}+\llbracket 1,n \rrbracket^2$
have more than $c$ columns separating 
them or more than $c$ rows. Without loss 
of generality, we assume that we are in 
the case of separating columns, and denote by 
$k \ge c$ the exact number of columns separating $\overline{p}$ and $\overline{q}$. Since $\overline{p}$ and $\overline{q}$ are globally 
admissible, there exist $p$ and $q$ globally admissible 
patterns of $X$ on respective supports 
$\mathbb{Z}_{-} \times \mathbb{Z}$ 
and $(k,0)+ \mathbb{Z}_{+} \times \mathbb{Z}$
whose restrictions on 
$ -(n,0) + \llbracket 1,n \rrbracket^2$ 
and $\textbf{v} -\textbf{u} - (n,0) + \llbracket 1,n \rrbracket^2$ are respectively $\overline{p}$ 
and $\overline{q}$. 
By hypothesis, there exists some configuration 
$x$ of $X$ whose restrictions on $\mathbb{Z}_{-} \times \mathbb{Z}$ 
and $(k,0)+ \mathbb{Z}_{+} \times \mathbb{Z}$
are respectively $p$ and $q$. The patterns 
$\overline{p}$ and $\overline{q}$ can be found 
on $\textbf{u} + \llbracket 1, n \rrbracket ^2$ 
and $\textbf{v} + \llbracket 1, n \rrbracket ^2$
in the configuration $\sigma^{\textbf{u}} x$. 

\item $(\Rightarrow)$: Let us assume the first hypothesis on $X$ is true, and let $p$ and $q$ be two patterns on supports $\mathbb{Z}_{-} \times \mathbb{Z}$ and $(k,0)+\mathbb{Z}_{+} \times \mathbb{Z}$ 
for some $k \ge c$ (the other case is proved in a similar way). From the block-gluing 
property, for all $n \ge 0$
one can extend the restriction of $p$ on 
$\llbracket 1, n \rrbracket^2 - (n,0)$ and the restriction 
of $q$ on $\llbracket 1, n \rrbracket^2 - (n,0)+(k+1,0)$
into a configuration $x_n \in X$. 
By compactness of the set $X$ for 
the product of the discrete topology, this sequence admits 
a subsequence which converges to some $x \in X$. This $x$ verifies 
$x_{|\mathbb{Z}_{-} \times \mathbb{Z}}=p$ 
and $x_{|(k,0)+\mathbb{Z}_{+} \times \mathbb{Z}}=q$.
\end{itemize}
\end{proof}

\subsubsection{\label{subsection.computability} Algorithmic computability of 
entropy}

\begin{definition}
Let $f: \mathbb{N} \rightarrow \mathbb{N}$ 
a computable function.
A real number $x$ is said to be \textbf{computable} with rate $f$ when there exists an algorithm which, given an integer $n$ as input, outputs in at most $f(n)$ steps a rational number $r_n$ 
such that $|x-r_n|\le \frac{1}{n}$.
\end{definition}

This definition corresponds to Definition 1.3 in~\cite{Pavlov-Schraudner}. The following theorem is Theorem 1.4 in the same reference. Its proof provides an algorithm to compute $h(X)$.

\begin{thm}[\cite{Pavlov-Schraudner}]
\label{theorem.computability.entropy}
Let $X$ be a block-gluing bidimensional subshift of finite type. 
Then $h(X)$ is computable with rate $n \mapsto 2^{O(n^2)}$.
\end{thm}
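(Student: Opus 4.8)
The plan is to sandwich $h(X)$ between two quantities built from the globally admissible counts $N_m(X)$ whose gap shrinks like $O(1/m)$, and then to use block-gluing to make each $N_m(X)$ computable within time $2^{O(m^2)}$; choosing the scale $m$ as a function of the desired precision will then yield the stated rate. Throughout I assume $X\neq\emptyset$ (emptiness is detectable, and the statement is then vacuous).

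\emph{Step 1: an entropy sandwich.} Since $h(X)=\inf_m \log_2(N_m(X))/m^2$, the upper bound $h(X)\le \log_2(N_m(X))/m^2$ is immediate. For a matching lower bound I would pack patterns using the $c$-block-gluing property: from the $N_m(X)$ globally admissible patterns on $\llbracket 1,m\rrbracket^2$, choose any $k^2$ of them (with repetition), place them on the $k\times k$ array of $m\times m$ boxes spaced $c$ apart inside $\llbracket 1,k(m+c)\rrbracket^2$, and complete the result to a configuration of $X$. Distinct choices give distinct globally admissible patterns, since the chosen pattern is recovered in each grid window, so $N_{k(m+c)}(X)\ge N_m(X)^{k^2}$; dividing by $(k(m+c))^2$ and letting $k\to\infty$ gives $h(X)\ge \log_2(N_m(X))/(m+c)^2$. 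Using the crude bound $\log_2 N_m(X)\le m^2\log_2|\mathcal{A}|$, the two estimates
\[\frac{\log_2 N_m(X)}{(m+c)^2}\ \le\ h(X)\ \le\ \frac{\log_2 N_m(X)}{m^2}\]
differ by at most $\log_2|\mathcal{A}|\cdot(2cm+c^2)/(m+c)^2=O(c/m)$, which tends to $0$.

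\emph{Step 2: computing $N_m(X)$ with a time bound.} Here is where block-gluing is essential, as $N_m(X)$ is not effectively computable for general SFTs. I would show that, for $c$-block-gluing $X$, every globally admissible pattern on $\llbracket 1,m\rrbracket^2$ already occurs in some doubly periodic configuration of period $O(m+c)$: gluing the pattern to shifted copies of itself, with a gap $c$ in both directions and across the wrap-around, produces a locally admissible colouring of a torus of side $O(m+c)$, which (the side exceeding the forbidden-pattern window) lifts to a periodic point of $X$. Consequently a pattern is globally admissible \emph{if and only if} it appears in one of the finitely many locally admissible toroidal colourings of side at most $O(m+c)$, a condition decidable by inspecting $|\mathcal{A}|^{O((m+c)^2)}=2^{O(m^2)}$ candidates. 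Testing each of the $|\mathcal{A}|^{m^2}$ patterns on $\llbracket 1,m\rrbracket^2$ this way computes $N_m(X)$ in time $2^{O(m^2)}$. To approximate $h(X)$ to precision $1/n$ I would then take $m=\Theta(cn)$ so that the $O(c/m)$ gap of Step 1 is below $1/n$, compute $N_m(X)$, and output $\log_2(N_m(X))/m^2$; the running time is $2^{O(m^2)}=2^{O(n^2)}$, giving exactly the rate $n\mapsto 2^{O(n^2)}$.

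\emph{Main obstacle.} The crux is Step 2: turning block-gluing into a computable \emph{and time-bounded} handle on $N_m(X)$. The delicate point is the lemma that every globally admissible pattern embeds into a doubly periodic configuration of \emph{linearly} bounded period; this requires upgrading the pairwise gluing of Proposition~\ref{proposition.semi.plane} into a gluing along a whole toroidal grid, dealing with the simultaneous horizontal, vertical, and corner wrap-around consistency. The analogous grid-gluing upgrade is also what underlies the packing inequality $N_{k(m+c)}(X)\ge N_m(X)^{k^2}$ in Step 1. By contrast, the entropy sandwich itself and the final choice of parameters are routine once these gluing facts are in hand.
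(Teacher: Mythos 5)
First, a point of comparison: the paper does not actually prove this statement --- it is quoted as Theorem~1.4 of~\cite{Pavlov-Schraudner} --- and what it records of the proof are exactly your two ingredients: the sandwich $\frac{\log_2 N_k(X)}{(k+c)^2}\le h(X)\le\frac{\log_2 N_k(X)}{k^2}$, which is implicit in the stopping rule of the displayed algorithm, and an effective characterisation of global admissibility (Lemma~\ref{lemma.counting.block.gluing}) making $N_k(X)$ computable in time $2^{O(k^2)}$. Your Step~1 is essentially correct and matches the first ingredient: the packing inequality $N_{k(m+c)}(X)\ge N_m(X)^{k^2}$ does require upgrading the two-block definition to the simultaneous gluing of a $k\times k$ array, but this is routine (glue the blocks of one row one at a time, each time gluing the next block to the bounding rectangle of the configuration already obtained, then glue the completed rows to one another), and you correctly flag it.

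The genuine gap is in Step~2, and it is precisely the point you yourself single out as the ``main obstacle'' without resolving it. Block-gluing produces configurations of $X$ in the plane and gives you no control whatsoever over what is written in the gaps; consequently ``gluing the pattern to shifted copies of itself \ldots{} across the wrap-around'' is not an operation the hypothesis licenses: the fillings of the various gaps need not agree with one another, so you do not obtain a locally admissible colouring of a torus, let alone one of side $O(m+c)$. The known route --- and the one visible in the shape of Lemma~\ref{lemma.counting.block.gluing} --- is different: one glues many copies of the pattern along a single direction, extends to a configuration of $X$, and applies a pigeonhole argument to two equal boundary slices; this produces a certificate that is periodic in one direction only, living on a rectangle whose long side carries a factor of the form $|\mathcal{A}|^{O(c)}$ rather than being $O(m+c)$, and a further argument is still needed to show that such a cylinder certificate conversely implies global admissibility. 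For fixed $c$ this still yields candidates enumerable in time $2^{O(m^2)}$, so your final parameter choice and the claimed rate would survive; but as written, the decidability of global admissibility --- the only place where block-gluing is genuinely used --- rests on an unproved lemma whose stated form (doubly periodic embedding with linearly bounded period obtained by direct self-gluing) is not justified by the block-gluing hypothesis.
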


\begin{lemma}
\label{lemma.counting.block.gluing}
Let $X$ be a $c$-block gluing 
bidimensional subshift of finite type 
on alphabet $\mathcal{A}$. For all $k \ge 1$, 
the number $N_k (X)$ is equal to the number of $k \times k$
 patterns which appear in a $\left(|\mathcal{A}|^{2c+1} \cdot (c+k)+1 \right) \times (2c+k+2)$ locally-admissible rectangular pattern whose restrictions on the two extremal vertical (resp. 
 horizontal) sides are equal.
\end{lemma}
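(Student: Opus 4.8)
The plan is to prove the stated equality as a pair of inclusions between two sets of $k \times k$ patterns: the set $\mathcal{G}$ of globally admissible patterns, counted by $N_k(X)$, and the set $\mathcal{R}$ of patterns that occur in some locally admissible $W \times H$ rectangle with coinciding extremal columns, where $W = |\mathcal{A}|^{2c+1}(c+k)+1$ and $H = 2c+k+2$. I will treat only the ``vertical sides'' version, the horizontal one being identical after a $90^{\circ}$ rotation. Proving $\mathcal{R} \subseteq \mathcal{G}$ shows that the number of patterns in $\mathcal{R}$ is at most $N_k(X)$, and proving $\mathcal{G} \subseteq \mathcal{R}$ gives the reverse inequality, so it suffices to establish both inclusions.

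First, soundness ($\mathcal{R} \subseteq \mathcal{G}$). Given a locally admissible rectangle $R$ whose leftmost and rightmost height-$H$ columns coincide, I identify these two columns to read $R$ as one period of a horizontally periodic, locally admissible strip $S$ on $\mathbb{Z} \times \llbracket 1, H\rrbracket$ of period $W-1$. Since $H = 2c+k+2$, every $k \times k$ subpattern of $S$ lies inside $S$ with at least $c+1 > c$ rows of the strip above and below it; I would then invoke the half-plane form of block gluing (the ``resp.'' case of Proposition~\ref{proposition.semi.plane}, gluing a bottom and a top half-plane at vertical distance $\ge c$) to fill the two half-planes above and below $S$, obtaining a genuine configuration of $X$ that contains $S$. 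Consequently every $k \times k$ subpattern of $R$, being a subpattern of $S$, is globally admissible.

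Second, completeness ($\mathcal{G} \subseteq \mathcal{R}$). Starting from a configuration $x \in X$ with $x_{|\llbracket 1,k\rrbracket^2} = p$, I first repeat $p$ periodically: using $c$-block-gluing for square patterns of size $k$, I place copies of $p$ at the horizontal offsets $j\,(c+k)$ for $j \in \mathbb{Z}$, each copy lying in the central $k$ rows (consecutive copies are separated by a width-$c$ gap, so the distance is exactly $c$), and I fill the gaps and the $c+1$ buffer rows above and below by block gluing. This yields a locally admissible height-$H$ strip carrying $p$ at spacing $(c+k)$, but not yet exactly periodic, since the fillers may differ from gap to gap. To force exact periodicity I would sample this strip at the positions $j\,(c+k)$, record at each one a bounded ``gluing interface'' taking at most $|\mathcal{A}|^{2c+1}$ values, and apply pigeonhole: two of the first $|\mathcal{A}|^{2c+1}+1$ samples agree, and block gluing then lets me splice the strip at those two cut positions into a window of width exactly $W = |\mathcal{A}|^{2c+1}(c+k)+1$ whose two extremal columns coincide and which still contains $p$. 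Restricting to this window produces a good rectangle, so $p \in \mathcal{R}$.

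The main obstacle is the quantitative heart of completeness: explaining why the interface that must be matched in order to close the strip into an exactly periodic window ranges over only $|\mathcal{A}|^{2c+1}$ values, rather than over the $|\mathcal{A}|^{H}$ values of a full height-$H$ column. The point is that block gluing lets one overwrite everything except a band of width $2c+1$ around each cut, so that only that band --- an element of $\mathcal{A}^{2c+1}$ --- governs whether two cuts can be glued to one another and to themselves; pinning down this reduction, and simultaneously checking that the re-gluing keeps the rectangle locally admissible, makes the two extremal columns \emph{literally} equal, and preserves the copy of $p$, is the delicate step. By contrast, soundness is comparatively routine once the half-plane reformulation of block gluing from Proposition~\ref{proposition.semi.plane} is available.
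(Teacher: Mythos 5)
The paper never actually proves this lemma---it is imported wholesale from the proof of Theorem 1.4 of the Pavlov--Schraudner reference---so there is no in-paper argument to compare yours against; I can only assess your sketch on its own terms, and both halves have genuine gaps. In the soundness direction, you cannot invoke Proposition~\ref{proposition.semi.plane} to ``fill the two half-planes above and below $S$'': that proposition takes two \emph{globally admissible} half-plane patterns and produces a configuration agreeing with them on the two half-planes, with no control whatsoever over what lands in the gap between them---it does not let you extend a prescribed, merely \emph{locally} admissible strip $S$ into a point of $X$. Nor is this a repairable slip: a locally admissible horizontally periodic strip of height $2c+k+2$ need not embed in any configuration of $X$ at all. (Take a block-gluing $X$ whose alphabet contains a disconnected ``junk'' sub-alphabet carrying a strictly increasing vertical chain of symbols long enough to fill a column of height $2c+k+2$ but admitting no bi-infinite column; the horizontally constant junk strip is locally admissible and periodic, yet none of its symbols is globally admissible.) So under your reading of the statement (only one pair of extremal sides required to coincide) the inclusion $\mathcal{R}\subseteq\mathcal{G}$ is false, which strongly suggests the intended reading is that \emph{both} pairs of extremal sides coincide, making the rectangle a discrete torus; soundness is then immediate, since a doubly periodic locally admissible configuration is an element of $X$. (Separately, your claim that every $k\times k$ subpattern has $c+1$ rows of the strip above and below it holds only for vertically centred subpatterns.)

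In the completeness direction, the quantitative heart---which you correctly flag as the delicate step---does not close. The pigeonhole must range over whatever data decides whether two cut positions of your strip can be identified; the natural candidates are a full column of height $H=2c+k+2$ (hence $|\mathcal{A}|^{2c+k+2}$ values) or, as you propose, a band of width $2c+1$ around the cut---but such a band has $(2c+1)\cdot H$ cells and so ranges over $|\mathcal{A}|^{(2c+1)H}$ values, not $|\mathcal{A}|^{2c+1}$. Your reduction to $|\mathcal{A}|^{2c+1}$ is asserted rather than proved and is dimensionally inconsistent with your own description of the interface; moreover, the operation you lean on (``block gluing lets one overwrite everything except a band'') is not something the $c$-block-gluing property supplies: it only lets you co-locate two globally admissible patterns at distance at least $c$, and never licenses modifying an already-fixed configuration. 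Until you exhibit a gluing interface that genuinely lives in $\mathcal{A}^{2c+1}$ (or argue that the constant in the statement should be corrected), the width $|\mathcal{A}|^{2c+1}\cdot(c+k)+1$ is unjustified. The overall shape of your argument---periodic placement of copies of $p$ at spacing $c+k$, pigeonhole on interfaces, splicing---is the right one, but as written neither inclusion is established.
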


\begin{remark}
Let us note that in general the entropy 
of a bidimensional subshift of finite type is not computable at all (\cite{Hochman-Meyerovitch} Theorem 1.1 and the existence 
of non-computable right recursively enumerable 
numbers).
\end{remark}

This algorithm is as follows:
\begin{center} \begin{algorithm}[H]
\SetAlgoLined
\SetKwData{Left}{left}
\SetKwData{This}{this}
\SetKwData{Up}{up}
\SetKwFunction{Union}{Union}
\SetKwFunction{FindCompress}{FindCompress}

\SetKwInput{Input}{Input}
\SetKwInOut{Output}{Output}
\Input{An integer $n$, an alphabet $\mathcal{A}$ and a set of patterns $\mathcal{F}$ of $\mathcal{A}^{\mathbb{U}}$ for 
some finite $\mathbb{U} \subset \mathbb{Z}^2$}
\Output{A rational approximation of $h(X)$ up to $1/n$, where $X$ is the SFT on alphabet $\mathcal{A}$ defined by the set of forbidden patterns $\mathcal{F}$}
 $k \leftarrow 0$\\
 $r \leftarrow +\infty$\\
 \While {$r \geq 1/2n$}
 {
 $k \leftarrow k + 1$\\
 $m \leftarrow N_k (X)$ (this is a sub-procedure 
 using Lemma~\ref{lemma.counting.block.gluing}).
 
 $r \leftarrow $ some rational approximation up to $1/2k$ of $\frac{\log_2 (N_k(X))}{k^2} - \frac{\log_2 (N_k(X))}{(k+c)^2}$
 } 
 
 Return a rational approximation up to $1/2n$ of $\log_2 (N_k (X))/k^2$ 
 \caption{Computing the entropy of a $c$-block-gluing bidimensional SFT.}
\end{algorithm}
\end{center}

\subsection{\label{subsection.proof.block.gluing}Proof of the 
block-gluing property for domination subshifts}

It is straightforward to check that 
the domination subshift $X^D$ and the total domination subshift $X^T$
satisfy the block-gluing property, 
with $c(X^D)=1$ (just fill every cell with grey). In 
this section, we prove that 
$X^M$ and $X^{MT}$ also satisfy this property.

\begin{notation}
In the following, for all $j \in \mathbb{Z}$, 
we denote by $C_j$ the column $\{j\} \times \mathbb{Z}$ of $\mathbb{Z}^2$.
\end{notation}

\begin{thm}
\label{minimal-gluing-th}
The minimal domination subshift is block gluing and $c(X^M)=5$.
\end{thm}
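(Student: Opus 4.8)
The plan is to prove the two inequalities $c(X^M) \le 5$ and $c(X^M) \ge 5$ separately. By Proposition~\ref{proposition.semi.plane}, to establish $c(X^M)\le 5$ it suffices to show that any two globally admissible patterns $p$ and $q$ on the left and right half-planes $\mathbb{Z}_-\times\mathbb{Z}$ and $(k,0)+\mathbb{Z}_+\times\mathbb{Z}$, with $k\ge 5$, can be glued into a configuration of $X^M$. The whole difficulty is confined to the vertical strip of $k$ columns separating $p$ and $q$: I must fill these columns so that every position in the strip (and on the boundary columns of $p$ and $q$) is dominated, and so that every dominant position in the strip is either isolated or retains a private neighbour. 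The reason five columns of slack are enough is that the borders of $p$ and $q$ are already valid configurations of a minimal dominating set on their own side, so the only positions whose domination/privacy status can be disturbed are those within distance $1$ of the cut.

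First I would describe an explicit filling of the free columns, in the spirit of the completion algorithm from the proof of Lemma~\ref{lemma.comparison.minimal}. The idea is to place, immediately adjacent to $p$ (and symmetrically adjacent to $q$), a column of all-white cells so as not to create any new neighbour for the dominant cells on the boundary of $p$; then to insert a correction column which turns grey exactly those cells whose neighbour on the $p$-side is as yet undominated, thereby repairing domination without destroying minimality; and to leave one or more buffer columns of white in the middle to decouple the two sides entirely. Keeping track of the exact count of columns consumed by these operations — one white separation column on each side, one correction column on each side, and a central decoupling column — is precisely what forces the requirement $k\ge 5$, and checking that no dominant cell in these columns loses its private neighbour is the main bookkeeping step. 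I expect the principal obstacle to be the verification that the correction columns do not interfere with each other or with the private-neighbour structure already present in $p$ and $q$; this is where the column-by-column argument, using the notation $C_j$, has to be carried out carefully, distinguishing the cells dominated from inside $p$ from those that must be dominated by the inserted grey cells.

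For the lower bound $c(X^M)\ge 5$, I would exhibit a concrete pair of globally admissible half-plane patterns $p$ and $q$ that cannot be glued at distance $4$. The construction should arrange the dominant cells near the boundary of $p$ and of $q$ so that their only available private neighbours lie in the gap; if only four columns are available, a counting or parity obstruction shows that some dominant cell is forced to lose its unique private neighbour no matter how the gap is filled, so no completion in $X^M$ exists. Concretely I would place patterns whose boundary columns consist of dominant cells each relying on a private neighbour one step into the gap, together with white cells that must be dominated from inside the gap, and show these two demands are incompatible across four columns but become compatible across five. This extremal example is the delicate part: it must simultaneously be globally admissible on each half-plane and rigid enough that the gluing genuinely fails, which typically requires drawing the configuration explicitly (as in the figures accompanying Lemma~\ref{lemma.comparison.minimal}) and arguing by exhaustion over the finitely many ways to fill the four intermediate columns.
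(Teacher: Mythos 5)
Your overall strategy coincides with the paper's: reduce to gluing half-plane patterns via Proposition~\ref{proposition.semi.plane}, fill the intermediate columns by an explicit algorithm, and exhibit a pair of globally admissible half-plane patterns that cannot be glued at distance $4$. The gap is in the filling itself, which is the heart of the upper bound: the sandwich you propose (all-white column next to $p$, a correction column, an all-white central buffer, then symmetrically on the $q$ side) does not in general produce a configuration of $X^M$. Two concrete failures. First, a white cell $\textbf{u}$ of the boundary column $C_0$ of $p$ may have no grey neighbour inside $p$ at all (take $p$ to be the restriction of a configuration whose grey columns lie just to the right of the cut); such a cell can only be dominated from $C_1$, but you force $C_1$ to be all white, so $\textbf{u}$ is left undominated. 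Second, take $p$ and $q$ whose boundary columns are entirely grey (e.g.\ the periodic stripe pattern with columns $C_j$, $-j\equiv 0,1 \pmod 4$, all grey, which is globally admissible). Then every cell of $C_1$ is already dominated, so your correction column is entirely white, and likewise on the $q$ side; consequently every cell of the three middle columns is white with only white neighbours, hence undominated. An all-white ``decoupling'' buffer is incompatible with $X^M$: every white cell, including those in the middle of the gap, must acquire a grey neighbour.

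The paper avoids both problems by using a propagating rule rather than a fixed sandwich: for $j$ from $1$ to $k-3$, a cell $\textbf{u}\in C_j$ is made grey exactly when $\textbf{u}-(1,0)$, $\textbf{u}-(1,\pm1)$ and $\textbf{u}-(2,0)$ are all white (i.e.\ the left neighbour is neither dominant nor yet dominated), symmetrically for $C_k$ and $C_{k-1}$ from the right, and the central column $C_{k-2}$ gets a combined rule followed by a final sweep inserting grey cells to repair any remaining undominated positions there. Under this rule each newly placed grey cell automatically has its left (or right) neighbour as a private neighbour, and domination is repaired one column at a time, so no column can be left wholly undominated. Your lower-bound sketch is in the right spirit but stops short of a construction; the paper's is concrete and clean: with the stripe patterns above, every non-isolated dominant cell of $C_0$ and of $C_5$ needs its private neighbour in the gap, which forces all four intermediate columns to be white and leaves $C_2$ and $C_3$ undominated, with no case analysis over fillings required.
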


\noindent \textbf{Idea of the proof:}
\textit{In order to simplify the proof of the block-gluing property, we rely on 
Proposition~\ref{proposition.semi.plane}.
The proof of the block-gluing property for 
two half-plane patterns consists in 
determining successively the intermediate columns from the patterns towards
the "center" (chosen to be column $C_{k-2}$, for concision). The completion follows an algorithm which ensures 
that, when the number of intermediate columns 
is great enough, any added dominant element 
has a private neighbour in an already 
constructed column or is isolated. This ensures that 
the rules of the subshift are not broken.}

\begin{proof} \leavevmode

\begin{figure}[h!]
\[\begin{tikzpicture}[scale=0.4]
\draw (0,0) grid (1,3);
\draw (-1,1) rectangle (0,2);
\node at (0.5,-0.5) {$\vdots$};
\node at (0.5,4) {$\vdots$};

\draw[-latex] (2,1.5) -- (4,1.5);

\begin{scope}[xshift=6cm]
\node at (0.5,-0.5) {$\vdots$};
\node at (0.5,4) {$\vdots$};
\fill[gray!85] (1,1) rectangle (2,2);
\draw (0,0) grid (1,3);
\draw (-1,1) grid (2,2);
\end{scope}
\end{tikzpicture}\]
\caption{\label{figure.intermediate.columns} Illustration of the rule for filling the non-central intermediate columns for $X^{M}$.}
\end{figure}
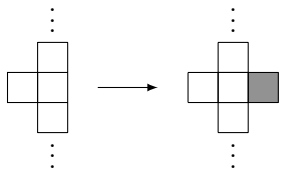

\begin{figure}[h!]
\centering
\begin{tikzpicture}[scale=0.3]

\fill[gray!85] (0,3) rectangle (1,8);
\fill[gray!85] (0,0) rectangle (1,1);
\fill[gray!85] (1,2) rectangle (2,3);

\fill[gray!85] (10,3) rectangle (11,8);
\fill[gray!85] (8,1) rectangle (10,2);

\foreach \x in {-1,...,2} {\draw[mydashed] (\x, -1) -- (\x, 0);}
\foreach \x in {-1,...,2} {\draw[mydashed] (\x, 8) -- (\x, 9);}
\foreach \x in {8,...,11} {\draw[mydashed] (\x, -1) -- (\x, 0);}
\foreach \x in {8,...,11} {\draw[mydashed] (\x, 8) -- (\x, 9);}

\foreach \y in {0,...,8} {\draw[mydashed] (-1,\y) -- (-2,\y);}
\foreach \y in {0,...,8} {\draw[mydashed] (11,\y) -- (12,\y);}

\draw (-1,0) grid (2,8);
\draw (8,0) grid (11,8);
\node at (-3,4) {$p$};
\node at (13,4) {$q$};
\node[scale=0.625] at (2.6,-2) {$C_1$};
\node[scale=0.625] at (1.6,-2) {$C_0$};
\node[scale=0.625] at (8.95,-2) {$C_{k+1}$};
\node[scale=0.625] at (7.7,-2) {$C_k$};
\node[scale=0.7] at (5,-2) {$\hdots$};

\draw[line width =0.5mm] (2,-1.5) -- (2,9.5);
\draw[line width =0.5mm] (8,-1.5) -- (8,9.5);

\node[scale=1.25] at (5,-3.5) {$(0)$};

\begin{scope}[xshift=20cm]

\fill[gray!85] (0,3) rectangle (1,8);
\fill[gray!85] (0,0) rectangle (1,1);
\fill[gray!85] (1,2) rectangle (2,3);
\fill[gray!85] (10,3) rectangle (11,8);
\fill[gray!85] (8,1) rectangle (10,2);

\fill[gray!85] (3,3) rectangle (4,7);
\fill[gray!85] (3,1) rectangle (4,2);
\fill[gray!85] (7,3) rectangle (8,7);

\foreach \x in {-1,...,2} {\draw[mydashed] (\x, -1) -- (\x, 0);}
\foreach \x in {-1,...,2} {\draw[mydashed] (\x, 8) -- (\x, 9);}
\foreach \x in {8,...,11} {\draw[mydashed] (\x, -1) -- (\x, 0);}
\foreach \x in {8,...,11} {\draw[mydashed] (\x, 8) -- (\x, 9);}

\foreach \x in {3,4,5,6,7} {\draw[mydashed] (\x, -1) -- (\x, 0);}
\foreach \x in {3,4,5,6,7} {\draw[mydashed] (\x, 8) -- (\x, 9);}

\foreach \y in {0,...,8} {\draw[mydashed] (-1,\y) -- (-2,\y);}
\foreach \y in {0,...,8} {\draw[mydashed] (11,\y) -- (12,\y);}
\draw (-1,0) grid (2,8); 
\draw (8,0) grid (11,8);
\draw (2,0) grid (5,8);
\draw (6,0) grid (8,8);

\draw[white, line width=0.5mm] (3.03,0) -- (3.96,0); 
\draw[white, line width=0.5mm] (3.03,8) -- (3.96,8); 
\draw[white, line width=0.5mm] (5.03,8) -- (5.96,8); 
\draw[white, line width=0.5mm] (5.03,0) -- (6.96,0); 
\draw[white, line width=0.5mm] (6,-1) -- (6,0.95);

\draw[mydashed] (6, -1) -- (6, 1);

\draw[line width =0.5mm] (2,-1.5) -- (2,9.5);
\draw[line width =0.5mm] (8,-1.5) -- (8,9.5);

\node[scale=0.7] at (6,-2) {$C_{k-2}$};
\node[scale=1.25] at (5,-3.5) {$(1)$};

\end{scope}

\begin{scope}[yshift=-15cm]

\fill[gray!85] (0,3) rectangle (1,8);
\fill[gray!85] (0,0) rectangle (1,1);
\fill[gray!85] (1,2) rectangle (2,3);
\fill[gray!85] (10,3) rectangle (11,8);
\fill[gray!85] (8,1) rectangle (10,2);

\fill[gray!85] (3,3) rectangle (4,7);
\fill[gray!85] (3,1) rectangle (4,2);
\fill[gray!85] (7,3) rectangle (8,7);

\fill[gray!85] (5,1) rectangle (6,3);

\foreach \x in {-1,...,2} {\draw[mydashed] (\x, -1) -- (\x, 0);}
\foreach \x in {-1,...,2} {\draw[mydashed] (\x, 8) -- (\x, 9);}
\foreach \x in {8,...,11} {\draw[mydashed] (\x, -1) -- (\x, 0);}
\foreach \x in {8,...,11} {\draw[mydashed] (\x, 8) -- (\x, 9);}

\foreach \x in {3,4,5,6,7} {\draw[mydashed] (\x, -1) -- (\x, 0);}
\foreach \x in {3,4,5,6,7} {\draw[mydashed] (\x, 8) -- (\x, 9);}

\foreach \y in {0,...,8} {\draw[mydashed] (-1,\y) -- (-2,\y);}
\foreach \y in {0,...,8} {\draw[mydashed] (11,\y) -- (12,\y);}
\draw (-1,0) grid (2,8); 
\draw (8,0) grid (11,8);
\draw (2,0) grid (8,8);

\draw[white, line width=0.5mm] (3.03,0) -- (3.96,0); 
\draw[white, line width=0.5mm] (3.03,8) -- (3.96,8); 
\draw[white, line width=0.5mm] (5.03,8) -- (5.96,8); 
\draw[white, line width=0.5mm] (5.03,0) -- (6.96,0); 
\draw[white, line width=0.5mm] (6,-1) -- (6,0.95);

\draw[mydashed] (6,-1) -- (6,1);
\draw[line width =0.5mm] (2,-1.5) -- (2,9.5);
\draw[line width =0.5mm] (8,-1.5) -- (8,9.5);

\node[scale=1.25] at (5,-3) {$(2)$};
\end{scope}

\begin{scope}[yshift=-15cm,xshift=20cm]

\fill[gray!85] (0,3) rectangle (1,8);
\fill[gray!85] (0,0) rectangle (1,1);
\fill[gray!85] (1,2) rectangle (2,3);
\fill[gray!85] (10,3) rectangle (11,8);
\fill[gray!85] (8,1) rectangle (10,2);

\fill[gray!85] (3,3) rectangle (4,7);
\fill[gray!85] (3,1) rectangle (4,2);

\fill[gray!85] (7,3) rectangle (8,7);

\fill[gray!85] (5,1) rectangle (6,3);
\fill[gray!85] (5,4) rectangle (6,5);
\fill[gray!85] (5,6) rectangle (6,7);

\foreach \x in {-1,...,2} {\draw[mydashed] (\x, -1) -- (\x, 0);}
\foreach \x in {-1,...,2} {\draw[mydashed] (\x, 8) -- (\x, 9);}
\foreach \x in {8,...,11} {\draw[mydashed] (\x, -1) -- (\x, 0);}
\foreach \x in {8,...,11} {\draw[mydashed] (\x, 8) -- (\x, 9);}

\foreach \x in {3,4,5,6,7} {\draw[mydashed] (\x, -1) -- (\x, 0);}
\foreach \x in {3,4,5,6,7} {\draw[mydashed] (\x, 8) -- (\x, 9);}

\foreach \y in {0,...,8} {\draw[mydashed] (-1,\y) -- (-2,\y);}
\foreach \y in {0,...,8} {\draw[mydashed] (11,\y) -- (12,\y);}
\draw (-1,0) grid (2,8); 
\draw (8,0) grid (11,8);
\draw (2,0) grid (8,8);

\draw[white, line width=0.5mm] (3.03,0) -- (3.96,0); 
\draw[white, line width=0.5mm] (3.03,8) -- (3.96,8); 
\draw[white, line width=0.5mm] (5.03,8) -- (5.96,8); 
\draw[white, line width=0.5mm] (5.03,0) -- (6.96,0); 
\draw[white, line width=0.5mm] (6,-1) -- (6,0.95);

\draw[mydashed] (6,-1) -- (6,1);

\draw[line width =0.5mm] (2,-1.5) -- (2,9.5);
\draw[line width =0.5mm] (8,-1.5) -- (8,9.5);

\node[scale=1.25] at (5,-3) {$(3)$};
\end{scope}
\end{tikzpicture}
\caption{\label{figure.completing.algorithm} Illustration of the algorithm
filling the intermediate columns between two half-plane patterns $p$ and $q$ for the minimal domination.\\
(0) Initial setting of the two patterns.\\
(i) After Step i of the algorithm\\
Some cells between $p$ and $q$ are not forced by these patterns: we left them 
non-filled. We chose $k=6$, still the proof works with $k=5$.}
\end{figure}

\begin{itemize}
\item \textbf{Filling the intermediate columns 
between two half-plane patterns.}

Let $p$ and $q$ be two patterns respectively on 
$\mathbb{Z}_{-} \times \mathbb{Z}$ and 
$\mathbb{Z}_{+} \times \mathbb{Z}$ (the 
proof for the other case is similar). Let 
us determine a configuration of $\mathcal{A}_0^{\mathbb{Z}^2}$ such that $x_{|\mathbb{Z}_{-} \times \mathbb{Z}}=p$ and $x_{|(k,0)+\mathbb{Z}_{+} \times \mathbb{Z}}=q$. 
The intermediate 
columns $C_1,...,C_k$ are determined by the following algorithm:

\begin{enumerate}
\vspace{-0,1cm}
\setlength{\parskip}{0pt}
\setlength{\itemsep}{3pt}
\item \textbf{Filling the intermediate columns, from $C_1$ to $C_{k-3}$, then $C_k,C_{k-1}$.} 

Successively, for all $j$ 
from $1$ to $k-3$, we determine the column 
$C_j$ according to the following rule: 
for all $\textbf{u} \in C_j$, $x_{\textbf{u}}$ 
is $\begin{tikzpicture}[scale=0.3,baseline=0.4mm]
\fill[gray!85] (0,0) rectangle (1,1);
\draw (0,0) rectangle (1,1);
\end{tikzpicture}$ when $x_{\textbf{u}-(1,0)}$, 
$x_{\textbf{u}-(1,1)}$, $x_{\textbf{u}-(1,-1)}$ 
and $x_{\textbf{u}-(2,0)}$ are $\begin{tikzpicture}
[scale=0.3,baseline=0.4mm]
\draw (0,0) rectangle (1,1);
\end{tikzpicture}$ (see Figure~\ref{figure.intermediate.columns}). Else, $x_{\textbf{u}}$ is set to $\begin{tikzpicture}[scale=0.3,baseline=0.4mm]
\draw (0,0) rectangle (1,1);
\end{tikzpicture}$.
Similarly, for $j=k$ and then $j=k-1$, 
we determine $x$ on any position $x_{\textbf{u}}$ 
for $\textbf{u} \in C_j$ by applying 
a symmetrical rule: $x_{\textbf{u}}$ 
is $\begin{tikzpicture}[scale=0.3,baseline=0.4mm]
\fill[gray!85] (0,0) rectangle (1,1);
\draw (0,0) rectangle (1,1);
\end{tikzpicture}$ when $x_{\textbf{u}+(1,0)}$, 
$x_{\textbf{u}+(1,1)}$, $x_{\textbf{u}+(1,-1)}$ 
and $x_{\textbf{u}+(2,0)}$ are
 $\begin{tikzpicture}[scale=0.3,baseline=0.4mm]
\draw (0,0) rectangle (1,1);
\end{tikzpicture}$. Else $x_{\textbf{u}}$ is set to
$\begin{tikzpicture}[scale=0.3,baseline=0.4mm]
\draw (0,0) rectangle (1,1);
\end{tikzpicture}$.


\item \textbf{The central column $C_{k-2}$.} 

We now 
determine $x$ on the central column $C_{k-2}$. For all $\textbf{u} \in C_{k-2}$, $x_{\textbf{u}}$ 
is $\begin{tikzpicture}[scale=0.3,baseline=0.4mm]
\fill[gray!85] (0,0) rectangle (1,1);
\draw (0,0) rectangle (1,1);

\end{tikzpicture}$ when $x_{\textbf{u}+(1,0)}$, 
$x_{\textbf{u}+(1,1)}$, $x_{\textbf{u}+(1,-1)}$ 
and $x_{\textbf{u}+(2,0)}$ are equal to 
$\begin{tikzpicture}[scale=0.3,baseline=0.4mm]
\draw (0,0) rectangle (1,1);
\end{tikzpicture}$,
or when $x_{\textbf{u}-(1,0)}$, 
$x_{\textbf{u}-(1,1)}$, $x_{\textbf{u}-(1,-1)}$ 
and $x_{\textbf{u}-(2,0)}$ are equal to 
$\begin{tikzpicture}[scale=0.3,baseline=0.4mm]
\draw (0,0) rectangle (1,1);
\end{tikzpicture}$. Else it is 
$\begin{tikzpicture}[scale=0.3,baseline=0.4mm]
\draw (0,0) rectangle (1,1);
\end{tikzpicture}$.

\item \textbf{Eliminating non-domination errors 
in the central column.}
Choose any position $\textbf{u}_0 \in C_{k-2}$ and check if 
this position has a symbol $\begin{tikzpicture}[scale=0.3,baseline=0.4mm]
\fill[gray!85] (0,0) rectangle (1,1);
\draw (0,0) rectangle (1,1);
\end{tikzpicture}$ in its neighbourhood. 
If not, then set the symbol $\begin{tikzpicture}[scale=0.3,baseline=0.4mm]
\fill[gray!85] (0,0) rectangle (1,1);
\draw (0,0) rectangle (1,1);
\end{tikzpicture}$ on this position. Repeat this from $\textbf{u}_0+(0,1)$ upwards, and in parallel from $\textbf{u}_0-(0,1)$ downwards.

\end{enumerate}

See an illustration of this algorithm on 
Figure~\ref{figure.completing.algorithm}.

\item \textbf{The obtained configuration 
is in $X^M$.}

We have to check that the configuration $x$ we constructed satisfies the local 
rules of the minimal domination subshift.

\begin{enumerate}
\item \textbf{Local rules are verified inside 
the half-planes.}

By hypothesis, the patterns $p$ and $q$ 
are globally admissible in $X^M$. As a consequence, for all $\textbf{u}$ 
in $\rrbracket - \infty,-2\rrbracket \times \mathbb{Z}$ or $\llbracket k+3,+\infty\llbracket \times \mathbb{Z}$, $x_{|\textbf{u}+\llbracket - 2,2 \rrbracket^2}$ is not a forbidden pattern.
We have left to check that no forbidden patterns are created through the execution 
of the algorithm described in the first point 
of the proof.

\item \textbf{Every position outside $\text{supp}(p) \cup \text{supp}(q)$ and not in $S$ is dominated.}

In the columns $C_{-1}$ and $C_{k+2}$, this 
comes from the fact that the patterns $p$ and $q$ 
are globally admissible. For $j$ between $0$ and $k-3$, 
and $\textbf{u} \in C_j$, if 
$\textbf{u}$ is not dominated by a position 
in $C_j$ or $C_{j-1}$
the position $\textbf{u}+(1,0)$ is the 
symbol $\begin{tikzpicture}[scale=0.3,baseline=0.4mm]
\fill[gray!85] (0,0) rectangle (1,1);
\draw (0,0) rectangle (1,1);
\end{tikzpicture}$ (by the first and second steps of the algorithm), and thus $\textbf{u}$ 
is dominated. A symmetrical reasoning 
works for the positions in the columns $C_{k+1},C_{k},C_{k-1}$. For a position 
in the central column $C_{k-2}$, this is guaranteed by Step 3.

\item \textbf{Every dominant position outside $\text{supp}(p) \cup \text{supp}(q)$ is isolated or has a private neighbour not in $S$.}

Let us consider a non-isolated dominant position $\textbf{u}$.
\begin{enumerate}
\item If it lies in $C_{-1}$ (resp. $C_{k+2}$), 
$\textbf{u}$ has a private neighbour
in a configuration of $X$ 
that extends $p$ (resp. $q$). 
If this private neighbour is in column $C_{-2}$ 
or $C_{-1}$ (resp. $C_{k+2}$ or $C_{k+3}$), then 
it stays a private neighbour of $\textbf{u}$ in $x$. If it is $\textbf{u} + (1,0)$ (resp. $\textbf{u} - (1,0)$), then it stays 
a private neighbour in $x$: since this position 
is dominated by $\textbf{u}$ according to the first step 
of the algorithm (resp. second step), it 
is not dominated in $x$ by a position in $C_0$ 
(resp. $C_{k+1}$). The same reasoning is applied
to positions in columns $C_0$ and $C_{k+1}$.
\item In the other columns $C_j$ for $j < k-2$,
the first step guarantees, for any position $\textbf{u}$ 
in $C_j$ that is a dominant position, that the position 
$\textbf{u}-(1,0)$ is a private neighbour. 
A similar reasoning applies to column $C_{k}$ 
and $C_{k-1}$.
\item If $\textbf{u}$ is in 
the column $C_{k-2}$, it means that it 
was introduced in either the second or the third step, 
meaning that it has a private neighbour in 
column $C_{k-3}$ or $C_{k-1}$. 
\end{enumerate}
\end{enumerate}

\item \textbf{The subshift $X^M$ is not $4$-block-gluing.}

We consider the two half-plane patterns $p$ and $q$ on respective supports 
$\mathbb{Z}_{-} \times \mathbb{Z}$ and 
$\mathbb{Z}_{+} \times \mathbb{Z}$ such that
for all $j \le 0$, if $-j \equiv 0,1 [4]$,
then for all $\textbf{u} \in C_j$, $p_{\textbf{u}}$ is $\begin{tikzpicture}[scale=0.3,baseline=0.4mm]
\fill[gray!85] (0,0) rectangle (1,1);
\draw (0,0) rectangle (1,1);
\end{tikzpicture}$, else for all $\textbf{u} \in C_j$, it is $\begin{tikzpicture}[scale=0.3,baseline=0.4mm]
\draw (0,0) rectangle (1,1);
\end{tikzpicture}$, and $q$ is obtained from $p$ 
by symmetry. It is easy to see that these patterns are globally admissible. We leave 4 columns between $p$ and $q$ (see \autoref{minimal-4-gluing})
To ensure that the dominant positions in columns 0 and 5, which are not isolated, have private neighbours, every cell of the four middle columns needs to be $\begin{tikzpicture}[scale=0.3,baseline=0.4mm]
\draw (0,0) rectangle (1,1);
\end{tikzpicture}$, as in \autoref{minimal-4-gluing}. This filling implies that the cells in column 2 and 3, which are not dominant, are also not dominated. This shows that the subshift is not 4 block gluing.

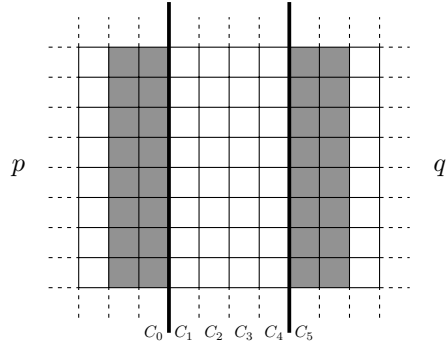
\begin{figure}[h!]

\[\begin{tikzpicture}[scale=0.4]
\fill[gray!85] (1,0) rectangle (3,8);
\fill[gray!85] (7,0) rectangle (9,8);

\foreach \x in {0,...,2} {\draw[mydashed] (\x, -1) -- (\x, 0);}
\foreach \x in {0,...,2} {\draw[mydashed] (\x, 8) -- (\x, 9);}
\foreach \x in {8,...,10} {\draw[mydashed] (\x, -1) -- (\x, 0);}
\foreach \x in {8,...,10} {\draw[mydashed] (\x, 8) -- (\x, 9);}

\foreach \x in {0,...,8} {\draw[mydashed] (-1,\x) -- (0,\x);}
\foreach \x in {0,...,8} {\draw[mydashed] (10,\x) -- (11,\x);}
\draw (0,0) grid (10,8); 

\node at (-2,4) {$p$};
\node at (12,4) {$q$};
\foreach \x in {4,...,6} {\draw[mydashed] (\x, -1) -- (\x, 0);}
\foreach \x in {4,...,6} {\draw[mydashed] (\x, 8) -- (\x, 9);}

\node[scale=0.625] at (2.5,-1.5) {$C_0$};
\node[scale=0.625] at (3.5,-1.5) {$C_1$};
\node[scale=0.625] at (4.5,-1.5) {$C_2$};
\node[scale=0.625] at (5.5,-1.5) {$C_3$};
\node[scale=0.625] at (6.5,-1.5) {$C_4$};
\node[scale=0.625] at (7.5,-1.5) {$C_5$};

\draw[line width =0.5mm] (3,-1.5) -- (3,9.5);
\draw[line width =0.5mm] (7,-1.5) -- (7,9.5);
\end{tikzpicture}\]
	
\caption{Illustration of the fact that $X^M$
is not $4$-block-gluing: when attempting to glue $p$ and $q$, ensuring the existence of private neighbours according to the rules of $X^M$ 
forces the presence of undominated positions 
(coloured in light grey). It is also a counter-example for $X^{MT}$ being 4-block-gluing.}
\label{minimal-4-gluing}
\end{figure}
As a consequence $c(X^M) >4$. Since 
it is $5$-block-gluing, $c(X^M) = 5$.
\end{itemize}
\end{proof}

\begin{thm}
The minimal total domination subshift is block gluing and $c(X^{MT}) = 5$.
\label{total-minimum-gluing-th}
\end{thm}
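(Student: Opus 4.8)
The plan is to mirror the structure of the proof of \autoref{minimal-gluing-th}, adapting the completion algorithm to the stronger requirement that \emph{every} dominant position (not merely the non-isolated ones) must possess a private neighbour. By \autoref{proposition.semi.plane} it again suffices to glue two half-plane patterns $p$ on $\mathbb{Z}_{-} \times \mathbb{Z}$ and $q$ on $(k,0)+\mathbb{Z}_{+} \times \mathbb{Z}$, for any $k \ge 5$, into a configuration of $X^{MT}$. First I would fill the intermediate columns $C_1,\dots,C_{k-3}$ from left to right using the rule of \autoref{minimal-gluing-th}, symmetrically fill $C_k$ then $C_{k-1}$, and use $C_{k-2}$ as the central merging column. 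The filling rule again guarantees that each newly placed dominant cell in a non-central column $C_j$ has its left (resp.\ right) neighbour in $C_{j-1}$ (resp.\ $C_{j+1}$) as a private neighbour, so the total-domination minimality rule is preserved there.

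The substantive difference is that in $X^{MT}$ there are no isolated dominant positions: every dominant cell must be dominated \emph{and} retain a private neighbour. Accordingly I would replace Step~3 of the previous algorithm. Rather than merely placing a grey cell on any undominated position in $C_{k-2}$, I would, scanning outward from a chosen $\textbf{u}_0 \in C_{k-2}$, ensure both that each position is dominated and that any grey cell inserted to repair a total-domination defect itself acquires a private neighbour and dominates its own cause of insertion. The extra two columns of slack afforded by $c=5$ (versus the minimum $4$ of the counterexample) give room to place a companion dominant cell whose only role is to serve as the private neighbour, without destroying private neighbours of already-fixed cells; this is exactly the role played by columns $C_{k-3}$ and $C_{k-1}$ flanking the centre.

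I would then verify the three local conditions in the order used before. Admissibility inside the half-planes is immediate from global admissibility of $p$ and $q$. Total domination of every position outside $\operatorname{supp}(p)\cup\operatorname{supp}(q)$ follows columnwise: in $C_j$ for $j \le k-3$ a position undominated from the left forces its right neighbour in $C_{j+1}$ to be grey, symmetrically for $C_{k-1},C_k$, and the central column is handled by the modified Step~3. For the private-neighbour condition, a non-central dominant position at column $C_j$ has a private neighbour in $C_{j-1}$ (resp.\ $C_{j+1}$) by the filling rule, exactly as in the proof of \autoref{lemma.comparison.minimal.total}; the central-column dominant cells are dealt with by construction in the modified step. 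Finally, the lower bound $c(X^{MT}) > 4$ is supplied directly: the configuration exhibited in \autoref{minimal-4-gluing} is simultaneously a counterexample for $X^{MT}$, since forcing private neighbours for the non-isolated dominant positions in columns $C_0$ and $C_5$ again empties columns $C_1$ through $C_4$ and leaves $C_2,C_3$ undominated. Combining with $5$-block-gluing yields $c(X^{MT})=5$.

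The main obstacle I anticipate is the redesign of the central-column step: in $X^{MT}$ one cannot simply stamp isolated grey cells to fix domination, because each such cell then needs its own private neighbour, and naively adding that neighbour risks destroying the private neighbour of a nearby dominant cell already placed during the outward scan. Carefully choosing the companion cell to lie in the adjacent column $C_{k-3}$ or $C_{k-1}$, and scanning in a fixed order so that each repair is local and non-interfering, is where the argument must be made precise; this is the analogue of the delicate Step~iii in \autoref{lemma.comparison.minimal.total}, and it is why the threshold is $5$ rather than $4$.
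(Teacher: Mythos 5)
Your overall architecture matches the paper's (reduce to half-planes via Proposition~\ref{proposition.semi.plane}, fill columns $C_1,\dots,C_{k-3}$ left-to-right and $C_k,C_{k-1}$ right-to-left, merge in $C_{k-2}$, reuse the pattern of Figure~\ref{minimal-4-gluing} for the lower bound), but there is a genuine gap in where you localise the adaptation to total domination. You keep the filling rule of Theorem~\ref{minimal-gluing-th} verbatim for the non-central columns and only redesign the central-column repair step. That rule places a grey cell at $\textbf{u}\in C_j$ precisely when $x_{\textbf{u}-(1,0)}$, $x_{\textbf{u}-(1,\pm 1)}$ and $x_{\textbf{u}-(2,0)}$ are all white; consequently every grey cell it creates has a white left neighbour, forces its right neighbour in $C_{j+1}$ to be white (since that cell's rule requires $x_{\textbf{u}}$ to be white), and may well have white vertical neighbours too. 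Such a cell is an isolated dominant position. This is admissible in $X^M$ but forbidden in $X^{MT}$, where every vertex of $S$ must itself have a neighbour in $S$; and, symmetrically, a grey cell of $C_0$ (coming from $p$) that is undominated within $p$ will never receive a grey right neighbour under your rule. So isolation errors arise throughout the intermediate columns, not only in $C_{k-2}$, and your central-column repair cannot reach them.

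The paper's fix is precisely to change the filling rule for \emph{all} intermediate columns: the condition on $x_{\textbf{u}-(1,0)}$ is dropped, so that $x_{\textbf{u}}$ is made grey whenever $x_{\textbf{u}-(1,1)}$, $x_{\textbf{u}-(1,-1)}$ and $x_{\textbf{u}-(2,0)}$ are white, regardless of whether $x_{\textbf{u}-(1,0)}$ is grey. An undominated cell to the left --- dominant or not --- then acquires a grey right neighbour, which simultaneously dominates it and prevents isolation, while the private-neighbour bookkeeping still goes through ($\textbf{u}-(1,0)$ remains a private neighbour of $\textbf{u}$ because the cells adjacent to it are forced white). The paper's central-column repair is also different in detail from what you sketch: rather than placing a companion in $C_{k-3}$ or $C_{k-1}$, it greys the cell $\textbf{u}+(0,1)$ directly above an undominated $\textbf{u}\in C_{k-2}$, and $\textbf{u}$ itself serves as the private neighbour of the inserted cell. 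Your companion-in-adjacent-column idea might be made to work, but as stated it is not precise enough to check that it does not destroy the private neighbours already guaranteed in $C_{k-3}$ and $C_{k-1}$; in any case the missing modification of the non-central filling rule is the essential omission.
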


\noindent \textbf{Idea of the proof:}
\textit{We follow the same 
scheme as in the proof of Theorem~\ref{minimal-gluing-th}, 
except that we have to take into account 
the variations in the definition of the 
subshift $X^{MT}$. For the sake 
of readability, we reproduce the structure 
of the proof.}

\begin{proof}

\begin{figure}[h!]

\centering
\begin{tikzpicture}[scale=0.3]


\fill[gray!85] (-1,2) rectangle (1,3);
\fill[gray!85] (-1,0) rectangle (1,1);
\fill[gray!85] (-1,5) rectangle (1,6);
\fill[gray!85] (-1,7) rectangle (1,8);
\fill[gray!85] (8,4) rectangle (9,5);
\fill[gray!85] (9,1) rectangle (11,2);
\fill[gray!85] (10,3) rectangle (11,4);
\fill[gray!85] (9,6) rectangle (11,7);

\foreach \x in {-1,...,2} {\draw[mydashed] (\x, -1) -- (\x, -2);}
\foreach \x in {-1,...,2} {\draw[mydashed] (\x, 9) -- (\x, 10);}
\foreach \x in {8,...,11} {\draw[mydashed] (\x, -1) -- (\x, -2);}
\foreach \x in {8,...,11} {\draw[mydashed] (\x, 9) -- (\x, 10);}

\foreach \y in {-1,...,9} {\draw[mydashed] (-1,\y) -- (-2,\y);}
\foreach \y in {-1,...,9} {\draw[mydashed] (11,\y) -- (12,\y);}

\draw (-1,-1) grid (2,9);
\draw (8,-1) grid (11,9);
\node at (-3,4) {$p$};
\node at (13,4) {$q$};

\node[scale=0.625] at (2.6,-3) {$C_1$};
\node[scale=0.625] at (1.6,-3) {$C_0$};
\node[scale=0.625] at (8.95,-3) {$C_{k+1}$};
\node[scale=0.625] at (7.7,-3) {$C_k$};
\node[scale=0.7] at (5,-3) {$\hdots$};

\draw[line width =0.5mm] (2,-2.5) -- (2,10.5);
\draw[line width =0.5mm] (8,-2.5) -- (8,10.5);

\node[scale=1.25] at (5,-4.5) {$(0)$};

\begin{scope}[xshift=20cm]

\fill[gray!85] (-1,2) rectangle (1,3);
\fill[gray!85] (-1,0) rectangle (1,1);
\fill[gray!85] (-1,5) rectangle (1,6);
\fill[gray!85] (-1,7) rectangle (1,8);
\fill[gray!85] (8,4) rectangle (9,5);
\fill[gray!85] (9,1) rectangle (11,2);
\fill[gray!85] (10,3) rectangle (11,4);
\fill[gray!85] (9,6) rectangle (11,7);

\fill[gray!85] (2,1) rectangle (4,2);
\fill[gray!85] (2,3) rectangle (3,5);
\fill[gray!85] (2,6) rectangle (4,7);

\fill[gray!85] (7,4) rectangle (8,5);
\fill[gray!85] (6,0) rectangle (8,1);
\fill[gray!85] (6,2) rectangle (8,3);
\fill[gray!85] (6,7) rectangle (8,8);

\foreach \x in {-1,...,2} {\draw[mydashed] (\x, -1) -- (\x,-2);}
\foreach \x in {-1,...,2} {\draw[mydashed] (\x, 9) -- (\x, 10);}
\foreach \x in {8,...,11} {\draw[mydashed] (\x, -1) -- (\x, -2);}
\foreach \x in {8,...,11} {\draw[mydashed] (\x, 9) -- (\x, 10);}

\foreach \x in {3,4,5,6,7} {\draw[mydashed] (\x, -1) -- (\x, -2);}
\foreach \x in {3,4,5,6,7} {\draw[mydashed] (\x, 9) -- (\x, 10);}

\foreach \y in {-1,...,9} {\draw[mydashed] (-1,\y) -- (-2,\y);}
\foreach \y in {-1,...,9} {\draw[mydashed] (11,\y) -- (12,\y);}
\draw (-1,-1) grid (5,9);
\draw (6,-1) grid (11,9);

\draw[line width =0.5mm] (2,-2.5) -- (2,10.5);
\draw[line width =0.5mm] (8,-2.5) -- (8,10.5);

\node[scale=0.7] at (6,-3) {$C_{k-2}$};
\node[scale=1.25] at (5,-4.5) {$(1)$};

\draw[white, line width=0.5mm] (7.03,-1) -- (7.91,-1);
\draw[white, line width=0.5mm] (2.09,-1) -- (5,-1);
\draw[white, line width=0.5mm] (3,-2) -- (3,-0.1); 
\draw[white, line width=0.5mm] (4,-2) -- (4,-0.1);
\draw[white, line width=0.5mm] (5,-2) -- (5,-0.1);

\draw[white, line width=0.5mm] (7.03,9) -- (7.91,9);
\draw[white, line width=0.5mm] (5.03,9) -- (5.96,9);
\draw[white, line width=0.5mm] (2.09,9) -- (5,9); 
\draw[white, line width=0.5mm] (3,8.1) -- (3,10); 
\draw[white, line width=0.5mm] (4,8.1) -- (4,10);
\draw[white, line width=0.5mm] (5,8.1) -- (5,10);

\draw[mydashed] (3, -2) -- (3,0);
\draw[mydashed] (4,-2) -- (4,0);
\draw[mydashed] (5,-2) -- (5,0);

\draw[mydashed] (3,8) -- (3,10);
\draw[mydashed] (4,8) -- (4,10);
\draw[mydashed] (5,8) -- (5,10);

\end{scope}

\begin{scope}[yshift=-16.5cm]

\fill[gray!85] (-1,2) rectangle (1,3);
\fill[gray!85] (-1,0) rectangle (1,1);
\fill[gray!85] (-1,5) rectangle (1,6);
\fill[gray!85] (-1,7) rectangle (1,8);
\fill[gray!85] (8,4) rectangle (9,5);
\fill[gray!85] (9,1) rectangle (11,2);
\fill[gray!85] (10,3) rectangle (11,4);
\fill[gray!85] (9,6) rectangle (11,7);

\fill[gray!85] (5,2) rectangle (6,6);
\fill[gray!85] (2,1) rectangle (4,2);
\fill[gray!85] (2,3) rectangle (3,5);
\fill[gray!85] (2,6) rectangle (4,7);

\fill[gray!85] (7,4) rectangle (8,5);
\fill[gray!85] (6,0) rectangle (8,1);
\fill[gray!85] (6,2) rectangle (8,3);
\fill[gray!85] (6,7) rectangle (8,8);

\foreach \x in {-1,...,11} {\draw[mydashed] (\x, -2) -- (\x, -1);}
\foreach \x in {-1,...,11} {\draw[mydashed] (\x, 9) -- (\x, 10);}

\foreach \y in {-1,...,9} {\draw[mydashed] (-1,\y) -- (-2,\y);}
\foreach \y in {-1,...,9} {\draw[mydashed] (11,\y) -- (12,\y);}
\draw (-1,-1) grid (11,9);

\draw[line width =0.5mm] (2,-2.5) -- (2,10.5);
\draw[line width =0.5mm] (8,-2.5) -- (8,10.5);

\node[scale=1.25] at (5,-4) {$(2)$};

\draw[white, line width=0.5mm] (7.03,-1) -- (7.91,-1);
\draw[white, line width=0.5mm] (2.09,-1) -- (4.96,-1); 
\draw[white, line width=0.5mm] (3,-2) -- (3,-0.1); 
\draw[white, line width=0.5mm] (4,-2) -- (4,-0.1);
\draw[white, line width=0.5mm] (5,-2) -- (5,-1.1);

\draw[white, line width=0.5mm] (7.03,9) -- (7.91,9);
\draw[white, line width=0.5mm] (5.03,9) -- (5.96,9);
\draw[white, line width=0.5mm] (2.09,9) -- (5.96,9); 
\draw[white, line width=0.5mm] (3,8.1) -- (3,10); 
\draw[white, line width=0.5mm] (4,8.1) -- (4,10);
\draw[white, line width=0.5mm] (5,8.1) -- (5,10);

\draw[mydashed] (3, -2) -- (3,0);
\draw[mydashed] (4,-2) -- (4,0);
\draw[mydashed] (5,-2) -- (5,-1);

\draw[mydashed] (3,8) -- (3,10);
\draw[mydashed] (4,8) -- (4,10);
\draw[mydashed] (5,8) -- (5,10);

\draw (4.96,-1) -- (6,-1);
\draw (5.96,9) -- (6,9);

\end{scope}

\begin{scope}[yshift=-16.5cm,xshift=20cm]

\fill[gray!85] (-1,2) rectangle (1,3);
\fill[gray!85] (-1,0) rectangle (1,1);
\fill[gray!85] (-1,5) rectangle (1,6);
\fill[gray!85] (-1,7) rectangle (1,8);
\fill[gray!85] (8,4) rectangle (9,5);
\fill[gray!85] (9,1) rectangle (11,2);
\fill[gray!85] (10,3) rectangle (11,4);
\fill[gray!85] (9,6) rectangle (11,7);

\fill[gray!85] (5,2) rectangle (6,6);
\fill[gray!85] (2,1) rectangle (4,2);
\fill[gray!85] (2,3) rectangle (3,5);
\fill[gray!85] (2,6) rectangle (4,7);

\fill[gray!85] (7,4) rectangle (8,5);
\fill[gray!85] (6,0) rectangle (8,1);
\fill[gray!85] (6,2) rectangle (8,3);
\fill[gray!85] (6,7) rectangle (8,8);

\foreach \x in {-1,...,11} {\draw[mydashed] (\x, -2) -- (\x, -1);}
\foreach \x in {-1,...,11} {\draw[mydashed] (\x, 9) -- (\x, 10);}

\foreach \y in {-1,...,9} {\draw[mydashed] (-1,\y) -- (-2,\y);}
\foreach \y in {-1,...,9} {\draw[mydashed] (11,\y) -- (12,\y);}
\draw (-1,-1) grid (11,9);

\draw[line width =0.5mm] (2,-2.5) -- (2,10.5);
\draw[line width =0.5mm] (8,-2.5) -- (8,10.5);

\node[scale=1.25] at (5,-4) {$(3)$};

\draw[white, line width=0.5mm] (7.03,-1) -- (7.91,-1);
\draw[white, line width=0.5mm] (2.09,-1) -- (4.96,-1); 
\draw[white, line width=0.5mm] (3,-2) -- (3,-0.1); 
\draw[white, line width=0.5mm] (4,-2) -- (4,-0.1);
\draw[white, line width=0.5mm] (5,-2) -- (5,-1.1);

\draw[white, line width=0.5mm] (7.03,9) -- (7.91,9);
\draw[white, line width=0.5mm] (5.03,9) -- (5.96,9);
\draw[white, line width=0.5mm] (2.09,9) -- (5.96,9); 
\draw[white, line width=0.5mm] (3,8.1) -- (3,10); 
\draw[white, line width=0.5mm] (4,8.1) -- (4,10);
\draw[white, line width=0.5mm] (5,8.1) -- (5,10);

\draw[mydashed] (3, -2) -- (3,0);
\draw[mydashed] (4,-2) -- (4,0);
\draw[mydashed] (5,-2) -- (5,-1);

\draw[mydashed] (3,8) -- (3,10);
\draw[mydashed] (4,8) -- (4,10);
\draw[mydashed] (5,8) -- (5,10);

\draw (4.96,-1) -- (6,-1);
\draw (5.96,9) -- (6,9);

\end{scope}
\end{tikzpicture}
\caption{\label{figure.completing.algorithm.1} Illustration of the algorithm for 
filling the intermediate columns between two half-plane patterns $p$ and $q$ for 
the minimal total domination subshift. In 
the last step, the position $\textbf{u}_0$ 
is the bottommost represented positon 
of the central column, and the central column is coloured 
with a possible colouring. We chose $k=6$, still the proof works with $k=5$.}
\end{figure}
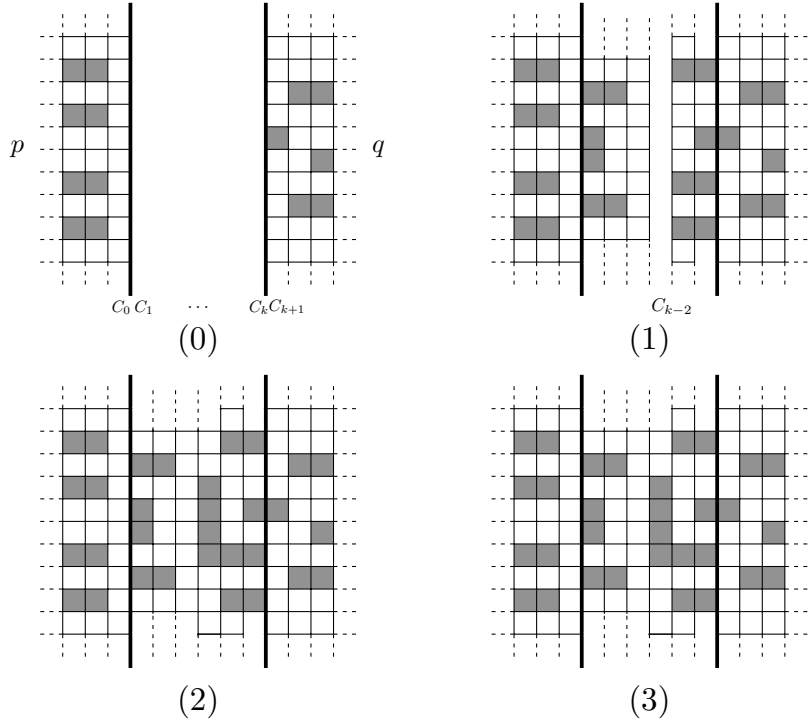

\leavevmode

\begin{itemize} 
\item \textbf{Filling the intermediate columns 
between two half-plane patterns.} 

We provide here an algorithm to fill
these columns between two patterns $p$ and $q$ respectively on 
$\mathbb{Z}_{-} \times \mathbb{Z}$ and 
$\mathbb{Z}_{+} \times \mathbb{Z}$ into 
a configuration $x \in X^{MT}$: 

\begin{enumerate}
\vspace{-0,1cm}
\setlength{\parskip}{0pt}
\setlength{\itemsep}{3pt}
\item \textbf{Filling the intermediate columns, from $C_1$ to $C_{k-3}$, then $C_k,C_{k-1}$.} 
Successively, for all $j$ 
from $1$ to $k-3$, we determine the column 
$C_j$ according to the following rule: 
for all $\textbf{u} \in C_j$, $x_{\textbf{u}}$ 
is $\begin{tikzpicture}[scale=0.3,baseline=0.4mm]
\fill[gray!85] (0,0) rectangle (1,1);
\draw (0,0) rectangle (1,1);
\end{tikzpicture}$ when
$x_{\textbf{u}-(1,1)}$, $x_{\textbf{u}-(1,-1)}$ 
and $x_{\textbf{u}-(2,0)}$ are $\begin{tikzpicture}[scale=0.3,baseline=0.4mm]
\draw (0,0) rectangle (1,1);
\end{tikzpicture}$ (the difference with the 
proof of Theorem~\ref{minimal-gluing-th} is 
that the symbol $x_{\textbf{u}-(1,0)}$ is 
not imposed). Else, $x_{\textbf{u}}$ is set to $\begin{tikzpicture}[scale=0.3,baseline=0.4mm]
\draw (0,0) rectangle (1,1);
\end{tikzpicture}$. This rule 
is illustrated in Figure~\ref{figure.local.rules.completion}:

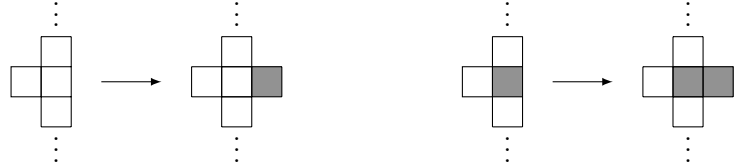
\begin{figure}[h!]
\[\begin{tikzpicture}[scale=0.4]
\draw (0,0) grid (1,3);
\draw (-1,1) rectangle (0,2);
\node at (0.5,-0.5) {$\vdots$};
\node at (0.5,4) {$\vdots$};

\draw[-latex] (2,1.5) -- (4,1.5);

\begin{scope}[xshift=6cm]
\node at (0.5,-0.5) {$\vdots$};
\node at (0.5,4) {$\vdots$};
\fill[gray!85] (1,1) rectangle (2,2);
\draw (0,0) grid (1,3);
\draw (-1,1) grid (2,2);
\end{scope}

\begin{scope}[xshift=15cm]
\fill[gray!85] (0,1) rectangle (1,2);
\draw (0,0) grid (1,3);
\draw (-1,1) rectangle (0,2);
\node at (0.5,-0.5) {$\vdots$};
\node at (0.5,4) {$\vdots$};

\draw[-latex] (2,1.5) -- (4,1.5);

\begin{scope}[xshift=6cm]
\node at (0.5,-0.5) {$\vdots$};
\node at (0.5,4) {$\vdots$};
\fill[gray!85] (0,1) rectangle (1,2);
\fill[gray!85] (1,1) rectangle (2,2);
\draw (0,0) grid (1,3);
\draw (-1,1) grid (2,2);
\end{scope}
\end{scope}

\end{tikzpicture}\]
\caption{\label{figure.local.rules.completion} 
Illustration of the local rules for the completion 
algorithm for the intermediate columns.}
\end{figure}

For $j=k$ and then $j=(k-1)$, 
we determine $x$ on any position $x_{\textbf{u}}$ 
for $\textbf{u} \in C_j$ by applying 
a symmetrical rule: $x_{\textbf{u}}$ 
is $\begin{tikzpicture}[scale=0.3,baseline=0.4mm]
\fill[gray!85] (0,0) rectangle (1,1);
\draw (0,0) rectangle (1,1);
\end{tikzpicture}$ when 
$x_{\textbf{u}+(1,1)}$, $x_{\textbf{u}+(1,-1)}$ 
and $x_{\textbf{u}+(2,0)}$ are $\begin{tikzpicture}[scale=0.3,baseline=0.4mm]
\draw (0,0) rectangle (1,1);
\end{tikzpicture}$. Else it is 
$\begin{tikzpicture}[scale=0.3,baseline=0.4mm]
\draw (0,0) rectangle (1,1);
\end{tikzpicture}$.

\item \textbf{The central column ($j=k-2)$.} 

We then 
determine $x$ on the central column $C_{k-2}$. For all $\textbf{u} \in C_{k-2}$, $x_{\textbf{u}}$ 
is $\begin{tikzpicture}[scale=0.3,baseline=0.4mm]
\fill[gray!85] (0,0) rectangle (1,1);
\draw (0,0) rectangle (1,1);
\end{tikzpicture}$ when 
$x_{\textbf{u}+(1,1)}$, $x_{\textbf{u}+(1,-1)}$ 
and $x_{\textbf{u}+(2,0)}$ are equal to 
$\begin{tikzpicture}[scale=0.3,baseline=0.4mm]
\draw (0,0) rectangle (1,1);
\end{tikzpicture}$, 
or when
$x_{\textbf{u}-(1,1)}$, $x_{\textbf{u}-(1,-1)}$ 
and $x_{\textbf{u}-(2,0)}$ are equal to 
$\begin{tikzpicture}[scale=0.3,baseline=0.4mm]
\draw (0,0) rectangle (1,1);
\end{tikzpicture}$. Else it is 
$\begin{tikzpicture}[scale=0.3,baseline=0.4mm]
\draw (0,0) rectangle (1,1);
\end{tikzpicture}$.

\item \textbf{Eliminating minimality and total domination errors 
in the central column.} 

Choose any position $\textbf{u}_0 \in C_{k-2}$. From this position upwards, check 
for all positions if they are dominated.
If this is not the case, then change 
the symbol on $\textbf{u}+(0,1)$ into 
$\begin{tikzpicture}[scale=0.3,baseline=0.4mm]
\fill[gray!85] (0,0) rectangle (1,1);
\draw (0,0) rectangle (1,1);
\end{tikzpicture}$.
After $\textbf{u}_0$ has been processed do the same symmetrically (change the symbol in $\textbf{u}_0-(0,1)$ when $\textbf{u}$ is not dominated) in parallel downwards, beginning from $\textbf{u}_0-(0,1)$.

\end{enumerate}

See an illustration of this algorithm on 
Figure~\ref{figure.completing.algorithm}.

\item \textbf{The obtained configuration 
is in $X^{MT}$.}

We have to check that the local 
rules of the minimal total 
domination subshift 
are verified over all the constructed configuration $x$.
\vspace{5cm}
\begin{enumerate}

\item \textbf{Local rules are verified inside 
the half-planes.}

Same as the corresponding point 
in the proof of Theorem~\ref{minimal-gluing-th}.

\item \textbf{Every position outside $\text{supp}(p) \cup \text{supp}(q)$ is dominated.}

Same as the corresponding point 
in the proof of Theorem~\ref{minimal-gluing-th} for the positions 
outside the central column $C_{k-2}$. 
In this column, let us assume 
that a position $\textbf{u}$ 
above $\vec{u}_0$ (without loss of 
generality) is not dominated. Then 
the last step of the algorithm, when 
scanning this position, would have
changed the symbol on position $\textbf{u}+(0,1)$, which is a contradiction.

In particular, no dominant positions are 
isolated.

\item \textbf{Every dominant position outside $\text{supp}(p) \cup \text{supp}(q)$ has a private neighbour.}

\begin{enumerate}
\item[(a+b)] Outside the central column, the 
proof is similar to the corresponding points 
in the proof of Theorem~\ref{minimal-gluing-th}.
\item[(c)] In the column $C_{k-2}$, the 
dominant positions added in Step 2 necessarily
have a private neighbour in column $C_{k-3}$ 
or $C_{k-1}$. Let us take a dominant position $\textbf{u}$, assumed without loss of generality to be above $\textbf{u}_0$, which was added in the last step. This implies that $\textbf{u}-(0,1)$ was not dominated when 
the algorithm checked this position. As a consequence, it is a private neighbour for $\textbf{u}$.
\end{enumerate}
\end{enumerate}

\item 

\textbf{The subshift $X^{MT}$ is not $4$-block-gluing.}

Let us consider the patterns $p$ and $q$ 
defined in the corresponding point in the proof of Theorem~\ref{minimal-gluing-th}.
It is easy to see that these two patterns are also globally admissible in the subshift $X^{MT}$. We only have to check that in the constructed
configuration, no dominant positions are 
isolated, which is straightforward.

Using the same arguments as the ones for the minimal domination case, it is easy to see that any configuration in $\mathcal{A}^{\mathbb{Z}^2}$ where $p$ and $q$ are glued at distance 4 contains some forbidden patterns.

As a consequence $c(X^{MT}) >4$. Since 
it is $5$-block-gluing, $c(X^{MT}) = 5$.
\end{itemize}
\end{proof}

As a direct consequence of Theorem~\ref{theorem.computability.entropy}: 

\begin{thm}
The numbers $\nu_D$ and $\nu_T$ 
are computable with rate $n \mapsto 2^{n^2}$. 
The numbers $\nu_M$ and $\nu_{MT}$
are computable with rate $n \mapsto 2^{5n^2}$.
\end{thm}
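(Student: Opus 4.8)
The plan is to derive the final theorem as a reduction: I would show that computing each growth rate $\nu$ amounts to computing the entropy of the associated subshift, for which Theorem~\ref{theorem.computability.entropy} already supplies an algorithm, and then read off the exponent constants from the block-gluing constants established earlier. First I would recall that, by the proof of the Asymptotic behaviour theorem, $\nu_D = 2^{h(X^D)}$ and likewise $\nu_T = 2^{h(X^T)}$, $\nu_M = 2^{h(X^M)}$, $\nu_{MT} = 2^{h(X^{MT})}$. Hence it suffices to approximate each entropy and then exponentiate. Since the alphabet $\mathcal{A}_0$ has cardinality $2$, every such entropy satisfies $h(X) \in [0,1]$, so each $\nu = 2^{h}$ lies in $[1,2]$, an interval on which $t \mapsto 2^{t}$ is Lipschitz (with constant $2\ln 2 < 2$). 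Consequently, from a rational $r$ with $|h(X)-r| \le \tfrac{1}{2n}$ one obtains a rational (a truncation of $2^{r}$) within $\tfrac{1}{n}$ of $\nu$; so an algorithm computing $h(X)$ with a rate of order $2^{O(n^2)}$ yields one computing $\nu$ with a rate of the same order, the exponentiation contributing only polynomial overhead and the rescaling of the precision parameter contributing only a constant factor inside the exponent.

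Next I would invoke the block-gluing facts. By the remark opening Section~\ref{subsection.proof.block.gluing}, $X^D$ and $X^T$ are $1$-block-gluing, i.e.\ $c(X^D)=c(X^T)=1$, while Theorem~\ref{minimal-gluing-th} and Theorem~\ref{total-minimum-gluing-th} give $c(X^M)=c(X^{MT})=5$. Each of the four subshifts is therefore block-gluing, so Theorem~\ref{theorem.computability.entropy} applies and produces, for each, an algorithm computing the corresponding entropy with rate $n \mapsto 2^{O(n^2)}$; composing with the exponentiation step above transfers this to $\nu_D,\nu_T,\nu_M,\nu_{MT}$.

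The delicate point — and what I expect to be the crux — is to make explicit the constant hidden in the $O(n^2)$ and to verify that it is governed linearly by the block-gluing constant $c$, so that $c=1$ produces the rate $2^{n^2}$ and $c=5$ produces $2^{5n^2}$. For this I would unfold the algorithm accompanying Theorem~\ref{theorem.computability.entropy}. The $c$-block-gluing property yields, for every $k$, the sandwich
\[\frac{\log_2 N_k(X)}{(k+c)^2} \le h(X) \le \frac{\log_2 N_k(X)}{k^2},\]
where the upper bound is the defining infimum and the lower bound comes from tiling the plane by $k\times k$ globally admissible blocks separated by gaps of width $c$ (so that $N_{(k+c)m}(X) \ge N_k(X)^{m^2}$, which is legitimate upon repeatedly applying the block-gluing property). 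This controls the approximation error of $\frac{\log_2 N_k(X)}{k^2}$ and hence fixes the index $k$ at which the algorithm halts for target precision $1/n$; meanwhile Lemma~\ref{lemma.counting.block.gluing} bounds the rectangular window one scans to evaluate $N_k(X)$, whose dimensions carry the factor $|\mathcal{A}_0|^{2c+1}$.

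Combining the halting index with the cost of the counting step produces a rate of the form $2^{\Theta(c)\,n^2}$, and substituting $c=1$ for $X^D,X^T$ and $c=5$ for $X^M,X^{MT}$ gives the stated rates $2^{n^2}$ and $2^{5n^2}$. I want to stress that the main obstacle here is bookkeeping rather than any new structural idea: one must track the linear dependence of the exponent's constant on $c$ through the block-gluing error estimate and through the window size in Lemma~\ref{lemma.counting.block.gluing}, and absorb the constant-factor overheads coming from the exponentiation $h \mapsto 2^{h}$ and from the rescaling of the precision, so that the final numerical constants align with the block-gluing constants $c(X^D)=1$ and $c(X^M)=5$ recorded in the previous section.
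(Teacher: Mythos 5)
Your proposal takes essentially the same route as the paper: the paper's entire proof of this statement is the phrase ``as a direct consequence of Theorem~\ref{theorem.computability.entropy}'', applied with the block-gluing constants $c(X^D)=c(X^T)=1$ and $c(X^M)=c(X^{MT})=5$ from Section~\ref{subsection.proof.block.gluing}, and your reduction $\nu=2^{h(X)}$ together with the sandwich $\log_2 N_k(X)/(k+c)^2\le h(X)\le \log_2 N_k(X)/k^2$ is precisely the intended unfolding of that citation. One caveat: your concluding claim that the bookkeeping yields a rate $2^{\Theta(c)n^2}$ with the constant \emph{linearly} governed by $c$ is asserted rather than verified, and your own sketch points the other way --- the counting window of Lemma~\ref{lemma.counting.block.gluing} has a side of order $|\mathcal{A}|^{2c+1}(c+k)$, so a naive enumeration of patterns on it costs on the order of $2^{2^{2c+1}k^2}$ with $k=\Theta(cn)$, which is far larger than $2^{cn^2}$ when $c=5$ --- but the paper supplies no calculation of these constants either, so on this point your account is no less complete than the source's.
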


\section{\label{section.computer} Computing bounds for the growth rate}

Although the algorithm presented in Section~\ref{subsection.computability} provides a way to compute the growth 
rates of various dominating sets of the grids $G_{n,m}$, 
it is not efficient enough for practical use on a computer.
In this section, we use other tools which make it possible to obtain bounds for the growth rates, although 
with no guarantee on their precision. These bounds are obtained using computer resources, by running a C++ program made for the occasion. 
The technique relies on, for a fixed $m$, assimilating the
dominating sets of $G_{n,m}$ 
to patterns of a unidimensional subshift 
of finite type, whose entropy is known 
to be computable through linear algebra 
computing.

This method is well known. It was for instance used, along with other techniques, to solve the problem of finding the minimum size of a dominating (see \cite{rao}), 2-dominating and Roman dominating (see \cite{talon}) set of a grid of arbitrary size. These papers provide an alternate explanation without relating it to the theory of SFTs. For instance, Section 2 in \cite{talon} uses the same technique as the one we use here, but in the (min,+)-algebra. However, in their paper there are no such things as lower or upper bounds we investigate here: they only enumerate sets which are exactly 2-dominating or Roman dominating. Since they are interested in finding the minimum size of such a set, they can apply some optimisations to avoid enumerating some sets which cannot be of minimum size. Since we want to count \emph{all} the different dominating sets, these optimisations do not apply here.

\subsection{Relating $(D_{n,m})$ to the entropy of some unidimensional SFT}

\subsubsection{\label{subsection.nearest.neighbour} Nearest-neighbour unidimensional subshifts of finite type}

In this section $\mathcal{A}=(a_1,...,a_k)$ 
is a finite set, 
and $X$ a unidimensional subshift of finite type
on alphabet $\mathcal{A}$. Let us 
denote by $(e_1,...,e_k)$ the canonical basis 
of $\mathbb{R}^k$.

\begin{definition}
The subshift $X$ is said to be 
\textbf{nearest neighbour} when it is defined 
by forbidding a set of patterns on support 
$\{0,1\}$.
\end{definition}

\begin{definition}
The \textbf{adjacency matrix} of $X$ is the matrix $M \in \mathcal{M}_k (\mathbb{R})$ 
such that $M[e_i,e_j]=1$ if the 
pattern $a_i a_j$ is not forbidden, or 0 otherwise. 
\end{definition}

The following is well known: 

\begin{proposition}
Let $||.||$ be any matricial norm. The entropy of $X$ is equal to the spectral radius of $M$: 
\[h(X)= \log_2{\lim_n ||M^n||^{1/n}}.\]
\end{proposition}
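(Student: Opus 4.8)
The plan is to identify $N_n(X)$, the number of globally admissible words of length $n$, with the number of length-$n$ paths in the directed graph whose adjacency matrix is $M$, and then to show this count grows at exponential rate equal to the spectral radius $\rho(M)$. I note first that by the spectral radius formula (Gelfand), $\lim_n \|M^n\|^{1/n}=\rho(M)$ for any matricial norm, all norms on $\mathcal{M}_k(\mathbb{R})$ being equivalent; so it suffices to prove $h(X)=\log_2\rho(M)$. I also recall that $h(X)=\lim_n \frac{\log_2 N_n(X)}{n}$. I may assume $X\neq\emptyset$, since otherwise the graph is acyclic, $M$ is nilpotent, $\rho(M)=0$, and both sides equal $\log_2 0$; conversely any nonempty $X$ forces a cycle, hence $\rho(M)\ge 1$.

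For the upper bound, I observe that a word $w_1\cdots w_n$ is locally admissible exactly when $M[e_{w_i},e_{w_{i+1}}]=1$ for each $i$, so the number $L_n$ of locally admissible words equals $\mathbf{1}^T M^{n-1}\mathbf{1}$, the sum of the entries of $M^{n-1}$. Since every globally admissible word is locally admissible, $N_n(X)\le L_n\le K\|M^{n-1}\|$ for a constant $K$ depending only on $k$ and the fixed norm (the sum of entries being controlled by the maximal entry, itself controlled by any norm). Dividing by $n$ and letting $n\to\infty$ yields $\limsup_n \frac{\log_2 N_n(X)}{n}\le \log_2\rho(M)$ by Gelfand's formula.

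For the lower bound, I would pass to the Frobenius normal form of the nonnegative matrix $M$: reorder the alphabet so that $M$ is block upper-triangular with irreducible diagonal blocks, each corresponding to a strongly connected component of the graph. Then $\rho(M)=\max_C\rho(M_C)$ over these components; I fix one component $C$ with $\rho(M_C)=\rho(M)$. Since $X\neq\emptyset$ forces $\rho(M)\ge 1$, the component $C$ carries a cycle and is genuinely strongly connected. Any locally admissible word whose letters all lie in $C$ then extends, by closing it into a cycle on both sides, to a bi-infinite path staying in $C$, and is therefore globally admissible. Consequently $N_n(X)\ge \mathbf{1}_C^T M_C^{n-1}\mathbf{1}_C$, and Perron--Frobenius for the irreducible $M_C$ gives $\frac{1}{n}\log_2\!\left(\mathbf{1}_C^T M_C^{n-1}\mathbf{1}_C\right)\to \log_2\rho(M_C)=\log_2\rho(M)$, whence $\liminf_n \frac{\log_2 N_n(X)}{n}\ge \log_2\rho(M)$. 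Combining the two bounds gives $h(X)=\log_2\rho(M)=\log_2\lim_n\|M^n\|^{1/n}$.

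I expect the delicate point to be the lower bound, specifically the reduction to an irreducible block realising the spectral radius: one must invoke the Frobenius normal form to guarantee a strongly connected component $C$ with $\rho(M_C)=\rho(M)$, and then verify that paths confined to $C$ are bi-infinitely extendable, hence globally (not merely locally) admissible. A secondary nuisance is the possible imprimitivity of $M_C$, which makes the entries of $M_C^{n-1}$ vanish periodically; but comparing with the strictly positive left and right Perron eigenvectors of $M_C$ bounds the total count $\mathbf{1}_C^T M_C^{n-1}\mathbf{1}_C$ below by a constant times $\rho(M_C)^{n}$, which is all the limit requires.
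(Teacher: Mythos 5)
Your proof is correct and complete. Note that the paper itself offers no proof of this proposition --- it is stated as ``well known'' (it is essentially Theorem 4.3.1 / 4.4.4 in the Lind--Marcus reference) --- so there is nothing to compare against; your argument is the standard one. You handle the genuinely delicate points correctly: the upper bound via counting locally admissible words by $\mathbf{1}^T M^{n-1}\mathbf{1}$ together with Gelfand's formula, and the lower bound via a strongly connected component realising $\rho(M)$, whose words are bi-infinitely extendable and hence globally admissible, with the comparison against the positive Perron eigenvectors correctly neutralising possible imprimitivity. The degenerate case $X=\emptyset$ is also correctly dispatched.
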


\subsubsection{Unidimensional versions of the domination subshifts}

We define here the unidimensional versions of 
the domination subshifts defined in Section~\ref{section.examples}. We use them to describe and prove the method we use to obtain the bounds on the growth rates. The first one ($X^{D, m}$) is used to obtain the lower bound, whereas we use the second one ($X_*^{D,m}$) for the upper bound.

\begin{notation}
Let us fix some integer $m \ge 1$. 
We denote by $X^{D,m}$ the 
undimensional subshift on alphabet $\mathcal{A}_0 ^n$ such that a configuration $x$ 
is in $X^{D,m}$ if and only if 
the set of positions $(j,k) \in \mathbb{Z} \times \llbracket 1,m\rrbracket$ such that the symbol $(x_j)_k$ 
is grey forms a dominating set of the 
grid $\mathbb{Z} \times \llbracket 1,m\rrbracket$.
\end{notation}

With similar arguments as in the proofs of Lemma~\ref{lemma.comparison.minimal} and Lemma~\ref{lemma.comparison.minimal.total},
we get that when $m$ is fixed and $n$ grows 
to infinity:
\[D_{n,m} = 2^{h(X^{D,m}) \cdot n + o(n)}\]

\begin{notation}
For all $m \ge 3$, we also 
denote by $X_{*}^{D,m}$ the 
undimensional subshift on alphabet $\mathcal{A}_0 ^n$ such that a configuration $x$ 
is in $X_{*}^{D,m}$ if and only if 
the set of positions $(j,k) \in \mathbb{Z} \times \llbracket 2,m-1\rrbracket$ such that the symbol $(x_j)_k$ 
is grey forms a dominating set of the 
grid $\mathbb{Z} \times \llbracket 2,m-1\rrbracket$.
\end{notation}

\subsubsection{Recoding into nearest-neighbour 
subshifts}

Let us set $\mathcal{A}_1 = \left\{ 
\begin{tikzpicture}[scale=0.3]
\draw (0,0) rectangle (1,1);
\end{tikzpicture},
\begin{tikzpicture}[scale=0.3]
\draw[fill=gray!30] (0,0) rectangle (1,1);
\end{tikzpicture}, \begin{tikzpicture}[scale=0.3]
\draw[fill=gray!85] (0,0) rectangle (1,1);
\end{tikzpicture}\right\}$, 
and let us consider the map 
$\varphi : \left(\mathcal{A}^m_0\right)^{\mathbb{Z}} \rightarrow \left(\mathcal{A}^m_1\right)^{\mathbb{Z}}$ 
that acts on configurations of $\left(\mathcal{A}^n_0\right)^{\mathbb{Z}}$ by changing the $i^\text{th}$ 
symbol of any position $j \in \mathbb{Z}$ 
into $\begin{tikzpicture}[scale=0.3]
\draw[fill=gray!30] (0,0) rectangle (1,1);
\end{tikzpicture}$ whenever it is not dominant and 
dominated by an element of $C_{j-1} \bigcap \left( \mathbb{Z} \times \llbracket 1,m\rrbracket \right)$ or $C_{j} \bigcap \left( \mathbb{Z} \times \llbracket 1,m\rrbracket \right)$. Informally, from lightest to darkest they stand for an undominated cell (which is not dominant), a dominated cell which is not dominant and a dominant cell. This is illustrated in Figure~\ref{figure.illustration.conjugation}. The nearest-neighbour property makes it possible to count the dominating sets without enumerating them fully: it is enough to store a small number of the latest columns, proceeding from left to right in the grid.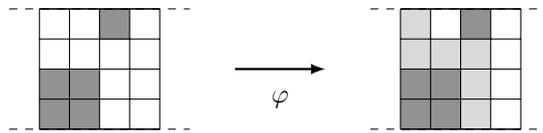
\begin{figure}[h!]
\[\begin{tikzpicture}[scale=0.4]

\fill[gray!85] (1,0) rectangle (3,2);
\fill[gray!85] (3,3) rectangle (4,4);
\draw[dashed] (0,0) -- (6,0);
\draw[dashed] (0,4) -- (6,4);
\draw (1,0) -- (5,0);
\draw (1,4) -- (5,4);
\draw (1,0) grid (5,4);

\draw[line width=0.3mm,-latex] (7.5,2) -- (10.5,2);
\node at (9,1) {$\varphi$};

\begin{scope}[xshift=12cm]

\fill[gray!30] (3,0) rectangle (4,3);
\fill[gray!30] (1,2) rectangle (3,3);
\fill[gray!30] (1,3) rectangle (2,4);

\fill[gray!85] (1,0) rectangle (3,2);
\fill[gray!85] (3,3) rectangle (4,4);
\draw[dashed] (0,0) -- (6,0);
\draw[dashed] (0,4) -- (6,4);
\draw (1,0) -- (5,0);
\draw (1,4) -- (5,4);
\draw (1,0) grid (5,4);

\end{scope}
\end{tikzpicture}\]
\caption{\label{figure.illustration.conjugation} Illustration of the map recoding 
$X^{D,m}$ into a nearest-neighbour SFT.}
\end{figure}

Since $\varphi$ commutes with the shift action 
and is invertible, 

\[h(X^{D,m}) = h(\varphi(X^{D,m})).\]

Moreover, the subshift $X^{D,m}$ has the nearest-neighbour property.

\subsection{Numerical approximations}

We use these last equalities to prove the following:
\begin{thm}[domination]
The following inequalities hold $ 1.950022198 \le \nu_D \le 1.959201684$.
\end{thm}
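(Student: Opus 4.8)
The plan is to express $\nu_D = 2^{h(X^D)}$ and sandwich the two-dimensional entropy $h(X^D)$ between two sequences of one-dimensional entropies, each computable as a spectral radius of a finite matrix. Recall from the asymptotic behaviour theorem that $\nu_D = 2^{h(X^D)}$, and from the relation $D_{n,m} = 2^{h(X^{D,m}) n + o(n)}$ together with the two-dimensional entropy limit that $h(X^D) = \lim_m \tfrac{1}{m} h(X^{D,m})$. I would first upgrade this limit into genuine one-sided inequalities valid for \emph{every} fixed $m$, so that a single finite computation certifies each bound.

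For the lower bound I would use that the strip subshifts $X^{D,m}$ are \emph{superadditive} in $m$. Given a globally admissible length-$n$ pattern of $X^{D,m_1}$ and one of $X^{D,m_2}$, stacking them vertically yields a length-$n$ pattern of $X^{D,m_1+m_2}$: every row of each strip was already dominated from within its own strip, and juxtaposition only enlarges neighbourhoods, so all domination is preserved. This stacking is injective, whence $N_n(X^{D,m_1+m_2}) \ge N_n(X^{D,m_1})\,N_n(X^{D,m_2})$ and therefore $h(X^{D,m_1+m_2}) \ge h(X^{D,m_1}) + h(X^{D,m_2})$. By Fekete's lemma $\lim_m \tfrac1m h(X^{D,m}) = \sup_m \tfrac1m h(X^{D,m}) = h(X^D)$, so $\tfrac1m h(X^{D,m}) \le h(X^D)$ for every $m$, i.e. $\rho(M_m)^{1/m} \le \nu_D$, where $M_m$ is the adjacency matrix of the nearest-neighbour recoding $\varphi(X^{D,m})$ and $\rho$ denotes the spectral radius.

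For the upper bound I would run the symmetric argument on the relaxed strips $X_*^{D,m}$, in which only the interior rows carry a domination constraint. Here splitting a valid height-$(m_1+m_2)$ pattern along a horizontal cut produces valid patterns of $X_*^{D,m_1}$ and $X_*^{D,m_2}$: the interior rows of each half are dominated using only cells of that half, while the two rows adjacent to the cut are exactly the boundary rows on which no constraint is imposed. This yields \emph{subadditivity}, $h(X_*^{D,m_1+m_2}) \le h(X_*^{D,m_1}) + h(X_*^{D,m_2})$, so by Fekete $\tfrac1m h(X_*^{D,m}) \ge \inf_m \tfrac1m h(X_*^{D,m}) = h(X^D)$, the infimum equalling $h(X^D)$ because relaxing the boundary alters the entropy only by a term that is $o(m)$. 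Hence $\rho(M_m^*)^{1/m} \ge \nu_D$, with $M_m^*$ the adjacency matrix of the recoded relaxed strip.

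It then remains to evaluate $\rho(M_m)^{1/m}$ and $\rho(M_m^*)^{1/m}$ for one sufficiently large $m$, using the proposition identifying the entropy of a nearest-neighbour SFT with $\log_2$ of the spectral radius of its adjacency matrix; the displayed constants $1.950022198$ and $1.959201684$ are the values returned by the C++ program for that $m$. The main obstacle is computational rather than conceptual: the alphabet of a width-$m$ strip has size exponential in $m$, so $M_m$ and $M_m^*$ are astronomically large, and the three-symbol recoding over $\mathcal{A}_1$ is precisely what keeps the transitions nearest-neighbour, allowing $\rho$ to be approximated by sparse power iteration without materialising the full matrix. Pushing $m$ high enough to separate the two bounds at the ninth decimal, and controlling the rounding of the spectral-radius approximation, is the delicate part.
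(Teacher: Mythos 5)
Your proposal is correct and follows essentially the same route as the paper: superadditivity of the strip counts under vertical stacking for the lower bound, subadditivity of the boundary-relaxed counts $D^*_{n,m}$ (equivalently $X_*^{D,m}$) under horizontal splitting for the upper bound, and evaluation of each strip entropy as the $\log_2$ of the spectral radius of the nearest-neighbour recoding over $\mathcal{A}_1$. The only cosmetic differences are that you package the sandwich via Fekete's lemma on the strip entropies rather than directly on the counts $D_{n,m}$ and $D^*_{n,m}$, and that the paper fixes the specific width $m=18$ for the computation.
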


\begin{proof}\leavevmode

\begin{itemize} \item \textbf{Lower bound:}

\begin{enumerate}
\item For all integers $n,m_1,m_2$, $D_{n,m_1+m_2} \ge D_{n,m_1} \cdot D_{n,m_2}.$

Indeed, let us consider two
sets dominating respectively $G_{n,m_1}$ and $G_{n,m_2}$. 
By gluing the first one on the top of 
the second one, we obtain a dominating set of 
$G_{n,m_1+m_2}$. This is true because any position in this grid 
is either in the copy of the grid $G_{n,m_1}$
and thus dominated by an element in this grid, 
or in the copy of $G_{n, m_2}$. Since this construction is invertible, we obtain the announced inequality.

\item As a consequence, for all $k \ge 0$,
\[D_{n,18k} \ge D_{n,18}^{k} = 2^{h(X^{D,18}) \cdot kn + k \cdot o(n)},\]
where the function $o(n)$ is related to the fact that we used $18$ lines.
This implies that
\[\lim_{n,m} \frac{\log_2 (D_{n,m})}{nm} 
= \lim_{n,k} \frac{\log_2 (D_{n,18k})}{18nk} \ge h(X^{D,18}).\]
\item This number is equal to 
$h(\varphi(X^{D,18}))$, which is computed using Section~\ref{subsection.nearest.neighbour}. The lower bound follows.
\end{enumerate}

\item \textbf{Upper bound:}

\begin{enumerate}
\item 
For all $n,m$, let us denote by $D^{*}_{n,m}$ 
the number of sets of vertices of $G_{n,m}$ 
which dominate the middle $m-2$ lines (i.e. cells of the first and last lines might not be dominated). We have a direct inequality \[D_{n,m} \le D^{*}_{n,m}.\]
\item For a reason similar as the one in the first point 
of the proof of the lower bound, for all $m_1,m_2$, $D^{*}_{n,m_1+m_2} \le D^{*}_{n,m_1} \cdot D^{*}_{n,m_2}$.
\item For all $k \ge 0$ and $n \ge 0$, 
\[D^{*}_{n,18k} \le (D^{*}_{n,18})^{k} = 2^{h(X_{*}^{D,18}) \cdot km + k \cdot o(n)}.\] 
As a consequence 
\[h(X) \le h(X_{*}^{D,18}).\]
With the same method as for the lower bound, 
we obtain the upper bound.
\end{enumerate}

\end{itemize}
\end{proof}

\begin{remark}
With further numerical manipulations, we notice that the lower bound and the upper bound seem to get closer to each other rather slowly. To speed up the convergence, we had the idea of using the sequences of ratios $h(X^{D,m+1})/h(X^{D,m})$ and 
$h(X_{*}^{D,m+1})/h(X_{*}^{D,m})$. This seems to offer a much better convergence speed. Indeed, for both sequences, from $m=11$ on, the ratio seem to be stabilised around $1.954751195$.
\label{rk-convergence}
\end{remark}

Using the same method, we provide bounds for some other problems. For the total domination, we can make the computations until $m = 17$. However, for the other problems (the ones with the minimality constraint) the number of patterns we enumerate grows exponentially at a much faster rate than for the domination problem, thus the bounds are less good. We cannot go further than around $m=10$ for these problems.

\begin{thm}[total domination]
$ 1.904220376\le \nu_T \le 1.923434191$.
\end{thm}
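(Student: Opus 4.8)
The plan is to mirror the structure of the domination theorem's proof exactly, since the final statement—the numerical bounds $1.904220376 \le \nu_T \le 1.923434191$ on the total-domination growth rate—is proved by the same two-sided scheme, merely instantiated with the total-domination subshifts. First I would establish the multiplicativity inequalities for the total-domination counts. For the lower bound, I claim that $T_{n,m_1+m_2} \ge T_{n,m_1} \cdot T_{n,m_2}$: given a total dominating set of $G_{n,m_1}$ and one of $G_{n,m_2}$, glue the first on top of the second. Every interior vertex keeps a neighbour in its own copy, so it is still dominated, and the construction is injective, yielding the inequality. Iterating gives $T_{n,18k} \ge T_{n,18}^{\,k} = 2^{h(X^{T,18})\cdot kn + k\cdot o(n)}$ (using the $m$-fixed version of the asymptotic identity $T_{n,m}=2^{h(X^{T,m})n+o(n)}$, obtained exactly as the analogous statement for $X^{D,m}$), so that $\lim_{n,m}\frac{\log_2(T_{n,m})}{nm} \ge h(X^{T,18})$.

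For the upper bound I would introduce the relaxed count $T^{*}_{n,m}$, the number of vertex sets of $G_{n,m}$ that totally dominate only the middle $m-2$ rows, so that trivially $T_{n,m}\le T^{*}_{n,m}$. The same gluing argument run in reverse gives submultiplicativity $T^{*}_{n,m_1+m_2}\le T^{*}_{n,m_1}\cdot T^{*}_{n,m_2}$, hence $T^{*}_{n,18k}\le (T^{*}_{n,18})^{k}$ and $h(X^T)\le h(X^{T}_{*,18})$. The two entropies $h(X^{T,18})$ and $h(X^{T}_{*,18})$ are then evaluated numerically: one recodes each fixed-width subshift into a nearest-neighbour subshift over the three-symbol alphabet $\mathcal{A}_1$ (undominated/dominated-non-dominant/dominant) via the shift-commuting invertible map $\varphi$ of the previous section, so the entropy equals $\log_2$ of the spectral radius of the corresponding adjacency matrix, which is computed by linear algebra on the transfer matrix. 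Feeding $m=18$ (or the largest feasible width, noting the text remarks that total domination is tractable up to $m=17$) produces the stated rational bounds $1.904220376$ and $1.923434191$.

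The main obstacle is not conceptual but computational: the transfer matrix for the total-domination subshift at width $m$ has dimension exponential in $m$, because one must remember, for each recent column, both the colour of each cell and enough total-domination bookkeeping to detect forbidden configurations. Concretely, the encoding must track whether each cell of the boundary column is itself already dominated (since, unlike ordinary domination, dominant cells also require a neighbour in $S$), which enlarges the per-column state and forces the width ceiling to drop to roughly $m\le 17$. Consequently the gap between $h(X^{T,m})$ and $h(X^{T}_{*,m})$ closes more slowly than in the ordinary domination case, and the resulting interval for $\nu_T$ is wider. The delicate point in writing the argument rigorously is verifying that the recoding $\varphi$ remains a conjugacy for the total-domination subshift—that is, that the three-symbol relabelling is still determined locally and invertibly once the minimality/totality constraints are phrased as nearest-neighbour forbidden patterns—but this follows from the local characterisations established in Section~\ref{section.local.characterisations} in the same way as for $X^{D,m}$.
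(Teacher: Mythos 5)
Your proposal is correct and follows essentially the same route as the paper: the paper proves the domination bounds via super/sub-multiplicativity of the counts $D_{n,m}$ and $D^{*}_{n,m}$ in $m$, fixed-width recoding into a nearest-neighbour unidimensional SFT, and spectral-radius computation, and then explicitly states that the total-domination bounds are obtained "using the same method" (with the width ceiling at $m=17$ rather than $18$, as you note). The only cosmetic slip is the mention of a "minimality" constraint in your final sentence, which is not present for $X^{T,m}$; the argument itself is unaffected.
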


\begin{remark}
As in \autoref{rk-convergence}, the ratios $h(X^{T,m+1})/h(X^{T,m})$ and $h(X_{*}^{T,m+1})/h(X_{*}^{T,m})$) offer a much better convergence speed. Indeed, from $m=10$ they seem to stabilise, both around $1.915316$.
\end{remark}

\begin{thm}[minimal domination]
$1.315870482 \le \nu_M \le 1.550332154$.
\end{thm}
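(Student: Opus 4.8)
The plan is to follow the same scheme as the theorem bounding $\nu_D$, relating $\nu_M = 2^{h(X^M)}$ to the entropy of a unidimensional SFT computed for one fixed height $m_0$ (here $m_0=10$, the largest value the program can reach), recoded via the map $\varphi$ of Section~\ref{subsection.nearest.neighbour} into a nearest-neighbour SFT and evaluated as the logarithm of the spectral radius of an explicit adjacency matrix. The only genuine work is to produce, for each bound, a height-$m$ count that is respectively super- or sub-multiplicative in $m$ and that sandwiches $M_{n,m}$. The obstacle, absent in the plain domination case, is that minimality is \emph{not} preserved by the naive vertical cut-and-paste: whether a dominant vertex keeps a private neighbour depends on cells lying just outside the region considered, so both families must be set up with carefully chosen boundary conventions.

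For the lower bound I would introduce a supermultiplicative subfamily. Let $\underline M_{n,m}$ be the number of minimal dominating sets of $G_{n,m}$ whose first and last rows contain no dominant vertex. Stacking such a set for $G_{n,m_1}$ on top of one for $G_{n,m_2}$ yields a set for $G_{n,m_1+m_2}$ that is again minimal dominating with white extremal rows: domination is preserved because it is monotone and each block already dominates its own rows from within, while minimality survives precisely because the two all-white rows meeting at the interface prevent any private neighbour of a dominant vertex of one block from gaining a second dominant neighbour in the other block. Hence $\underline M_{n,m_1+m_2}\ge \underline M_{n,m_1}\,\underline M_{n,m_2}$, and trivially $\underline M_{n,m}\le M_{n,m}$. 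Writing $\underline h(m)$ for the per-column entropy of the associated height-$m$ unidimensional SFT (minimal domination of the strip with rows $1$ and $m$ forced white, which relates to $\underline M_{n,m}$ by the same comparison as in Lemma~\ref{lemma.comparison.minimal}), the chain $\underline M_{n,m_0}^{\,k}\le \underline M_{n,m_0 k}\le M_{n,m_0 k}$ together with superadditivity of $\underline h$ gives $h(X^M)\ge \underline h(m_0)/m_0$, whence $\nu_M\ge 2^{\underline h(m_0)/m_0}$.

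For the upper bound I would relax the requirements to the interior. Let $M^*_{n,m}$ count the subsets $S$ of the vertices of $G_{n,m}$ such that every vertex in rows $2,\dots,m-1$ is dominated and every $v\in S$ lying in rows $2,\dots,m-1$ is isolated in $S$ or has a private neighbour in $G_{n,m}$; this relaxes both conditions on the two extremal rows, so $M_{n,m}\le M^*_{n,m}$. Cutting such an $S$ along a horizontal line into its top $m_1$ rows and bottom $m_2$ rows produces $S_1,S_2$ satisfying the same interior conditions for $G_{n,m_1}$ and $G_{n,m_2}$: an interior vertex is dominated from within its own block, isolation is inherited when passing to a subset, and a private neighbour cannot be lost since removing cells only decreases the number of dominant neighbours. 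Thus $M^*_{n,m_1+m_2}\le M^*_{n,m_1}\,M^*_{n,m_2}$, the per-column entropy $h^*(m)$ of the corresponding SFT $X_*^{M,m}$ is subadditive, and $h(X^M)\le h^*(m_0)/m_0$, giving $\nu_M\le 2^{h^*(m_0)/m_0}$.

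Both $\underline h(m_0)$ and $h^*(m_0)$ are entropies of nearest-neighbour unidimensional SFTs after the recoding $\varphi$ (tracking per column whether each cell is dominant, dominated, or undominated, plus the bookkeeping needed to certify private neighbours), so each is the logarithm of the spectral radius of an explicit matrix; evaluating these at $m_0=10$ with the C++ program yields the stated values $1.315870482$ and $1.550332154$. I expect the main difficulty to be the correct design of the two boundary conventions and the verification that they truly sandwich $M_{n,m}$ while making the multiplicativity go through; once those are settled the remainder is identical to the domination proof. The relatively wide gap between the two bounds is explained by the faster pattern growth under the minimality constraint, which forces $m_0$ to be taken no larger than $10$.
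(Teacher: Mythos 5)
Your proposal follows the same route as the paper: the paper states this theorem with no dedicated proof, saying only that it applies ``the same method'' as the domination bound, i.e.\ sandwiching the count between a supermultiplicative and a submultiplicative strip count and evaluating the resulting unidimensional entropies as logarithms of spectral radii after the nearest-neighbour recoding. You correctly identify the one point where the domination argument breaks for $X^M$ --- naive stacking and cutting do not preserve minimality --- and your two boundary conventions do repair it: with all-white extremal rows the interface consists of two white rows, so no vertex of one block can acquire a dominant neighbour from the other block, and domination, isolation and private neighbours are all preserved under stacking; dually, relaxing both the domination and the private-neighbour requirements on the two extremal rows makes every interior condition inherited by each half of a cut, so the relaxed count is submultiplicative. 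Two caveats. First, your definition of $M^*_{n,m}$ says ``every vertex in rows $2,\dots,m-1$ is dominated'', which as written excludes isolated dominant vertices in the interior and would falsify $M_{n,m}\le M^*_{n,m}$; it must read ``every vertex in rows $2,\dots,m-1$ \emph{not in} $S$ is dominated'' (clearly your intent, since your second clause explicitly allows isolated vertices). Second, the paper does not document which boundary conventions or which strip height its program uses for the minimality-constrained problems, so the specific decimals $1.315870482$ and $1.550332154$ cannot be confirmed to arise from your particular families; any valid super-/sub-multiplicative sandwich yields correct (possibly different) bounds, so this affects only the numerical values, not the soundness of the argument.
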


\begin{thm}[minimal total domination]
$1.275805204 \le \nu_{MT} \le 1.524476040$.
\end{thm}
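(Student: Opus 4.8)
The plan is to mirror the proof of the domination bounds, relating $\nu_{MT}$ to the entropies of two unidimensional subshifts attached to a horizontal strip of fixed height $m$: a \emph{closed} strip subshift $X^{MT,m}$, whose configurations are exactly the minimal total dominating sets of $\mathbb{Z} \times \llbracket 1,m \rrbracket$, for the lower bound; and a \emph{relaxed} subshift $X_*^{MT,m}$, in which the minimal-total-domination rules are imposed only on the middle lines $\mathbb{Z} \times \llbracket 2,m-1 \rrbracket$, for the upper bound. From the growth-rate theorem I already know that $\log_2 \nu_{MT} = \lim_{n,m} \frac{\log_2 MT_{n,m}}{nm}$, so the whole task is to squeeze this limit between per-line entropies of the two strip subshifts. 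Concretely I would fix $m=10$ and aim to show $\frac{1}{10}\,h(X^{MT,10}_\flat) \le \log_2\nu_{MT} \le \frac{1}{10}\,h(X_*^{MT,10})$, each endpoint being produced by linear algebra.

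For the upper bound I would establish the subadditivity $MT^*_{n,m_1+m_2} \le MT^*_{n,m_1}\cdot MT^*_{n,m_2}$, where $MT^*_{n,m}$ counts the $X_*^{MT,m}$-admissible patterns, i.e.\ sets whose middle $m-2$ lines are totally dominated and each of whose dominant cells lying in a middle line has a private neighbour. Every minimal total dominating set is $X_*^{MT,m}$-admissible, so $MT_{n,m}\le MT^*_{n,m}$. The subadditivity comes from splitting a tall strip into a top and a bottom sub-strip along a horizontal line: the one-line margins act as a buffer, so every middle-line cell of a sub-strip keeps all its neighbours inside that sub-strip and stays totally dominated; moreover privateness is preserved, since restricting to a sub-strip can only \emph{delete} dominant neighbours of a private neighbour, never create them, and the private neighbour of a middle-line dominant cell already lies inside the sub-strip. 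Iterating over $k$ copies gives $MT^*_{n,mk}\le (MT^*_{n,m})^k$, whence $\log_2\nu_{MT}\le h(X_*^{MT,m})/m$.

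The main obstacle is the lower bound, because minimality is not monotone and naive vertical stacking of two minimal total dominating sets can fail: a cell on the contact line may acquire a second dominant neighbour from the adjacent strip, destroying the unique private neighbour of some dominant cell and thereby breaking minimality, so the stacking map need not land in the set of minimal total dominating sets. I would circumvent this by counting only a decoupled subfamily, for instance the minimal total dominating sets of $\mathbb{Z}\times\llbracket 1,m\rrbracket$ whose extremal lines $\mathbb{Z}\times\{1\}$ and $\mathbb{Z}\times\{m\}$ carry no dominant cell (these lines being totally dominated from the interior). Stacking two such configurations is safe: every new adjacency joins two non-dominant cells, so it alters no cell's set of dominant neighbours, and the stacked set is again minimal total dominating with non-dominant extremal lines. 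This yields genuine superadditivity for the corresponding count $MT^\flat$, hence $\log_2\nu_{MT}\ge h(X^{MT,m}_\flat)/m$; pinning down two boundary lines costs only a bounded additive loss, which is harmless once $m$ is moderately large.

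Finally, both entropies are made effective exactly as in Section~\ref{subsection.nearest.neighbour}. I would recode each strip subshift into a nearest-neighbour unidimensional subshift over an alphabet of height-$m$ columns enriched, in the spirit of the map $\varphi$, with enough state to certify by comparison of two consecutive columns both total domination and the existence of a private neighbour for every dominant cell; the local characterisations of Section~\ref{section.local.characterisations} guarantee that a bounded memory suffices, so the recoding stays finite and nearest-neighbour. By the spectral-radius proposition, each entropy is then the base-$2$ logarithm of the Perron eigenvalue of the resulting adjacency matrix, which I would evaluate numerically, e.g.\ by power iteration, at $m=10$. The exponential growth of the alphabet in $m$ is precisely what limits the reachable height to about $10$ for the minimality-constrained problems, and substituting the computed eigenvalues into $2^{h_\flat/10}\le\nu_{MT}\le 2^{h_*/10}$ produces the two announced numerical bounds.
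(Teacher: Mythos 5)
Your proposal is essentially the paper's method: the paper gives no dedicated proof of this theorem, merely asserting that the bounds follow ``using the same method'' as for plain domination, i.e.\ a superadditive stacking argument for the lower bound and a relaxed strip subshift $X_*^{\cdot,m}$ (rules imposed only on the middle $m-2$ lines) plus subadditivity for the upper bound, both made numerical via the nearest-neighbour recoding and the spectral radius of the adjacency matrix. Your upper-bound half reproduces this faithfully, and your verification that restriction to a sub-strip preserves both total domination and privateness on middle lines is exactly the check the paper leaves implicit. Where you genuinely add something is the lower bound: you correctly observe that naive vertical stacking of two minimal total dominating sets can destroy minimality (a contact-line cell may acquire a second dominant neighbour and cease to be the unique private neighbour of some dominant cell), so the superadditivity $MT_{n,m_1+m_2} \ge MT_{n,m_1}\cdot MT_{n,m_2}$ that the paper uses verbatim for $D_{n,m}$ does not transfer. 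Your repair --- counting only configurations whose two extremal lines carry no dominant cell, so that stacking joins only non-dominant cells and alters no cell's set of dominant neighbours --- is correct and closes a gap the paper silently skips over. Two caveats: your phrase ``bounded additive loss'' is not the right justification (restricting the extremal lines changes the strip subshift and hence its entropy; what actually matters, and what you do have, is simply $MT^{\flat}_{n,m} \le MT_{n,m}$, so the restricted entropy is still a valid lower bound, merely possibly a weaker one); and consequently the number your computation would output is $h(X^{MT,m}_{\flat})/m$ for the restricted subshift, which need not coincide with the published constant $1.275805204$ unless the authors' program used the same (or an equivalent) decoupling --- the paper does not say which subshift its lower-bound computation actually enumerates, so agreement with the stated decimal cannot be checked from the text.
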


\end{document}